\documentclass[11pt]{article}
\usepackage{fullpage}

\usepackage{hyperref}
\usepackage[utf8]{inputenc} %
\usepackage[T1]{fontenc}    %
\usepackage{url}            %
\usepackage{mathtools}
\usepackage{booktabs}       %
\usepackage{amsfonts}       %
\usepackage{nicefrac}       %
\usepackage{microtype}      %
\usepackage{xcolor}         %
\usepackage{MnSymbol}
\usepackage{xspace}

\usepackage{tikz}
\usepackage{pgfplots}
\usetikzlibrary{calc}
\pgfplotsset{compat=1.18}

\usepackage{pict2e,picture,graphicx}
\usepackage[overload]{empheq}
\usepackage{microtype}
\usepackage{graphicx}
\usepackage{subcaption}
\usepackage{booktabs}
\usepackage{amsmath}
\usepackage{cases}
\usepackage{mathtools}
\usepackage{amsthm}
\usepackage{thm-restate}
\usepackage{enumitem}

\allowdisplaybreaks
\usepackage{xparse}
\usepackage[capitalize, noabbrev]{cleveref}
\usepackage{wrapfig}
\usepackage{bbm}
\usepackage{yfonts}
\usepackage{nicefrac} 
\usepackage{multirow}
\usepackage{bm}
\usepackage{bbm}
\usepackage{subcaption}

\hypersetup{colorlinks,
    colorlinks = true,
    linkcolor=DarkBlue,
    citecolor=green!60!black,
    urlcolor=green!70!black,
    filecolor=DarkBlue
    linktocpage = true,
}
\usepackage{comment}

\newcommand{\E}{\mathop{\mathbb{E}}}

\colorlet{darkgreen}{green!70!black}

\newcommand{\CFont}[1]{{\textup{\textsf{#1}}}\xspace}

\newcommand{\NP}{\CFont{NP}}

\newcommand{\AM}{\CFont{AM}}
\newcommand{\FG}{\CFont{FreeGame}}

\newcommand{\OPT}{\textnormal{OPT}}
\newcommand{\Sat}{\CFont{Satisfiability}}

\newcommand{\cD}{\mathcal{D}}

\newcommand{\poly}{\textsf{\textup{poly}}}
\newcommand{\polylog}{\textnormal{polylog}}

\newcommand{\Naturals}{\mathbb{N}}

\usepackage{color}
\definecolor{mygreen}{rgb}{0.0, 0.5, 0.0}
\definecolor{myorange}{rgb}{0.55, 0.62, 1}
\newcommand{\nb}[3]{}

\theoremstyle{plain}

\makeatletter
\let\cref@old@stepcounter\stepcounter
\def\stepcounter#1{%
  \cref@old@stepcounter{#1}%
  \cref@constructprefix{#1}{\cref@result}%
  \@ifundefined{cref@#1@alias}%
    {\def\@tempa{#1}}%
    {\def\@tempa{\csname cref@#1@alias\endcsname}}%
  \protected@edef\cref@currentlabel{%
    [\@tempa][\arabic{#1}][\cref@result]%
    \csname p@#1\endcsname\csname the#1\endcsname}}
\makeatother

\theoremstyle{plain}
\newtheorem{theorem}{Theorem}[section]

\newtheorem{claim}[theorem]{Claim}
\newtheorem{lemma}[theorem]{Lemma}

\newtheorem*{theorem-non}{Theorem}

\theoremstyle{definition}
\newtheorem{definition}[theorem]{Definition}

\theoremstyle{remark}

	\definecolor{niceRed}{HTML}{BE2626}
	\definecolor{Red2}{HTML}{DB3236}
	\definecolor{mgreen}{HTML}{A0C88C}
	\definecolor{blueGrotto}{HTML}{059DC0}
	\definecolor{limeGreen}{HTML}{81B622}
	\definecolor{myellow}{HTML}{E09B23}
	\definecolor{darkGreen}{HTML}{2E8B57}
	\definecolor{navyBlueP}{HTML}{03468F}
	\definecolor{Sepia}{HTML}{7F462C}
	\definecolor{orange2}{HTML}{FF8000}
	\definecolor{mgray}{HTML}{ABB3B8}
	\definecolor{lgray}{HTML}{E5E8E9}
	\definecolor{myPurple}{HTML}{AF007C}
	\definecolor{mypurple2}{HTML}{CC9EFF}
	\definecolor{royalBlue}{HTML}{057DCD}
	\definecolor{mpink}{HTML}{FC6C85}
	\definecolor{lblue}{HTML}{4A90E2}
	\definecolor{peagreen}{HTML}{98C127}
	\definecolor{typnavy}{HTML}{001F3F}
	\definecolor{typblue}{HTML}{0074D9}
	\definecolor{typaqua}{HTML}{7FDBFF}
	\definecolor{typteal}{HTML}{39CCCC}
	\definecolor{typeastern}{HTML}{239DAD}
	\definecolor{typpurple}{HTML}{B10DC9}
	\definecolor{typfuchsia}{HTML}{F012BE}
	\definecolor{typmaroon}{HTML}{85144B}
	\definecolor{typred}{HTML}{FF4136}
	\definecolor{typorange}{HTML}{FF851B}
	\definecolor{typyellow}{HTML}{FFDC00}
	\definecolor{typolive}{HTML}{3D9970}
	\definecolor{typgreen}{HTML}{2ECC40}
	\definecolor{typlime}{HTML}{01FF70}
	\definecolor{newgreen}{HTML}{83C702}
	\definecolor{mathchaYellow}{HTML}{F8E71C}
	\definecolor{mathchaGreen}{HTML}{7ED321}
	\definecolor{mathchaPurple}{HTML}{BD10E0}
	\definecolor{mathchaBlue}{HTML}{58B1FF}
	\definecolor{mathchaRed}{HTML}{FF5B59}
	\definecolor{metablue}{HTML}{0064E0}
    \definecolor[named]{Purple}{cmyk}{0.55,1,0,0.15}
    \definecolor[named]{DarkBlue}{cmyk}{1,0.58,0,0.21}

\usepackage[style=alphabetic,natbib=true,maxcitenames=2, maxbibnames=10,backend=bibtex]{biblatex}
\allowdisplaybreaks

\title{A Subsampling Theorem for Constraint Satisfaction Problems with Large Arity} %
\author{
 Martino Bernasconi \\
 Bocconi University\\
 {\textcolor{black}{\small\texttt{martino.bernasconi@unibocconi.it}}}
 \and 
 Matteo Castiglioni  \\
 Politecnico di Milano\\
 {\textcolor{black}
 {\small\texttt{matteo.castiglioni@polimi.it}}}
 \and 
 Andrea Celli  \\
 Bocconi University\\
 {\textcolor{black}
 {\small\texttt{andrea.celli2@unibocconi.it}}}
 \and 
 Gabriele Farina  \\
 MIT\\
 {\textcolor{black}
 {\small\texttt{gfarina@mit.edu}}}
 \and 
 Giulio Malavolta  \\
 Bocconi University\\
 {\textcolor{black}
 {\small\texttt{giulio.malavolta@unibocconi.it}}}
}
\newcommand{\val}{\textnormal{val}}
\newcommand{\bal}{\textnormal{bal}}
\newcommand{\cM}{\mathcal{M}}

\newcommand{\cW}{\mathcal{W}}
\newcommand{\ccS}{\mathcal{S}}

\date{}

\begin{document}

\maketitle
\begingroup
\renewcommand{\thefootnote}{}
\footnotetext{\hspace{-6mm}\textbf{Acknowledgments.} Martino Bernasconi and Andrea Celli were supported by an ERC grant (Project 101165466 — PLA-STEER). Matteo Castiglioni was supported by the EU Horizon project ELIAS (European Lighthouse of AI for Sustainability, No. 101120237). Gabriele Farina was supported in part by the National Science Foundation award CCF-2443068, the Office of Naval Research grant N000142512296, and an AI2050 Early Career Fellowship. Giulio Malavolta is supported by the European Research Council through an ERC Starting Grant (Grant agreement No.~101077455, ObfusQation) and by the Deutsche Forschungsgemeinschaft (DFG, German Research Foundation) under Germany's Excellence Strategy - EXC 2092 CASA – 390781972.

The authors thank Scott Aaronson, David Steurer, and Gal Arnon for helpful comments and discussions.}
\addtocounter{footnote}{-0}
\endgroup

\begin{abstract}
    Subsampling theorems for constraint satisfaction problems (CSPs) guarantee that the value of the CSP is approximately preserved after restricting it to small random subsets of variables.
    We provide the first subsampling theorem for CSPs, which requires a sample size that is polynomial in the arity $k$ and error $\varepsilon$, and polylogarithmic in the alphabet size $q$. This improves upon the subsampling theorem of Barak, Hardt, Holenstein, and Steurer~(SODA '11), which achieves a polynomial dependency on $\varepsilon$ and polylogarithmic in $q$ only in the constant-arity regime.
    Our subsampling theorem has applications in interactive proofs and property testing. In interactive proofs, it provides a key missing ingredient for the proof of Aaronson,  Impagliazzo, and Moshkovitz~(CCC '14) that $\textsf{AM}(\textsf{poly})=\textsf{AM}$ (where $\textsf{AM}(k)$ is the class of languages decidable by Arthur-Merlin protocols with $k$ non-communicating Merlins with independent questions). In property testing, it yields the first one-sided tester for satisfiability with sample size polynomial in the arity $k$ and the error $\varepsilon^{-1}$, and polylogarithmic in the alphabet size $q$.
\end{abstract}

\pagenumbering{gobble} 

\newpage
\tableofcontents
\newpage
\pagenumbering{arabic}

\section{Introduction}

Constraint satisfaction problems (CSPs) occupy a central place in approximation algorithms, property testing, and interactive proof complexity. In a $k$-CSP, one is given predicates on $k$-tuples of variables over an alphabet $[q]$, and the goal is to find an assignment
that maximizes the total value of the constraints. 
When the instance is \emph{dense}---that is, when a constant fraction of all $k$-tuples carry constraints---it exhibits algorithmic phenomena that are absent in sparse regimes. In particular,
dense CSPs admit approximation schemes and sublinear-time algorithms that are believed to be impossible in sparse regimes.

There has been a longstanding effort towards approximating dense CSPs. The systematic study of dense instances of \NP-hard problems was initiated by
\citet*{arora1995polynomial}, who gave a $O(2^{\poly(1/\varepsilon,2^k)})$ polynomial-time approximation scheme for Boolean CSPs (i.e., with $q=2$). 
Subsequent work developed a variety of techniques for this regime, including exhaustive sampling \citep{arora1995polynomial,de2005tensor,mathieu2008yet,fotakis2015sub},
semidefinite programming hierarchies \citep{raghavendra2008optimal,guruswami2011lasserre,goemans1995improved,khot2007optimal,de2007linear,yoshida2014approximation}, and graph-regularity lemmas \citep{arora1995polynomial, frieze1996regularity, de2005tensor}.

An especially powerful and conceptually simple approach is \emph{subsampling}: randomly sampling a small subset of variables induces a subinstance whose optimum is, in expectation, close to that of the original instance.
This phenomenon was first observed in graph problems by
\citet*{goldreich1998property}, and later extended to CSPs by \citet*{alon2002random}, who showed that for every fixed arity $k$, the optimum of a dense Boolean $k$-CSP can be approximated from a random subinstance whose size depends only on the desired accuracy parameter $\varepsilon$.
\citet*{barak2011subsampling} extended these results to non-Boolean CSPs and more general notions of density, showing that, for constant $k$, a random subset of size $\poly( \varepsilon^{-1}\log q)$ suffices to obtain an $\varepsilon$-additive approximation.\footnote{The dependence on $k$ is not explicit in the theorem statement, but it becomes apparent in the proof that the dependence is exponential in $k$. This was also confirmed by David Steurer in private correspondence.}

Despite this progress, subsampling theory remains largely confined to the fixed-arity regime. To date, the strongest known subsampling theorems require a sample size that grows exponentially in $k$. Obtaining subsampling guarantees with polynomial dependence on $k$ remains an open problem---both as a natural question in its own right and essential for the applications discussed next.

Our main result is the first subsampling theorem with polynomial dependence on the arity $k$ and polylogarithmic dependence on the alphabet size $q$.

\begin{theorem-non}[Informal version of \Cref{thm:subsampling}]
Consider a dense $k$-CSP over alphabet $[q]$ with optimal value $\mathrm{OPT}$. For any $\varepsilon>0$, let $U$ be a uniformly
random subset of variables of size
\[
N= \tilde\Theta\left(\frac{k^{12}}{\varepsilon^6}\log^2\left( q\right)\right),
\]
and let $\OPT(U)$ be the optimal value of the subinstance restricted to $U$. Then
\[
|\mathbb{E}_U\left[\OPT(U)\right]-\OPT| \le \varepsilon.
\] 
\end{theorem-non}
This theorem has several immediate applications. Here, we focus on its implications for Arthur-Merlin games and property testing. 
In particular, it supplies a key ingredient in the argument of \citet*{aaronson2014multiple} establishing that $\AM(\poly)=\AM$.
While \citet{aaronson2014multiple} cite \citet{barak2011subsampling} for such a result, the latter does not provide the correct dependence on $k$ (see \Cref{sec:AMintro} for details). Our stronger subsampling theorem therefore provides the missing ingredient needed by \citet{aaronson2014multiple} to complete the proof of $\AM(\poly)=\AM$.
As an application to property testing \citep{goldreich2017introduction}, we obtain the first $\poly(k/\varepsilon)\polylog(q)$ one-sided error tester for satisfiability problems with arity $k$, and alphabet size $q$ (see \Cref{sec:ptintro} for details).

\subsection{Arthur-Merlin Games}\label{sec:AMintro}

\citet{aaronson2014multiple} introduced the class $\AM(k)$ as the set of languages decided by $k$ non-communicating provers (Merlins) interacting with a polynomial-time verifier (Arthur), under the crucial restriction that the verifier samples one question for each prover independently and uniformly.

\begin{definition}[$\AM(k)$]
    For an integer $k\in\mathbb{N}$, $\AM(k)$ is the class of languages $L\subseteq\{0,1\}^*$ for which there exist a polynomial-time verifier $V$
    such that for all $n$, there exist finite question sets $(Y_i)_{i\in[k]}$, and finite answer sets $(B_i)_{i\in[k]}$, all of size at most $2^{\poly(n)}$, %
    such that for all $x\in\{0,1\}^n$:
    \begin{itemize}[leftmargin=0.4cm]
    \item if $x\in L$, there exists $(b_i:Y_i\to B_i)_{i\in[k]}$ such that $\mathbb{P}_{y_i\sim Y_i}[V(x,y_1,\ldots,y_k,b_1(y_1),\ldots, b_k(y_k))]\ge 2/3$,
    \item if $x\notin L$, then for every $(b_i:Y_i\to B_i)_{i\in[k]}$ we have $\mathbb{P}_{y_i\sim Y_i}[V(x,y_1,\ldots,y_k,b_1(y_1),\ldots, b_k(y_k))]\le 1/3$.
    \end{itemize}
\end{definition}

Related to the class $\AM(k)$, we define the following optimization problem faced by the $k$ Merlins, which we call \FG.

\begin{definition}[\FG]\label{def:fg}
    A \FG instance $G=((Y_i)_{i\in[k]},(B_i)_{i\in[k]},V)$ is defined by finite question sets $(Y_i)_{i\in[k]}$, finite answer sets $(B_i)_{i\in[k]}$, and a function $V:Y_1\times\cdots\times Y_k\times B_1\times\cdots\times B_k\to[0,1]$. The value of the game is
    \[
    \omega(G):=\max_{(b_i:Y_i\to B_i)_{i\in[k]}} \mathbb{E}_{y_i\sim Y_i}[V(y_1,\ldots, y_k,b_1(y_1),\ldots, b_k(y_k))].
    \]
\end{definition}

Allowing the number of Merlins to scale polynomially in the size of the input yields the class $\AM(\poly)=\bigcup_{c\in\mathbb{N}}\AM(n^c)$.

One of the main results of \citet{aaronson2014multiple} is a reduction from $\mathsf{AM}(k)$ to $\mathsf{AM}$ based on subsampling.
The central insight of \citet{aaronson2014multiple} is that \FG can be reduced to dense CSPs: the questions correspond to variables, and the provers' strategies define assignments. Under this correspondence, the value $\omega(G)$ coincides with the optimum of the resulting $k$-CSP.

Applying subsampling to the resulting CSP reduces the total number of constraints and allows one to simulate $k$ Merlins with a single prover as follows: Arthur sends all subsampled questions to the single Merlin, who explicitly returns the strategy of each of the $k$ Merlins on those questions. Arthur can then locally estimate the expected value on the subsampled constraints.

For Arthur to run in polynomial time, the number of sampled variables must itself remain polynomial. When $k$ grows, i.e., $k=\poly(n)$, this requires a subsampling theorem with a polynomial dependence in the arity $k$. 
\citet{aaronson2014multiple} cite \citet{barak2011subsampling} for such a theorem, but a closer inspection of the proof reveals that \citet{barak2011subsampling} treats $k$ as a constant, and in fact the number of samples grows exponentially with $k$. %

Therefore, our subsampling theorem, whose sample complexity is polynomial in the arity $k$ and in the inverse approximation error $\varepsilon^{-1}$, and polylogarithmic in the alphabet $q$, supplies the missing ingredient needed to complete the proof of \citet{aaronson2014multiple}. This confirms that $\mathsf{AM}(\poly) = \mathsf{AM}$.
\begin{theorem-non}[Corollary of \Cref{thm:subsampling-k-free-games}]
    $\AM(\poly)=\AM$.
\end{theorem-non}

\subsection{Property Testing}\label{sec:ptintro}

Property testing \citep{rubinfeld1996robust, goldreich1998property, goldreich2017introduction} studies the problem of designing randomized algorithms that, given query access to an object (e.g., a graph), distinguish between the case where the object satisfies a given property and the case where it is far from satisfying it, using only a small number of sampled variables.

A standard problem in this area is testing the satisfiability of a CSP with Boolean-valued constraints.

\begin{definition}[$(k,q)$-\Sat]
A Boolean-valued $k$-CSP on $n$ variables over alphabet $[q]$:
\begin{itemize}
\item is \emph{satisfiable} if there exists an assignment that satisfies all constraints;
\item is \emph{$\varepsilon$-far} from satisfiable if at least $\varepsilon n^k$ constraints must be removed to make it satisfiable.
\end{itemize}
\end{definition}

The one-sided satisfiability testing problem requires designing a randomized testing algorithm that:
(i) always accepts if the CSP is satisfiable, and
(ii) rejects instances that are $\varepsilon$-far from satisfiable with probability at least $2/3$. 
On the other hand, a two-sided tester has probabilistic guarantees on both sides.

In property testing, algorithms are evaluated by their sample complexity, which, in the context of CSPs, is the number of variables sampled. Property testing for CSPs is closely related to subsampling theorems, and our subsampling theorem can be easily translated into a tester with the following sample complexity.

\begin{theorem-non}[Informal version of \Cref{th:proptestingSAT}]
    There exists a one-sided $\varepsilon$-tester for $(k,q)$-\Sat with sample complexity $\tilde O({k^{12}}\log^2 \left( q\right)/{\varepsilon^{7}})$.
\end{theorem-non}

The study of $(k,q)$-\Sat was initiated by \citet{alon2003testing}, who provide a tester with sample complexity $2^{O(q^{2k})}/\varepsilon^2$. This bound was later improved to $O(q^{3k}/\varepsilon)$ by \citet{sohler2012almost}.
While it is known that any tester requires at least $\Omega(1/\varepsilon)$ samples \citep{alon2002testing}, to the best of our knowledge, no lower bounds are known in terms of the alphabet size $q$. 

Recently, \citet{blais2024new} provided a one-sided tester with sample complexity $\tilde O(qk^3/\varepsilon)$, based on the hypergraph container method introduced to property testing by \citet{blais2025testing}. This was the first bound polynomial in $1/\varepsilon,k$ and $q$. 
Our result provides an exponential improvement in the dependency on the alphabet size $q$, while maintaining a polynomial dependency on both $1/\varepsilon$ and $k$. Our result does make progress on the question of whether a $\poly(\log q)\tilde O(k/\varepsilon)$ tester exists for $(k,q)$-\Sat \citep{blais2024new,gishboliner2025polynomial}. However, even if our analysis never tried to optimize constants, obtaining linear dependence on $k$ and $1/\varepsilon$ from our techniques seems difficult, due to our nested concentration technique (see \Cref{sec:technicaloverview}). 

We also remark that, for the case of fixed $k$, an upper bound polylogarithmic in $q$ was implicitly provided by the $\poly(\varepsilon^{-1}\log q)$ subsampling theorem of \citet{barak2011subsampling}.  
Indeed, from the weaker subsampling theorem of \citet{barak2011subsampling}, one could obtain a tester for $(O(1),q)$-\Sat with sample complexity $\poly(1/\varepsilon,\log q)$ when $k=O(1)$.

\subsection{Technical Overview}\label{sec:technicaloverview}

The proof of our subsampling theorem follows the double-sampling idea of \citet*{goldreich1998property} and the high-level blueprint of \citet*{barak2011subsampling}. The argument is divided into two main steps.

\begin{enumerate}[label=(\roman*)]
\item \textbf{Concentration lemma:} Given a CSP and a fixed assignment $\phi$, consider a random subset of variables $U$. The concentration lemma shows that, with high probability, the value of $\phi$ on the induced subinstance $U$ is close to its value on the original instance. 

\item \textbf{Structure lemma:} The goal of the structure lemma is to show that, for a typical subsample $U$, the optimum on the subinstance is well approximated by the best assignment in a small fixed family $\Psi$. The key challenge is that $\Psi$ must be constructed without knowledge of $U$ and yet it must cover the optimum for all typical $U$ simultaneously. Crucially, the existence of a family $\Psi$ is shown by a probabilistic argument that relies on the concentration lemma above.
\end{enumerate}

Our main innovation is a sharper concentration analysis (\Cref{lem:concentration}) with only polynomial dependence on the arity $k$. This tool is used twice in the analysis: once in combination with the structure lemma to obtain the final subsampling theorem, and once \emph{within} the proof of the structure lemma itself. Thus, although our structure lemma follows the double-sampling architecture introduced by \citet{goldreich1998property} as extended by \citet{barak2011subsampling}, its analysis requires several refinements in order to operate in the growing-arity regime. We discuss the required improvements below.

\newcommand{\xparagraph}[1]{\vspace{3mm}\noindent\textbf{#1}~}

\xparagraph{Improved Concentration Analysis.}
We begin discussing our concentration result. Our analysis is based on the algebraic representation of the CSP objective as a multilinear polynomial over the simplex. This differs from the concentration analysis of \citet{barak2011subsampling}, which instead follows an iterative pruning argument to control variable influences, which introduces error compounding across the arity $k$. In contrast, our algebrization is not iterative, helping to avoid such overhead. 
More specifically, we represent an assignment with a \emph{single} $nq$-simplex in which a multilinear degree-$k$ polynomial $f$ represents the values of the assignment. Then, we sample from the joint categorical distribution on this space (inducing an empirical distribution $Z$), and show that the mean of the sampled values concentrates around the assigned value. 
The variables sampled induce the random subinstance $U$.
Due to the non-linearity of the CSP algebrization, two errors need to be kept under control: a concentration error $|f(Z)-\mathbb{E}[f(Z)]|$, and a bias term $|\mathbb{E}[f(Z)]-f(\mathbb{E}[Z])|$. The concentration error is controlled via McDiarmid's inequality, while the bias is controlled via upper bounds on the factorial moments of multinomial distributions. Crucially, both the concentration error and the bias have only polynomial dependence on $k$. This contrasts with the iterative pruning analysis of \citet{barak2011subsampling}, in which the dependence on the arity compounds across the $k$ stages of the argument. 

\xparagraph{A Refined Structure Lemma.}
As mentioned above, the improved concentration bound enters the overall proof in two places. First, for any fixed family of assignments $\Psi$, we use the concentration result to guarantee that the optimum over $\Psi$ on the subsampled instance is close to the optimum over $\Psi$ on the full instance; this follows by a union bound over $\Psi$.
Second, and more subtly, we use the concentration bound within the proof of the structure lemma (\Cref{lm:structure}) itself. %
As in the double-sampling framework of \citet{goldreich1998property}, the structure lemma constructs a small family $\Psi$ from assignments to a collection of seed variables. Once a subsample $U$ is fixed, a guessed assignment to the seeds is extended to a full assignment through an iterative best-response procedure. Computing the best response of a variable $i$ to a label $a$ amounts to evaluating a derived $(k-1)$-CSP, and the seed is used as a proxy for the much larger set of variables outside the block containing $i$. Our concentration lemma guarantees that these proxy values accurately estimate the corresponding full values.

Obtaining a seed of size $\poly(k/\varepsilon)$ is the key difficulty of this proof, and does not amount to substituting the new concentration bound into the analysis of \citet{barak2011subsampling}.
Crucially, maintaining the $\poly(k)$ dependency across the entire argument requires substantial re-engineering of the original arguments. This analysis shows that a seed of size $\poly(k/\varepsilon,\log q)$ suffices, from which we can generate a family of size $|\Psi|=q^{\poly(k/\varepsilon)}$, small enough to allow the final union bound. %
We refer to \Cref{sec:structure} for an overview of the proof.

\section{Constraint Satisfaction Problems}

We first introduce some basic mathematical notation. We let $[n]=\{1,\ldots, n\}$ for any $n\in \Naturals$, and for any $k\in\Naturals$, we define the falling factorial $n^{\underline{k}}=n\cdot(n-1)\cdots(n-k+1)$. For any $n\in \Naturals$, $\Delta_n$ is the set of discrete distributions over $[n]$.

Then, we formally define CSPs and the related problem Max-$k$-CSP.

\begin{definition}[CSP]
    A \emph{$k$-constraint satisfaction problem} ($k$-CSP) instance on $n$ variables with alphabet size $q$ is specified by a collection of constraints $P_w$. Here, $w$ is a $k$-tuple of distinct variable indices (we denote by $\cW$ the set of all such $k$-tuples), and $P_w: [q]^k \to [0,1]$ is a \emph{bounded predicate}.
    An assignment is a function $\phi \in [q]^n$. The value of an assignment is defined as:
    \[
    \val(\phi) = \frac{1}{n^k} \sum_{w \in \cW} P_w(\phi|_w),
    \]
    where $\phi|_w$ denotes the assignment restricted to the variables indexed by $w$.
\end{definition}

Notice that we normalize by $n^k$ even though $|\cW| = n^{\underline{k}}$. As a result, the true maximum value of a CSP is $n^{\underline{k}}/n^k$, which approaches $1$ for $k = o(n)$. We adopt this convention because normalizing by $n^k$, while maintaining that constraints are defined over $k$ distinct variables, greatly simplifies the proofs.

The Max-$k$-CSP problem asks to compute $\OPT = \max_{\phi \in [q]^n} \val(\phi) \in [0,1]$. Since we consider a normalized objective, our focus will be on finding $\varepsilon$-additive approximations of $\OPT$.

We also introduce notation for a CSP restricted to a subset of variables $U$. Given a subset of variables $U \subseteq [n]$, let $\cW(U) = \cW \cap U^k$ be the set of constraints supported entirely on $U$. This allows us to define the values and optimum restricted to $U$:

\[
\val_U(\phi)=\frac{1}{|U|^{{k}}}\sum_{w\in \cW(U)} P_w(\phi|_w),\quad
\textnormal{and} 
\quad
\OPT(U)=\max_{\phi:[q]^n}\val_U(\phi|_U).
\]

Moreover, we will need to combine assignments. Given an assignment $\phi \in [q]^{U_1}$ and an assignment $\psi \in [q]^{U_2}$ over disjoint sets of variables ($U_1 \cap U_2 = \emptyset$), we denote their joint assignment over $U_1 \cup U_2$ as $\phi \otimes \psi$. We also use $(i \rightarrow a)$ to denote the partial assignment which assigns $a\in [q]$ to variable $i \in [n]$.

\subsubsection*{Subsampling CSPs}

Subsampling theorems for CSPs state that a sufficiently large, uniformly random subset of variables $U\subseteq [n]$ approximates the optimal value of the CSP. More precisely,
\[
\big|\mathbb{E}_U[\OPT(U)]- \OPT\big|\le\varepsilon,
\]
where $N=|U|$, which usually is a function of just $\varepsilon,k$ and $q$, determines the accuracy of the approximation.

\section{Subsampling Theorem}

In this section, we present the statement and the high-level structure of our subsampling theorem.
Following the blueprint of \citet{barak2011subsampling}, our proof is split into two primary components: 1) a \emph{concentration lemma} for fixed assignments, and 2) a \emph{structure lemma} which guarantees the existence of a small set of assignments that approximately capture the optimum of a random subsampled instance.

The concentration lemma shows that for any fixed assignment $\phi$, a small random subset of variables $U$ provides an accurate estimate of $\val(\phi)$ by computing the average value of the constraints over variables in $U$. This is where we make a significant quantitative improvement over previous proofs. While prior work often yielded bounds on the sample size that were exponential in the arity (e.g., $O(\poly(\varepsilon^{-k}))$), we prove that a subset of size $O(\text{poly}(k/\varepsilon)\log(1/\delta))$ suffices to obtain an $O(\varepsilon)$-approximation with probability $1-\delta$.
Formally, we show that:

\begin{lemma}[Concentration Lemma]\label{lem:concentration}
    Let $\phi\in [q]^n$ be an assignment.
    Sample $N$ variables uniformly with replacement from $[n]$, and let $U$ be the resulting set of unique variables.
    For any $\varepsilon>0$, with probability at least $1-2 \exp(-\frac{\varepsilon^2 N}{2k^2})$, it holds:
    \[
    \left|\val_U(\phi)-\val(\phi)\right| \le \varepsilon+\frac{k^2}{N} + \mathbb{I}[|U|\neq N].
    \]
\end{lemma}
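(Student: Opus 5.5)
Let $x_1,\dots,x_N$ be the i.i.d.\ uniform samples from $[n]$ and $U=\{x_1,\dots,x_N\}$. The plan is to compare $\val_U(\phi)$ with the U-statistic
\[
F(x_1,\dots,x_N)=\frac{1}{N^{k}}\sum_{(j_1,\dots,j_k)} P_{(x_{j_1},\dots,x_{j_k})}\bigl(\phi|_{(x_{j_1},\dots,x_{j_k})}\bigr),
\]
where the sum ranges over $k$-tuples of distinct positions in $[N]$, and terms whose variables are not distinct are set to $0$ (they are not in $\cW$). On the event $|U|=N$ all $x_j$ are distinct. The map $(j_1,\dots,j_k)\mapsto(x_{j_1},\dots,x_{j_k})$ is then a bijection onto $k$-tuples of distinct elements of $U$, so $F=\val_U(\phi)$ exactly, since the normalization is $|U|^k=N^k$. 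On the complement, both quantities lie in $[0,1]$, so the discrepancy is at most $1=\mathbb{I}[|U|\neq N]$. It therefore suffices to show that $|F-\val(\phi)|\le \varepsilon+k^2/N$ with the stated probability.

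\textbf{Bias.} Fix distinct positions $j_1,\dots,j_k$. The variables $x_{j_1},\dots,x_{j_k}$ are i.i.d.\ uniform on $[n]$, so the expected term is $\frac{1}{n^k}\sum_{w\in\cW}P_w(\phi|_w)=\val(\phi)$. Tuples with repeated variables contribute $0$, consistent with $\val$ being normalized by $n^k$. There are $N^{\underline{k}}$ position tuples, so
\[
\mathbb{E}[F]=\frac{N^{\underline k}}{N^k}\val(\phi).
\]
This gives $|\mathbb{E}F-\val(\phi)|\le 1-N^{\underline k}/N^k\le \sum_{i<k} i/N\le k^2/(2N)\le k^2/N$, using $\prod(1-i/N)\ge 1-\sum i/N$. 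This is the factorial-moment bias term and is routine.

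\textbf{Concentration.} Apply McDiarmid's inequality to $F$ as a function of the independent $x_1,\dots,x_N$. Changing one $x_j$ affects only the position tuples that contain $j$. There are at most $kN^{k-1}$ such tuples, and each term lies in $[0,1]$, so $F$ changes by at most $c_j=k/N$. Hence $\sum_j c_j^2=k^2/N$, and
\[
\Pr\bigl[|F-\mathbb{E}F|\ge\varepsilon\bigr]\le 2\exp\!\Bigl(-\frac{2\varepsilon^2}{k^2/N}\Bigr)\le 2\exp\!\Bigl(-\frac{\varepsilon^2N}{2k^2}\Bigr).
\]
Combining this with the bias bound and the comparison on $\{|U|=N\}$ gives the lemma.

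The main point of care is the first step: defining $F$ so that it coincides exactly with $\val_U(\phi)$ when there are no collisions, including the $N^k$ versus $N^{\underline k}$ normalizations and the zeroed repeated-variable terms. Both the bias and the bounded-difference computations are then straightforward.
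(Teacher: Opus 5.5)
Your proof is correct. Its skeleton is the paper's: McDiarmid for the fluctuation, the exact identity $\mathbb{E}F=\frac{N^{\underline k}}{N^k}\val(\phi)$ for the bias, and identification with $\val_U(\phi)$ on the collision-free event. What differs is the representation, and yours is more elementary.

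The paper embeds $\phi$ as a point $x(\phi)$ of the $nq$-simplex, views the sample as a multinomial empirical measure $Z$, and studies the multilinear polynomial $f(Z)$. It gets the bias from the factorial-moment identity for multinomial vectors (\Cref{lem:mosimann}), and the bounded-difference constant from multilinearity together with $\sum_{i,a}Z_{i,a}=1$.

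Your $F$ is in fact the same random variable as $f(Z)$. Expanding $\prod_j N_{i_j,\phi_{i_j}}$ counts exactly the position tuples that map onto $w$, and these automatically have distinct positions. You treat it directly as a U-statistic in the i.i.d.\ positions instead. The bias then needs nothing more than independence of distinct positions. The bounded difference needs only a plain count of the $k(N-1)^{\underline{k-1}}\le kN^{k-1}$ tuples through a given position.

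This also gives a better constant. The paper views a resampling as two coordinate moves of size $1/N$ and pays $2k/N$, which yields exponent $\varepsilon^2N/(2k^2)$. You note that each affected term moves within $[0,1]$, so $c_j=k/N$ and the tail is $2\exp(-2\varepsilon^2N/k^2)$. That is four times better in the exponent, and it still implies the stated bound.
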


For technical reasons, we proved concentration for a set $U$ sampled \emph{with} replacement. While we ultimately require the result for sampling \emph{without} replacement, the two settings are essentially equivalent in our regime.
Indeed, we get that $\mathbb{I}[|U|\neq N]=0$ holds with high probability (see \Cref{clm:withToWithout}).
The reason we need this special structure is the following. We could, in principle, do a union bound with the event ``there are no collisions'' (i.e., $|U|\neq N$) when sampling with replacement (this event has probability at least $1-\Omega(N^2/n)$); however, we will need to do union bounds on a large number of events, and the term $O(N^2/n)$ is not small enough to absorb such union bounds.

The second lemma shows the existence of a small set of ``representative'' assignments $\Psi$. Although $|\Psi|$ will be small, it will be sufficiently diversified to capture the optimum on a random subsample $U$. This approach was pioneered by \citet{goldreich1998property} for {Max-CUT} and later generalized by \citet{barak2011subsampling} to CSPs. Here, we refine \citet{barak2011subsampling} ``structure lemma'' along different directions. The more evident one follows from our improved concentration lemma.
Indeed, one crucial step in the structure lemma is to apply the concentration lemma to a derived CSP of arity $k-1$. In particular, we essentially use a ``double'' sampling, introducing an additional parameter $s<N$ which appears in our structure lemma.  Here, we achieve an exponential improvement in the dependence on the arity $k$. We will discuss these aspects in more detail in \Cref{sec:structure}.

\begin{lemma}[Structure Lemma]  \label{lm:structure}
For every $\varepsilon>0$, $N\in [n]$, and $s<N$, there exists a set of assignments $\Psi\subseteq [q]^n$ with $|\Psi|=q^{k^2s/\varepsilon}$ such that for uniform random subset $U$ of size $N$, we have:
\[
\mathbb{E}[\OPT(U)]\le \mathbb{E}\left[\max_{\phi\in\Psi} \val_U(\phi)\right]+\mathcal{E}(\varepsilon,N,s).
\]
where 
\(
\mathcal{E}(\varepsilon,N,s)=O\left( k\varepsilon+\frac{k^3}{s}
    + qN\exp\left(-\frac{\varepsilon^2 s}{2k^2}\right)
    + \frac{k^2s^2}{\varepsilon N}
    + \frac{sk^3}{\varepsilon N} 
    +  \sqrt\frac{k^4}{\varepsilon N}  + \frac{k^2}{\varepsilon} \exp\left(-\frac{\varepsilon^2 N}{k^4}\right)\right).
\)
\end{lemma}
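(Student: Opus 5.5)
We follow the Goldreich--Goldwasser--Ron / Barak et al.\ double-sampling strategy. The family $\Psi$ is built from short seeds: a seed consists of $t = k^2/\varepsilon$ disjoint random sets $S_1,\dots,S_t\subseteq[n]$, each of size $s$, together with a guessed labeling $\sigma\in[q]^{S_1\cup\dots\cup S_t}$. Given a seed, we define a full assignment $\phi_{S,\sigma}\in[q]^n$ deterministically. We partition $[n]$ into $t$ blocks $B_1,\dots,B_t$ by a fixed rule, for instance a uniformly random hash fixed in advance. Processing blocks in order, every variable $i\in B_j$ gets the label $a$ that maximizes a \emph{proxy gain}. The proxy gain is the value of the derived $(k-1)$-CSP obtained by fixing $(i\to a)$. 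Its remaining $k-1$ coordinates are evaluated on the seed set $S_j$, labeled by the current guess. The seeds and the hash are chosen once, independently of $U$, using the probabilistic method. Taking all $q^{ts}=q^{k^2 s/\varepsilon}$ labelings then gives $\Psi$ of the stated size.

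To compare with $\OPT(U)$, we fix $U$ and let $\phi^*_U$ be an optimal assignment on $U$. We run a hybrid argument over blocks. Hybrid $H_j$ agrees with the greedy proxy best responses on $B_1,\dots,B_j$ restricted to $U$, and with $\phi^*_U$ elsewhere. The guess $\sigma$ is taken to be the restriction of the appropriate hybrid to the seeds. This is possible because $\Psi$ contains every labeling of the seeds, and because the seeds are (with high probability) mostly disjoint from $U$ when $s\ll N$. Collisions between seeds and $U$ cost $O(ks\cdot s/N)$ per block, summed over $t$ blocks. This accounts for the $\frac{k^2 s^2}{\varepsilon N}$ and $\frac{s k^3}{\varepsilon N}$ terms.

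Moving from $H_{j-1}$ to $H_j$ changes only the $\approx N/t$ variables of $U\cap B_j$. Consider the constraints touching $B_j$ at least twice. Since each block has density $1/t=\varepsilon/k^2$, these carry mass $O(k^2/t^2)$, which contributes $O(k\varepsilon)$ in total after the sum over blocks. For constraints touching $B_j$ exactly once, the change in value equals the average over $i\in U\cap B_j$ of the true $(k-1)$-CSP gain of $i$'s label against $H_{j-1}$ on $U$. We then apply \Cref{lem:concentration} to this derived $(k-1)$-CSP twice. First, the seed sample $S_j$ approximates the full gain for each pair $(i,a)$. This costs $\varepsilon+k^2/s$ plus failure probability $\exp(-\varepsilon^2 s/2k^2)$, and a union bound over $qN$ pairs produces the terms $qN\exp(-\varepsilon^2 s/(2k^2))$ and $k^3/s$. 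Second, the restriction to $U$ approximates the full gain again. This costs $\sqrt{k^4/(\varepsilon N)}$ and $\frac{k^2}{\varepsilon}\exp(-\varepsilon^2N/k^4)$ after union bounding over the $t$ blocks. Because greedy maximizes the proxy, every step loses at most the sum of these errors. Summing over the steps telescopes, giving $\val_U(\phi^*_U)\le \val_U(H_t)+\mathcal{E}$, and $H_t$ agrees on $U$ with some $\phi\in\Psi$.

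The main obstacle is a circularity. The greedy labels in block $j$ depend on the hybrid, and the hybrid depends on $U$, so \Cref{lem:concentration} cannot be applied naively to a fixed assignment. To break this, I would condition on $U$ outside $B_j$, so that the derived CSP in block $j$ depends only on data independent of $S_j$ and of the fresh randomness of $U\cap B_j$. I would also keep the bookkeeping additive, so that each of the $t=k^2/\varepsilon$ blocks contributes only $\varepsilon/k$-type error. This additivity is what keeps the total loss polynomial in $k$, rather than compounding exponentially as in \citet{barak2011subsampling}.
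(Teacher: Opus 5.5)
There is a genuine gap, and it lies in where you place the seeds. The block structure with $t=k^2/\varepsilon$, the count $q^{ts}$, the greedy proxy best response, the hybrid telescoping, and the union bound over $qN$ pairs all match the paper. But you choose $S_1,\dots,S_t$ independently of $U$, and even rely on their being mostly disjoint from $U$. You then take the guess to be ``the restriction of the appropriate hybrid to the seeds.''

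Your hybrids are built from $\phi^*_U$, which is meaningful only on $U$. Outside $U$ it is an arbitrary extension, so its restriction to $S_j$ carries no information. Nothing then makes the proxy gain of $(i\to a)$ on $S_j$ approximate the true gain against $H_{j-1}$ on $U_{-j}$. Your bridge through a ``full gain'' on $[n]$ also fails. \Cref{lem:concentration} needs an assignment fixed before $U$ is drawn, whereas $H_{j-1}$ is chosen after seeing $U$ and can overfit it. Conditioning on $U$ outside $B_j$ does not help. It freezes exactly the coordinates $U_{-j}$ on which the derived $(k-1)$-CSP lives, leaving no randomness to concentrate. Moreover, $H_{j-1}|_{U_{-j}}$ still depends on $U\cap B_j$ through $\phi^*_U$.

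The missing idea is to sample the seeds \emph{inside} $U$. Draw $U$ with $N/T$ variables per block, then draw $S_t$ uniformly from $U_{-t}$. Now $\phi^{(t-1)}$ is determined by $U$ and $S_1,\dots,S_{t-1}$, so conditionally it is a fixed assignment, and $S_t$ is a fresh subsample of $U_{-t}$. \Cref{lem:concentration} applied to the derived CSP on $U_{-t}$ (never on $[n]$) gives proxy $\approx$ true gain. The guess $\phi^{(t-1)}|_{S_t}$ is then well defined, and your telescoping goes through.

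Averaging then fixes a good $\mathcal{S}^*$, at the price that $U$ is now conditioned on $S^*\subseteq U$. That conditioning is removed by a one-element-swap Lipschitz argument costing $2sTk/N$. A similar coupling moves from perfectly balanced blocks to uniform $U$, costing $k\sqrt{T/N}=\sqrt{k^4/(\varepsilon N)}$. The remaining terms have different sources from those you name:
\begin{itemize}
\item $k^2s^2/(\varepsilon N)$ is the with/without-replacement conversion $2Ts^2/N$ for seeds inside $U_{-t}$;
\item $sk^3/(\varepsilon N)$ is the unconditioning cost above;
\item $\frac{k^2}{\varepsilon}e^{-\varepsilon^2N/k^4}=Te^{-N/T^2}$ bounds the probability that some block of a uniform $U$ is overfull.
\end{itemize}
So the error terms do not come from collisions between seeds and $U$, nor from a second concentration of $U$ around $[n]$.
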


By combining \Cref{lem:concentration} and \Cref{lm:structure}, we derive our subsampling theorem. The core intuition is that the structure lemma allows us to approximate the value of the optimal assignment on $U$ by taking a union bound over the restricted family $\Psi$, whose size is $q^{\tilde{\Theta}(\text{poly}(k/\varepsilon))}$, rather than over the entire assignment space $[q]^n$.

\begin{theorem}[Subsampling] \label{thm:subsampling}
There exists a positive absolute constant $c$ such that letting $N=c\frac{k^{12}}{\varepsilon^6} \log^2(qk^2/\varepsilon)$, $U$ be a set of variables of size $N$ sampled uniformly without replacement from $[n]$, and assuming $n\ge  c^2 \frac{k^{24}}{\varepsilon^{13}}\log^4(qk^2/\varepsilon)$, it holds
        \[
        \big|\mathbb{E} [\OPT(U)]-\OPT\big|\le O(\varepsilon).
        \]
\end{theorem}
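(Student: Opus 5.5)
We prove the two inequalities $\mathbb{E}[\OPT(U)]\ge \OPT-O(\varepsilon)$ and $\mathbb{E}[\OPT(U)]\le \OPT+O(\varepsilon)$ separately. Both rest on \Cref{lem:concentration}, which is stated for sampling with replacement, so we first transfer it to the uniform $N$-subset $U$. Sample $N$ indices with replacement and condition on the event that there are no collisions, which has probability at least $1-N^2/n$. Conditioned on this event, the sampled set is exactly a uniform $N$-subset and the indicator $\mathbb{I}[|U|\neq N]$ vanishes. Hence, for a fixed $\phi$,
\[
\Pr_U\Big[|\val_U(\phi)-\val(\phi)|>\varepsilon+k^2/N\Big]\le \frac{2\exp(-\varepsilon^2N/(2k^2))}{1-N^2/n}.
\]
The point is that the failure probability is only inflated by a constant factor; we never pay an additive $N^2/n$ per event. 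This matters for the later union bound, and the assumption $n\ge c^2k^{24}\log^4(\cdot)/\varepsilon^{13}\gg N^2$ makes $1-N^2/n\ge 1/2$.

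\textbf{Lower bound.} Let $\phi^*$ be optimal. Since $\OPT(U)\ge \val_U(\phi^*)$ and values lie in $[0,1]$, the transferred bound gives
\[
\mathbb{E}[\OPT(U)]\ge \OPT-\varepsilon-\frac{k^2}{N}-4e^{-\varepsilon^2N/(2k^2)}=\OPT-O(\varepsilon).
\]

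\textbf{Upper bound.} Apply \Cref{lm:structure} with a parameter $s$ to be chosen, obtaining a family $\Psi$ with $|\Psi|=q^{k^2s/\varepsilon}$. Union bound the transferred concentration over $\Psi$. Outside an event of probability $4|\Psi|e^{-\varepsilon^2N/(2k^2)}$, every $\phi\in\Psi$ satisfies $\val_U(\phi)\le \val(\phi)+\varepsilon+k^2/N\le\OPT+\varepsilon+k^2/N$. Therefore
\[
\mathbb{E}\Big[\max_{\phi\in\Psi}\val_U(\phi)\Big]\le \OPT+\varepsilon+\frac{k^2}{N}+4\exp\Big(\frac{k^2s}{\varepsilon}\log q-\frac{\varepsilon^2N}{2k^2}\Big),
\]
and combining with the structure lemma yields $\mathbb{E}[\OPT(U)]\le\OPT+O(\varepsilon)+\mathcal{E}(\varepsilon,N,s)$.

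\textbf{Choice of parameters.} The main work is choosing the parameters so that every term is $O(\varepsilon)$. Since $\mathcal{E}$ contains $k\varepsilon$, run the structure lemma with $\varepsilon'=\varepsilon/k$. Then:
\begin{itemize}
\item The term $k^3/s\le\varepsilon$ forces $s\gtrsim k^3/\varepsilon$.
\item The term $qN\exp(-\varepsilon'^2s/(2k^2))\le\varepsilon$ forces $s\gtrsim (k^4/\varepsilon^2)\log(qN/\varepsilon)$. So take $s=\Theta\big((k^4/\varepsilon^2)\log(qk/\varepsilon)\big)$, with $\log N$ absorbed since $N$ is polynomial in these quantities.
\item The term $k^2s^2/(\varepsilon'N)=k^3s^2/(\varepsilon N)\le\varepsilon$ requires $N\gtrsim k^3s^2/\varepsilon^2\approx k^{11}\log^2(\cdot)/\varepsilon^6$. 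The remaining terms $sk^3/(\varepsilon' N)$, $\sqrt{k^4/(\varepsilon'N)}$ and $(k^2/\varepsilon')e^{-\varepsilon'^2N/k^4}$ are then small as well.
\item The union-bound exponent needs $\varepsilon^2N/k^2\gg (k^3s/\varepsilon)\log q\approx k^7\log^2(\cdot)/\varepsilon^3$, i.e.\ $N\gtrsim k^9\log^2/\varepsilon^5$.
\end{itemize}
All of these are dominated by $N=ck^{12}\log^2(qk^2/\varepsilon)/\varepsilon^6$ for a large constant $c$; the extra factor of $k$ gives slack for logarithmic factors. Finally, the assumption on $n$ ensures $N^2/n\le\varepsilon$, so the conditioning step above is legitimate.

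I expect this exponent bookkeeping to be the only delicate step. The key tension is that $s$ must be large enough to kill the $qN\exp(\cdot)$ term, yet small enough that both $s^2/N$ and $|\Psi|e^{-\varepsilon^2N/k^2}$ stay small. Setting $N\approx s^2\cdot\mathrm{poly}(k/\varepsilon)$ resolves this tension.
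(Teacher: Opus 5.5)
Your proposal is correct and follows the paper's route. You apply the structure lemma, union-bound the concentration lemma over $\Psi$, and tune $s,N$ with $\varepsilon$ effectively rescaled by $k$; your $s=\Theta((k^4/\varepsilon^2)\log(qk/\varepsilon))$ is the paper's choice after its substitution $\varepsilon\to\varepsilon/k$. The only differences are minor. The paper gets the lower bound from the exact identity $\mathbb{E}[\val_U(\phi^*)]=\frac{N^{\underline{k}}}{N^k}\frac{n^k}{n^{\underline{k}}}\val(\phi^*)$ rather than from concentration. It also transfers to sampling without replacement once, after the union bound, at an additive cost $N^2/n$, whereas you condition per event and pay a multiplicative factor $1/(1-N^2/n)$.
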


\begin{proof}

The first inequality is straightforward: for the optimal assignment $\phi^*$ of the full CSP, and directly compute
\[
\mathbb{E}[\val_U(\phi^*)] = \frac{1}{N^k}\sum_{w\in \cW}P_w(\phi^*|_w)\mathbb{P}[w\subseteq U]=\frac{N^{\underline{k}}}{N^k}\frac{n^k}{n^{\underline{k}}}\val(\phi^*).
\]
Now note that, by \Cref{lem:trivialtmp}, $\frac{N^{\underline{k}}}{N^k}\frac{n^k}{n^{\underline{k}}}\ge \frac{N^{\underline{k}}}{N^k}\ge 1-\frac {k^2}N\ge 1-\varepsilon$, and thus we obtain 
\[
\OPT-\varepsilon\le\mathbb{E}[\val_U(\phi^*)]\le \mathbb{E}[\OPT(U)].
\]

The most challenging part is proving the second inequality.
Let $\Psi$ be the set from \Cref{lm:structure}.
Sample $N$ variables uniformly with replacement from $[n]$, and let $U$ be the resulting set of unique variables.
 By \Cref{lem:concentration} and a union bound over all $\phi \in \Psi$, we get that with probability at least $1 - \delta$:
\begin{align}\label{eq:concSub}
    \left|\val_{U}(\phi)-\val(\phi)\right|\le \varepsilon + \frac{k^2}{N} + \mathbb{I}[|U|\neq N] \quad\forall \phi\in \Psi 
\end{align}
where $\delta=2 |\Psi|  \exp(-\frac{\varepsilon^2 N}{2k^2}) = 2 q^{k^2s/\varepsilon}  \exp(-\frac{\varepsilon^2 N}{2k^2})$. 
This implies that with probability at least $1-\delta$
\[ \max_{\phi \in \Psi} \val_{U}(\phi) \le \OPT +  \varepsilon + \frac{k^2}{N} + \mathbb{I}[|U|\neq N] .\]
Indeed, to derive the inequality, it is sufficient to apply \Cref{eq:concSub} to the optimal assignment in $\Psi$ for $U$.
Then, using \Cref{clm:withToWithout}, we transition from sampling with replacement to sampling without replacement, yielding
\[ \mathbb{E} [\max_{\phi \in \Psi} \val_U(\phi) ] \le \OPT + \varepsilon+\frac{k^2}{N} + \delta + \frac{N^2}{n}. \]
Finally, by applying the structure lemma (\Cref{lm:structure}) we get

\begin{align*}
    \mathbb{E} [\OPT(U)] &\le \mathbb{E} [\max_{\phi \in \Psi} \val_U(\phi) ] + \mathcal{E}(\varepsilon,N,s) \\
    & \le  \OPT +\mathcal{E}(\varepsilon,N,s)+ \varepsilon+\frac{k^2}{N} + \delta + \frac{N^2}{n}.
\end{align*}

Setting $N=c\left(\frac{k}{\varepsilon}\right)^6\log^2(qk/\varepsilon)$ for a sufficiently large $c$, $s=c\frac{k^2}{\varepsilon^2}\log(qk/\varepsilon)$ and assuming $n\ge c^2\frac{k^{11}}{\varepsilon^{13}}\log^4(qk/\varepsilon)$, we get that $\mathcal{E}(\varepsilon,N,s)+ \varepsilon+\frac{k^2}{N} + \delta + \frac{N^2}{n}=O(k\varepsilon)$. Moreover, redefining $\varepsilon$ by $\varepsilon/k$ we get the desired result. 

\end{proof}

We note that the subsampling theorem only holds for $n\ge \left(\frac{k\log q}{\varepsilon}\right)^c$ for a suitable constant $c$, which, nonetheless, is the regime of interest, since otherwise we could just take $U=[n]$.

\section{Proof of \Cref{lem:concentration} (Concentration Lemma)} \label{sec:concentration}

Our proof relies on an algebraic representation of the CSP, i.e., its representation as a multilinear polynomial over a simplex. We embed the discrete assignments into a standard simplex, sample from this representation, and show that the sampled value concentrates around the true value. The analysis amounts to controlling two quantities: 1) the concentration error, which quantifies the deviation of the value of the sampled assignment from its expectation; and 2) the bias, i.e., the gap between the expected value of the polynomial and the value of the assignment on the whole CSP.

\begin{definition}[Algebraic version of a CSP] \label{def:alge}
    An assignment $\phi$ is represented by a vector $x(\phi)=(x_{i,a})_{i \in [n],a\in [q]}\in \mathbb{R}^{nq}$, where $x_{i,a}=\frac{1}{n}\mathbb{I}\{\phi_i=a\}$. Let $X=\{x(\phi)\}_{\phi \in [q]^n}$. For each constraint $w=(i_1,\ldots,i_k)\in\cW$ and constraint $P_w$, we define
    \[
    \tilde P_w(x)={n^{k}}\sum_{a_1,\dots,a_{k}\in[q]}P_w(a_1,\dots,a_{k})\prod_{j\in [k]} x_{i_j,a_j}.
    \]
    The objective function of the CSP is then extended to the polynomial $f:\Delta_{qn} \rightarrow [0,1]$ defined as:
    \[
    f(x)=\frac1{n^k}\sum_{w\in\cW} \tilde P_w(x).
    \]
\end{definition} 

 Note that $f(x)$ is a multilinear polynomial of degree $k$ satisfying $\sup_{x\in X} |f(x)|\le 1$. %
 Moreover, by definition we get $\OPT=\max_{x\in X}f(x)$.

Now, we can provide an equivalent representation of sampling a set of variables $U$.
Given an assignment $\phi$, sampling $N$ variables with replacement is equivalent to sampling $N$ times from the categorical distribution defined by $x(\phi)$. %
Let $(i_t, a_t)_{t=1}^N$ be $N$ i.i.d.~pairs sampled such that $\mathbb{P}[(i_t, a_t) = (i,a)] = x_{i,a}$. We define the empirical counts and mean as:
\[N_{i,a}:=\sum_{t=1}^N \mathbb{I}\{(i_t,a_t)=(i,a)\},\qquad
Z_{i,a}:=\frac{N_{i,a}}{N}.\]
We denote with $Z$ the empirical mean of these samples and with $\mathcal{D}(x, N)$ its distribution. 

We can now formally state the two quantities introduced informally at the beginning of the section.  In the following two sections, we show that the value of $f(Z)$ concentrates around its mean $\mathbb{E}[f(Z)]$ and that $\mathbb{E}[f(Z)]-f(x)$ is small, i.e., the bias is small. Then, the equivalence between the algebraic representation and the standard one concludes the proof. 

\subsection{Concentration}

Here, we show that $f(Z)$ concentrates around its expectation $\mathbb{E}[f(Z)]$. Formally:
\begin{lemma}\label{lem:concentration1}
     Fix $x\in X$, $N \in [n]$, and $\varepsilon>0$. Let $Z\sim \cD(x,N)$. With probability at least $1-2\exp\left(-\frac{\varepsilon^2N}{2k^2}\right)$, 
    \[
    |f(Z)-\mathbb{E}[f(Z)]|\le \varepsilon.
    \]
\end{lemma}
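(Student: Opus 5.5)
We will apply McDiarmid's bounded-differences inequality to $f(Z)$, viewed as a function of the $N$ i.i.d.\ samples $(i_t,a_t)_{t=1}^N$. Since $Z=\frac1N\sum_{t}e_{(i_t,a_t)}$, we may write $f(Z)=g\bigl((i_1,a_1),\ldots,(i_N,a_N)\bigr)$. McDiarmid then gives
\[
\mathbb{P}\bigl[|f(Z)-\mathbb{E}[f(Z)]|\ge\varepsilon\bigr]\le 2\exp\!\left(-\frac{2\varepsilon^2}{\sum_t c_t^2}\right),
\]
where $c_t$ bounds the change in $g$ when only the $t$-th sample is replaced. So it suffices to prove $c_t\le 2k/N$: then $\sum_t c_t^2\le 4k^2/N$ and the exponent is exactly $\varepsilon^2N/(2k^2)$, matching the statement.

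\textbf{Step 1: expand $f(Z)$ over $k$-tuples of samples.} Substituting $Z_{i,a}=\frac1N\sum_t\mathbb{I}\{(i_t,a_t)=(i,a)\}$ into $\tilde P_w$ gives
\[
f(Z)=\frac{1}{N^k}\sum_{(t_1,\ldots,t_k)\in[N]^k} h\bigl((i_{t_1},a_{t_1}),\ldots,(i_{t_k},a_{t_k})\bigr),
\]
with $h\bigl((i_1,a_1),\ldots,(i_k,a_k)\bigr)=P_{(i_1,\ldots,i_k)}(a_1,\ldots,a_k)$ when $(i_1,\ldots,i_k)\in\cW$, and $h=0$ otherwise. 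The factors $n^k$ and $1/n^k$ in $\tilde P_w$ and $f$ cancel, which is exactly why the definition includes them. This representation shows $0\le h\le 1$, so $f(Z)$ is an average of $[0,1]$-valued terms indexed by $[N]^k$.

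\textbf{Step 2: bound the differences.} Replacing sample $t$ only affects tuples $(t_1,\ldots,t_k)$ that contain $t$ in some coordinate. There are at most $kN^{k-1}$ such tuples. Each affected term changes by at most $1$, so
\[
c_t\le \frac{kN^{k-1}}{N^k}=\frac kN.
\]
This is even better than the $2k/N$ we needed, so McDiarmid yields the claimed probability.

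\textbf{Main obstacle.} The only delicate point is checking that every term stays in $[0,1]$ when variable indices repeat across samples, since $Z$ need not lie in $X$. This is handled by defining $h=0$ whenever the tuple of indices is not in $\cW$, i.e., whenever two indices coincide. The boundedness $0\le P_w\le 1$ then covers every remaining term.
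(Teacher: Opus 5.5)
Your proof is correct. It uses the same top-level tool as the paper, McDiarmid's inequality over the $N$ i.i.d.\ samples, but you get the bounded-differences constant by a different argument.

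\textbf{The paper's route.} It works analytically with the polynomial. Using multilinearity, and the fact that $Z$ has nonnegative coordinates of total mass at most $1$, it shows $|f(Z+\delta e_{i,a})-f(Z)|\le k|\delta|$ for a single-coordinate perturbation. Replacing one sample moves two coordinates by $1/N$, so the triangle inequality gives $c_t\le 2k/N$. That is exactly the constant producing the exponent $\varepsilon^2N/(2k^2)$.

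\textbf{Your route.} You rewrite $f(Z)$ as a normalized V-statistic $\frac{1}{N^k}\sum_{(t_1,\ldots,t_k)\in[N]^k}h(\cdot)$ with a $[0,1]$-valued kernel. The $n^k$ factors cancel, and tuples with a repeated variable index contribute zero because $\cW$ contains only distinct-index tuples. You then count the $N^k-(N-1)^k\le kN^{k-1}$ tuples touched by sample $t$.

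\textbf{What each buys.} Your argument is more elementary. It also avoids the factor $2$ lost in the paper's triangle inequality, so you actually obtain the stronger tail $2\exp(-2\varepsilon^2N/k^2)$. The same representation pays off elsewhere in the paper:
\begin{itemize}
\item On the no-collision event it makes $f(Z)=\val_U(\phi)$ immediate.
\item It gives the bias identity directly. Tuples with a repeated sample index vanish, and tuples with $k$ distinct sample indices have expectation exactly $f(x)$ by independence. Hence $\mathbb{E}[f(Z)]=\frac{N^{\underline{k}}}{N^k}f(x)$, without the multinomial factorial-moment claim used in the bias lemma.
\end{itemize}
The paper's route instead yields a sensitivity bound for $f$ at arbitrary points of the (sub)simplex, independent of the sampling structure. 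That fits its algebraic framing, but it is slightly lossier for this lemma.
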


\begin{proof}
We start by proving that $f$ does not change too much when one sample gets changed. 
Fix a coordinate $(i,a)$ and a $\delta \in \mathbb R$. Take a constraint $w=(i_1,\ldots,i_k)\in\cW$ where variable $i$ appears at position $u\in[k]$. Let $e_{i,a}$ be the indicator function on index $(i,a)$. The multilinearity of $\tilde P_w(Z)$ implies that:
\begin{align*}
\left|\tilde P_w(Z+\delta e_{i,a})-\tilde P_w(Z)\right|
&= n^{k} \left|\sum_{a_{-u}\in[q]^{k-1}} P_w(a_1,\dots,a_{u-1},a,a_{u+1},\dots,a_k)\,
\delta \prod_{j\in [k], j\neq u} Z_{i_j,a_j}\right|\\
&\le n^{k} |\delta| \sum_{a_{-u}\in[q]^{k-1}} \prod_{j\in [k],j\neq u} Z_{i_j,a_j}\\
&= n^{k} |\delta| \prod_{j\in [k],j\neq u}\sum_{b\in[q]} Z_{i_j,b}.
\end{align*}
where the last equality is just repeated application of distributivity.
Summing over all constraints where $i$ occupies the $u$-th position, we get
\begin{align*}
\sum_{\substack{w=(i_1,\ldots,i_k)\in \cW:\\ i_u=i}}\,\,\prod_{j\neq u}\, \left(\sum_{b\in[q]} Z_{i_j,b}\right)
&\le \sum_{\substack{w=(i_1,\ldots,i_k)\in [n]^k:\\ i_u=i}}\,\,\prod_{j\neq u} \left(\sum_{b\in[q]} Z_{i_j,b}\right)\\
&=\sum_{(i_1,\dots,i_{u-1},i_{u+1},\dots,i_k)\in[n]^{k-1}}\,\,\prod_{j\neq u} \left(\sum_{b\in[q]} Z_{i_j,b}\right)\\
&= \left(\sum_{r\in[n]} \sum_{b\in[q]} Z_{r,b} \right)^{k-1}
=1.
\end{align*}
Therefore,
\[
\sum_{w=(i_1,\ldots,i_k)\in \cW:\, i_u=i}\bigl|\tilde P_w(Z+\delta e_{i,a})-\tilde P_w(Z)\bigr|
\le n^k|\delta|.
\]
Averaging over all $w$ and summing over all $k$ possible positions $u\in[k]$ for variable $i$ %
yields:
\begin{align}\label{eq:delta}
|f(Z+\delta e_{i,a})-f(Z)|
=\frac1{n^k}\left|\sum_{w\in \cW}\bigl(\tilde P_w(Z+\delta e_{i,a})-\tilde P_w(Z)\bigr)\right|
\le \frac1{n^k}\sum_{u=1}^k n^k|\delta|
= k|\delta|.
\end{align}

Changing one of the $N$ samples that define $Z$ changes exactly two coordinates by $\frac{1}{N}$. Then, by \Cref{eq:delta} and a triangle inequality, if we take any $Z$ and any $Z'$ obtained by changing one of the $N$ samples we get 
\[ 
|f(Z)-f(Z')|\le \frac{2k}{N}
\]
Applying the McDiarmid's inequality \citep{mcdiarmid1989method}, it is then guaranteed that
\[ 
\Big|f(Z)-\mathbb{E}[f(Z)]\Big|\le \varepsilon
\]
with probability at least $1-2\exp(-\frac{\varepsilon^2 N}{2k^2})$, concluding the proof.
\end{proof}

\subsection{Bounding the Bias}

Since $f$ is non-linear, the concentration of $f(Z)$ around $\mathbb{E}[f(Z)]$ does not by itself imply concentration around $f(\mathbb{E}[Z])=f(x)$. However, we bound this bias by exploiting the multilinear structure of $f$. In particular, we control the bias $ |\mathbb E[f(Z)]-f(x)|$ using known results on the factorial moments of the multinomial distribution (such as the one of \citet{mosimann1962compound, ouimet2021general}). 

\begin{claim}[\cite*{mosimann1962compound, ouimet2021general}]\label{lem:mosimann}
    Let $(N_j)_{j\in[d]} \sim \mathrm{Multinomial}(N, y)$, where $y\in \Delta_d$.  
    For any nonnegative integers $g_1,\dots,g_d$, let $g=\sum_{j=1}^d g_j$. 
    Then
    \[
    \mathbb E\left[\prod_{j=1}^d N_j^{\underline{g_j}}\right] = N^{\underline{g}} \prod_{i=1}^d y_i^{g_i}.
    \]
\end{claim}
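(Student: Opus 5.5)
The plan is to prove the factorial-moment identity with a probability generating function, using the fact that factorial moments are exactly the mixed partial derivatives of the PGF evaluated at the all-ones vector. For $(N_j)_{j\in[d]}\sim\mathrm{Multinomial}(N,y)$ and $t=(t_1,\dots,t_d)$, define
\[
G(t)=\mathbb{E}\Big[\prod_{j=1}^d t_j^{N_j}\Big]=\Big(\sum_{j=1}^d y_j t_j\Big)^N .
\]
This closed form holds because $(N_j)_j$ is a sum of $N$ i.i.d.\ one-hot categorical vectors with law $y$. The PGF of a single one-hot vector is $\sum_j y_j t_j$, and independence multiplies the $N$ PGFs.

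Next I would apply the operator $\prod_j \partial_{t_j}^{g_j}$ to both sides. On the left, $G$ is a polynomial, so differentiation commutes with the finite expectation. Since $\partial_t^{g}t^{m}=m^{\underline{g}}\,t^{m-g}$, with the convention that this vanishes when $g>m$, the left side becomes
\[
\mathbb{E}\Big[\prod_j N_j^{\underline{g_j}}\, t_j^{N_j-g_j}\Big].
\]
On the right, write $S=\sum_j y_j t_j$ and note that $\partial_{t_j}S^m=m\,y_j\,S^{m-1}$. Each derivative therefore lowers the exponent of $S$ by one and contributes one factor of $y_j$ together with the current exponent. After $g=\sum_j g_j$ derivatives, in any order, the right side is $N^{\underline{g}}\prod_j y_j^{g_j}\,S^{N-g}$. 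If $g>N$, the falling factorial $N^{\underline{g}}$ is $0$, which matches the fact that $S^N$ is annihilated.

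Finally I would evaluate both sides at $t=\mathbf 1$. There $S=\sum_j y_j=1$ because $y\in\Delta_d$, and every power $t_j^{N_j-g_j}$ equals $1$, so the identity follows.

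The only delicate point is bookkeeping when some $g_j>N_j$ or $g>N$. The falling-factorial convention sends both sides to zero consistently, so I would state it explicitly. An alternative route is a direct combinatorial argument: write $N_j^{\underline{g_j}}$ as the number of ordered $g_j$-tuples of distinct trials that land in category $j$. Summing over $g$ distinct trials, assigned to the categories with prescribed multiplicities, gives $N^{\underline{g}}$ ordered choices, each occurring with probability $\prod_j y_j^{g_j}$. I would use this as a sanity check.
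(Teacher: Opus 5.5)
Your argument is correct and complete. The paper gives no proof of this claim; it simply cites Mosimann and Ouimet. Your proposal therefore supplies a self-contained justification rather than an alternative to an in-paper argument.

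The generating-function route is the standard one. From $G(t)=\bigl(\sum_j y_j t_j\bigr)^N$, you apply $\prod_j \partial_{t_j}^{g_j}$ and evaluate at $t=\mathbf{1}$, where $\sum_j y_j=1$, and the identity follows. Your treatment of the degenerate cases $g_j>N_j$ and $g>N$ through vanishing falling factorials is right, and the convention $0^0=1$ covers $y_j=0$, $g_j=0$.

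It is worth noticing what the paper actually needs. In \Cref{lem:bias} the coordinates $(i_1,a_1),\dots,(i_k,a_k)$ are pairwise distinct, because a constraint involves distinct variables. So every exponent is $0$ or $1$, and falling factorials coincide with ordinary powers.

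In that regime your combinatorial ``sanity check'' is already a full one-line proof. The product $\prod_j N_{i_j,a_j}$ counts ordered $k$-tuples of draws $(\tau_1,\dots,\tau_k)$ with draw $\tau_j$ landing on $(i_j,a_j)$. These draws are automatically distinct because the target coordinates are, and independence gives $N^{\underline{k}}\prod_j x_{i_j,a_j}$.

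The PGF argument buys the general statement with arbitrary multiplicities at essentially no extra cost. The counting argument makes transparent why only the factor $N^{\underline{k}}/N^k$ appears, and hence the $k^2/N$ bias.
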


By leveraging this result we can bound the bias as follows.

\begin{lemma}\label{lem:bias}
Fix $x \in X$, $N \in [n]$, and let $Z\sim \cD(x,N)$. Then
    \[
        |\mathbb E[f(Z)]-f(x)|\le \frac{k^2}{N}.
    \]
\end{lemma}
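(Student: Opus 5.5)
We expand $\mathbb E[f(Z)]$ term by term using multilinearity and compare each monomial's expectation to its value at $x$. The same mechanism drives \Cref{lem:concentration1}: every monomial of $\tilde P_w$ involves $k$ \emph{distinct} variables $i_1,\dots,i_k$, hence $k$ distinct coordinates $(i_j,a_j)$ of $Z$. For distinct coordinates the product $\prod_j N_{i_j,a_j}$ equals the product of falling factorials $\prod_j N_{i_j,a_j}^{\underline{1}}$. \Cref{lem:mosimann}, applied with $d=nq$, $y=x$, and $g_{(i_j,a_j)}=1$ (all other $g$'s zero, so $g=k$), therefore gives
\[
\mathbb E\Big[\prod_{j\in[k]} Z_{i_j,a_j}\Big]=\frac{1}{N^k}\,\mathbb E\Big[\prod_{j} N_{i_j,a_j}\Big]=\frac{N^{\underline{k}}}{N^k}\prod_{j} x_{i_j,a_j}.
\]
The multiplicative factor $N^{\underline{k}}/N^k$ is the same for all monomials.

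Summing over $a_1,\dots,a_k$ and over $w\in\cW$, linearity of expectation yields the exact identity
\[
\mathbb E[f(Z)]=\frac{N^{\underline{k}}}{N^k}\,f(x).
\]
It follows that $|\mathbb E[f(Z)]-f(x)|=\bigl(1-\tfrac{N^{\underline{k}}}{N^k}\bigr)|f(x)|\le 1-\tfrac{N^{\underline{k}}}{N^k}$, where we use $0\le f(x)\le 1$ for $x\in X$ (nonnegativity holds because $P_w\ge0$).

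It remains to bound $1-\prod_{j=0}^{k-1}(1-j/N)$. The Weierstrass product inequality $\prod(1-t_j)\ge 1-\sum t_j$ gives $1-\prod_{j=0}^{k-1}(1-j/N)\le \sum_{j<k} j/N = k(k-1)/(2N)\le k^2/N$. This is presumably the content of \Cref{lem:trivialtmp}, which the paper already cites in the proof of \Cref{thm:subsampling}. If $k>N$, then $N^{\underline{k}}=0$ and the bound $k^2/N\ge1$ holds trivially.

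The main point requiring care is the reduction from ordinary moments to factorial moments. It relies on the distinctness of the indices $i_1,\dots,i_k$ in each constraint, which makes the coordinates $(i_j,a_j)$ pairwise distinct; with repeated coordinates, $N^2\neq N^{\underline 2}$ and the reduction would fail. Once this is observed, the rest is bookkeeping.
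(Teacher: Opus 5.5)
Your proposal is correct and follows the same route as the paper's proof. You apply the multinomial factorial-moment identity to each monomial to obtain the exact relation $\mathbb{E}[f(Z)]=\frac{N^{\underline{k}}}{N^k}f(x)$, and then bound $1-\frac{N^{\underline{k}}}{N^k}\le \frac{k^2}{N}$ with the product inequality, which is exactly the paper's appendix claim. Your remarks that the coordinates $(i_j,a_j)$ are pairwise distinct and that the case $k>N$ is trivial make explicit two points the paper leaves implicit.
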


\begin{proof}
For any $w=(i_1,\ldots,i_k)\in \cW$, it holds
\begin{align}
\mathbb{E}\left[\tilde P_w(Z)\right]&= \mathbb{E}\left[n^{k}  \sum_{a_1,\dots,a_{k}\in[q]} P_w(a_1,\dots,a_{k})\prod_{j\in [k]} Z_{i_j,a_j}  \right]\notag\\
& =\mathbb{E}\left[ \left( \frac{n}{N}\right)^k \sum_{a_1,\dots,a_{k}\in[q]} P_w(a_1,\dots,a_{k})\prod_{j\in [k]} N_{i_j,a_j}\right]\notag\\
& =\left( \frac{n}{N}\right)^k \sum_{a_1,\dots,a_{k}\in[q]} P_w(a_1,\dots,a_{k}) \mathbb{E}\left[ \prod_{j\in [k]} N_{i_j,a_j}\right]\notag\\
& = \left( \frac{n}{N}\right)^k \sum_{a_1,\dots,a_{k}\in[q]} P_w(a_1,\dots,a_{k}) N^{\underline{{k}}}\prod_{j\in [k]} x_{i_j,a_j},\label{eq:bias}
\end{align}
where in the last equality we used \Cref{lem:mosimann}.

Hence, we get that 
\[
\mathbb{E}[f(Z)]=\frac{1}{N^k} \sum_{w=(i_1,\ldots, i_k)\in \cW} N^{\underline{k}}\sum_{a_1,\dots,a_{k}\in[q]} P_w(a_1,\dots,a_{k})\prod_{j\in [k]} x_{i_j,a_j}=\frac{N^{\underline{k}}}{N^k}f(x),
\]
where in the first equality we used \Cref{eq:bias}.

Thus, to conclude the proof, it is sufficient to observe that $\frac{N^{\underline{k}}}{N^k}\simeq 1$. Formally,
\begin{align*}
|\mathbb{E}[f(Z)]-f(x)| &= \left|\frac{N^{\underline{k}}}{N^k}-1\right| f(x)\le \frac{k^2}{N},
\end{align*}
where in the last inequality we used \Cref{lem:trivialtmp}.
\end{proof}

\subsection{Conclusion of the Proof of \Cref{lem:concentration}}

Having established the concentration of the polynomial $f(Z)$ and the bound on its bias, we now combine these results to prove the concentration lemma.

Let $\phi \in [q]^n$ be an assignment and let $x=x(\phi) \in X$ be its algebraic representation as defined in \Cref{def:alge}. 
Recall that $Z \sim \mathcal{D}(x, N)$ represents the empirical mean of $N$ variables sampled from $[n]$ with replacement according to the distribution $x$.

By the triangle inequality, the total error can be decomposed into concentration and bias terms:

\begin{equation}\label{eq:triangle}
|f(Z) - f(x)| \le |f(Z) - \mathbb{E}[f(Z)]| + |\mathbb{E}[f(Z)] - f(x)|.
\end{equation}
From \Cref{lem:concentration1}, with probability at least $1-2 \exp(-\frac{\varepsilon^2 N}{2k^2})$, it holds
\begin{equation*}\label{eq:f-conc}
|f(Z)-\mathbb E [f(Z)]| \le \varepsilon.
\end{equation*}
Moreover, by \Cref{lem:bias}, we get that
\begin{equation*}\label{eq:useBias}
|\mathbb{E}[f(Z)] - f(x)| \le  \frac{k^2}{N}.
\end{equation*}

Finally, by observing that $\val(\phi)=f(x)$ and $\val_{U}(\phi)= f(Z)$ if $|U|=N$ (i.e., there were no duplicates when sampling with replacement), we get that 
\[
\big| \val_U(\phi)-\val(\phi)\big| \le \varepsilon + \frac{k^2}{N} + \mathbb{I}[|U|\neq N]
\]
with probability at least $1-2 \exp(-\frac{\varepsilon^2 N}{2k^2})$. 
Notice that we used  $\mathbb{I}[|U|\neq N]$ to succinctly state that the bound holds only when $|U|=N$, by making it vacuous for $|U|\neq N$.

\section{Proof of \Cref{lm:structure} (Structure Lemma)}\label{sec:structure}
The proof structure follows the framework established by \citet{goldreich1998property} for {Max-CUT} and extended to general CSPs by \citet{barak2011subsampling}.
The goal is to demonstrate the existence of a set of assignments $\Psi$ that is simultaneously small enough to allow for a union bound over the concentration probability of each assignment, yet rich enough to approximate the optimum of a random subsampled instance.

We begin by partitioning the set of all variables $V \coloneqq [n]$ into $T$ blocks $\{V_t\}_{t\in [T]}$ and keep only ``balanced'' constraints, namely those containing at most one variable from each block.
By discarding constraints that fall within a single block, each block contribution is separable over its variables, but this does not change the objective much, as the removed constraints account for only a small fraction of the total mass.

Now, we sample a set $U_t$ of $N/T$ variables from each set $V_t$, and let $U=\bigcup_{t \in [T]} U_t$.
Then, for each block, we choose a smaller set $S_t\subseteq U \setminus U_t$ that lies outside each block $V_t$. The role of each $S_t$ (together with a guess assignment over its variables called ``seed'') is to summarize the interaction between block $U_t$ and the rest of the variables; instead of evaluating best-responses with respect to the whole set $U\setminus U_t$, we only use the much smaller set $S_t$.

The assignment is then reconstructed block by block through an iterative best-response procedure, allowing the estimate to adapt to the best-response assignments of previous blocks. In this way, the final assignment depends only on the guessed seed, rather than on the entire set $U$.
This leads to a ``deterministic extender'': for every possible assignment to the seed sets $\ccS=(S_t)_{t \in [T]}$ the extender outputs a full assignment on the original variables. The set of assignments generated by all the seeds provides our $\Psi$.

The above is the high-level idea of the proof; however, we have omitted many technical details. The technical proof is organized as described in the following paragraph. Before proceeding to the proof structure, it is convenient to define here a few probability distributions:
\begin{itemize}
    \item $\mu$ is the uniform distribution over subsets of $[n]$ of size $N$;
    \item $\gamma$ is the joint distribution of $(U,\ccS)$ sampled in this way: $U=\bigcup_{t\in[T]}U_t$ where $U_t$ (with $|U_t|=N/T$) is sampled without replacement uniformly from $V_t$, and then $S_t$ (with $|S_t|=s$) is sampled  without replacement uniformly from $U\setminus U_t$;
    \item $\nu$ is the marginal distribution of $U$ induced by the joint distribution $\gamma$. Equivalently, $U=\bigcup_{t\in[T]}U_t$ where $U_t$ (with $|U_t|=N/T$) is sampled without replacement uniformly from $V_t$;
    \item $\gamma_\ccS$ is the conditional distribution of $U$ under $\gamma$, given the realization $\ccS$.
\end{itemize}
With these definitions, we are now ready to outline the structure of the proof:

\begin{enumerate}
    \item We define the value of a ``balanced'' CSP, denoted with $\val^\bal_U$, which considers only those constraints where each of the $k$ variables belongs to a distinct block $V_t$ (\Cref{sec:balanced}).
    \item For every possible assignment on the seed sets $\ccS$, we construct a corresponding candidate set of assignments $\Psi(\ccS)$ (\Cref{sec:buildingPsi}).
    \item For a set of variables $U$ that contains $\ccS$, we upper bound the gap between $\OPT^\bal(U)$ (the optimal value of the balanced CSP on $U$) and $\max_{\phi \in \Psi(\ccS)} \val^\bal_U(\phi)$ (the best assignment in our candidate set). This bound depends on the sets $U$ and the tuple $\ccS$ (\Cref{sec:USapprox}).
    \item We show that the gap between $\OPT^\bal(U)$ and $\max_{\phi \in \Psi(\ccS)} \val^\bal_U(\phi)$ is small in expectation over the joint distribution $\gamma$ (\Cref{sec:randomized}).
    \item Using an averaging argument, we establish the existence of a deterministic seed tuple $\ccS^*$ that preserves this small gap, though $U$ is sampled from the conditional distribution $\gamma_{\ccS^*}$ (\Cref{sec:derandom}).
    \item We prove that for any family of assignments $\Phi\subseteq [q]^n$, the difference between the expected maximum values under the conditioned distribution $\gamma_{\ccS^*}$ and the uniform distribution $\mu$ is negligible. This allows us to transition back to $U \sim \mu$, showing that $\mathbb{E}_{\mu}[\OPT^\bal(U)] \approx \mathbb{E}_{\mu}[\max_{\phi \in \Psi(\ccS^*)}\val^\bal_U(\phi)]$ (\Cref{sec:uniformize}).
    \item We conclude by showing that the expected value of the balanced CSP is a sufficiently close proxy for the original CSP value under the uniform sampling measure $\mu$ (\Cref{sec:debalancing}).
\end{enumerate}

\subsection{Building a Balanced CSP}\label{sec:balanced}

To simplify the constraints between variables, we partition  the variable set $[n]$ into $T$ disjoint blocks $V_1, \ldots, V_T$, each of size approximately $n/T$. For any variable $i \in [n]$, let $b(i) \in [T]$ denote the index of the block containing $i$.

We focus our analysis on balanced constraints: those whose scope contains at most one variable from any block. Formally, let $\cM \subseteq \cW$ be the set of balanced $k$-tuples:
\[
\cM:=\{(i_1,\ldots,i_k)\in\cW:b(i_j)\neq b(i_{j'}),\,\forall j\neq j'\in[k]\}.
\]

Given a subset of variables $U \subseteq [n]$, we let $\cM(U) = \cM \cap U^k$ denote the balanced constraints restricted to $U$. For a specific variable $i$, we define $\cM_i(U) = \{w \in \cM(U) : i \in w\}$ as the set of balanced constraints in $U$ that include $i$ in their scope.

Then, we define the balanced CSP value restricted to $U$ as:
\[\val^\bal_U(\phi) = \frac{1}{|U|^k} \sum_{w \in \cM(U)} P_w(\phi|_w),\]
and its corresponding optimum as
\[ 
\OPT^{\bal}(U)=\max_{\phi \in [q]^U} \val^\bal_U(\phi).
\]
Now, we show how to decompose the value of an assignment $\phi$ over a subset of variables $U$ into two components. To do that, we use the fact that there are no constraints among variables of the same block.
Let 
\[
H_{i,U}(a, \psi)=\sum_{w\in \cM_i(\{i\}\cup U)}P_{w}((i\rightarrow a  )\otimes \psi|_{w\setminus \{i\}}).
\]
Intuitively, $H_{i,U}(a,\psi)$ denotes the total contribution of balanced constraints containing $i$ and otherwise supported on $U$, when $i$ is assigned to $a$, and the remaining variables are assigned according to $\psi$.

For a set $U$, let $U_t = U \cap V_t$ denote the variables in block $t$, and $U_{-t} = U \setminus U_t$ denote the variables in all other blocks. 
The following lemma (whose proof is deferred to \Cref{app:omitted}) partitions the value of the balance CSP into a term independent of $U_t$ and a sum of contributions from variables within $U_t$.

\begin{restatable}[Decomposition]{lemma}{lemmaDecomposition}\label{lem:decomposition}
    Consider a subset of variables $U\subseteq [n]$, an assignment $\phi\in [q]^{U}$, and fix $t \in [T]$. Then,
\begin{align*}
\val^{\bal}_U(\phi) & = \frac{1}{|U|^k}\sum_{w\in\cM(U_{-t})}P_{w}(\phi|_{w})+ \frac{1}{|U|^k} \sum_{i\in U_t}H_{i,U_{-t}}(\phi_i, \phi).
\end{align*}
\end{restatable}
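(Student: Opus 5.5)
The plan is to partition the balanced constraints $\cM(U)$ according to how they meet $U_t$. Because every $w\in\cM$ has its $k$ variables in pairwise distinct blocks, each $w\in\cM(U)$ contains \emph{at most one} variable of $U_t = U\cap V_t$. This gives the disjoint decomposition
\[
\cM(U) = \cM(U_{-t}) \;\sqcup\; \bigsqcup_{i\in U_t} \{w\in\cM(U) : i\in w\}.
\]
The first part consists of the constraints with no variable in $U_t$, which are exactly those supported on $U_{-t}$. The second part is split according to the unique variable $i\in U_t$ contained in $w$. Disjointness of the pieces indexed by $i$ follows from the same "at most one per block" property.

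Next I identify each piece $\{w\in\cM(U): i\in w\}$ with $\cM_i(\{i\}\cup U_{-t})$.

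\emph{Inclusion $\supseteq$.} A tuple in $\cM_i(\{i\}\cup U_{-t})$ is balanced, contains $i$, and is supported on $\{i\}\cup U_{-t}\subseteq U$, so it lies in the piece.

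\emph{Inclusion $\subseteq$.} Take $w$ in the piece. Its variables other than $i$ lie in $U$ but not in block $b(i)=t$, since $w$ is balanced and already contains $i\in V_t$. Hence they lie in $U_{-t}$, and $w\in\cM_i(\{i\}\cup U_{-t})$.

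Then I match the summands. For $w\ni i$ in this set, $\phi|_w = (i\to\phi_i)\otimes\phi|_{w\setminus\{i\}}$, so
\[
\sum_{w\in \cM_i(\{i\}\cup U_{-t})} P_w(\phi|_w) = H_{i,U_{-t}}(\phi_i,\phi).
\]
Summing over $i\in U_t$, adding the $\cM(U_{-t})$ term, and dividing by $|U|^k$ gives the claimed identity.

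There is no real obstacle here; the argument is pure bookkeeping. The only point needing care is the identification of the pieces with $\cM_i(\{i\}\cup U_{-t})$, i.e.\ the inclusion $\subseteq$, which is exactly where balancedness is used.
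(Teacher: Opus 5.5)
Your proof is correct and is essentially the paper's argument: use balancedness to partition $\cM(U)$ into $\cM(U_{-t})$ and the disjoint sets $\cM_i(U)$ for $i\in U_t$, then identify $\cM_i(U)$ with $\cM_i(\{i\}\cup U_{-t})$ so that each inner sum is $H_{i,U_{-t}}(\phi_i,\phi)$. The only difference is that you explicitly justify the inclusion $\cM_i(U)\subseteq\cM_i(\{i\}\cup U_{-t})$ via balancedness, which the paper's final step uses without comment.
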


At a high level, the proof of \Cref{lem:decomposition} works as follows. Since every constraint in $\cM(U)$ is balanced, it can contain at most one variable from $V_t$. Thus, any constraint $w \in \cM(U)$ either avoids $U_t$ entirely (belonging to $\cM(U_{-t})$) or intersects $U_t$ at exactly one variable $i$.

A critical consequence of this decomposition is that the optimal assignment for a variable $i \in U_t$ can be determined locally, i.e., independently from the assignment to variables in $U_{t} \setminus \{i\}$. Given a fixed assignment $\psi$ to all variables in $U \setminus \{i\}$, the value of $a$ that maximizes the global balanced CSP is identical to the value that maximizes $H$:
\begin{align}\label{eq:optimality}
    \arg\max_{a\in[q]}\, \val_U^\bal( (i\rightarrow a)\otimes \psi)=  \arg\max_{a\in[q]}\, H_{i, U_{-t}} (a,\psi).
\end{align}
This property will be of great importance. It will allow us to sequentially extend a partial assignment block-by-block as a function of $H$ only.

\subsection{Building $\Psi(\ccS)$}\label{sec:buildingPsi}

As an intermediate step, we construct a candidate set of assignments, $\Psi(\ccS)$, which depends on a tuple of variable subsets $\ccS = (S_t)_{t \in [T]}$. Recall that we impose $S_t\subseteq V\setminus V_t$. It is important to emphasize that, at this stage, the tuple $\ccS$ is arbitrary and independent of the set of variables $U$. We define the set $\Psi(\ccS)$ by including an assignment for each ``seed'' assignment $\bar \sigma = (\sigma_t)_{t \in [T]}\in \bigtimes_{t \in [T]} [q]^{S_t}$.
 In particular, for each seed $\bar \sigma$, we build an assignment using a ``deterministic extender'' $D_\ccS(\bar \sigma)$, which extends the seed assignment $\bar\sigma$ to a full assignment over all $V$. For any variable $i \in [n]$ with block index $t = b(i)$, the assignment $\phi_i$ is chosen as follows:
\begin{align}\label{eq:decoder}
\phi_i= D_\ccS(\bar \sigma)_i\in \arg\max_{a\in[q] }\val^\bal_{\{i\}\cup S_t}((i\rightarrow a) \otimes \sigma_t),
\end{align}
where ties are broken in an arbitrary but deterministic way.

The intuition behind this construction is rooted in the optimality condition established in \Cref{eq:optimality}. Since the optimal assignment for a variable in $V_t$ depends only on the values assigned to variables in $V \setminus V_t$, we can think about $S_t$ as a representative sample of $V \setminus V_t$.

\subsection{ A $(U,\ccS)$-Dependent Approximation}\label{sec:USapprox}

In this section, we still work with an arbitrary set $U$ and an arbitrary tuple $\ccS$ such that $S_t\subseteq U_{-t}$. Our objective is to bound the suboptimality of $\Psi(\ccS)$ as a function of the chosen sets. Then, in the next section, we will pick a distribution over $U$ and $\ccS$, and show that the bound is small in expectation over such distribution.

Let $\phi^*$ be the optimal assignment for the balanced CSP on $U$ (i.e., $\val_U^\bal(\phi^*) = \OPT^{\bal}(U)$). We define a sequence of assignments $\phi^{(0)}, \phi^{(1)}, \dots, \phi^{(T)}$ starting from $\phi^{(0)} = \phi^*$. Each $\phi^{(t)}$ is obtained by updating the variables in block $U_t$ to ``best respond'' to the previous assignment restricted to $S_t$:

\begin{align}\label{eq:seq_assign}
\phi^{(t)}_i = \begin{cases}
    \phi^{(t-1)}_i &\text{if}\quad i\in U_{-t}\\
    \in\arg\max_{a\in [q]}\val^\bal_{\{i\}\cup S_t}((i\rightarrow a) \otimes \phi^{(t-1)}|_{S_t})&\text{if}\quad i\in U_t,
\end{cases}
\end{align}
where ties are broken consistently with \Cref{eq:decoder}.

Crucially, $\phi^{(t)}$ depends only on the partial information $\{S_{t'}\}_{t' \in [t]}$ and the initial optimum $\phi^*$, and \emph{not} on the other sets $\{S_{t'}\}_{t' \in t+1,\ldots,T}$. This property will be useful for the concentration argument in the next section.

Moreover, $\phi^{(t)}$ is naturally related to the seed assignment $\sigma_t$, by setting $\sigma_t=\phi^{(t-1)}|_{S_t}$. While the extended assignment $D_\ccS(\bar\sigma)$ is difficult to analyze as it generates a complete assignment in ``one-shot'', working with the iterative construction of $\phi^{(t)}$ greatly simplifies our analyses. Indeed, this lets us bound the error between close assignments $\phi^{(t)}$ and $\phi^{(t-1)}$.

We formalize the connection between the deterministic extender, the seed, and the sequence of assignments $\phi^{(t)}$ in the following claim:

\begin{claim}\label{claim:recoversigma}
    The seed $\tilde \sigma = (\phi^{(t-1)}|_{S_t})_{t\in[T]}$ guarantees that
    \[
    D_\ccS(\tilde\sigma)|_U=\phi^{(T)}.
    \]
\end{claim}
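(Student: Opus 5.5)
The approach is to compare the two assignments variable by variable on $U$. Every $i\in U$ lies in exactly one block $U_t$ with $t=b(i)$, so it suffices to show that for each $t\in[T]$ and each $i\in U_t$ we have $D_\ccS(\tilde\sigma)_i=\phi^{(T)}_i$.

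First I would record a stability property of the iterative construction in \Cref{eq:seq_assign}. The only step that modifies variables of $U_t$ is step $t$, since at every step $t'\neq t$ we have $U_t\subseteq U_{-t'}$ and those variables are copied unchanged. Hence, by a trivial induction over $t'$, every $i\in U_t$ satisfies
\[
\phi^{(T)}_i=\phi^{(t)}_i\in\arg\max_{a\in[q]}\val^\bal_{\{i\}\cup S_t}\bigl((i\rightarrow a)\otimes\phi^{(t-1)}|_{S_t}\bigr).
\]

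Next I would unfold the definition of the extender in \Cref{eq:decoder} at the seed $\tilde\sigma$. Its $t$-th component is $\tilde\sigma_t=\phi^{(t-1)}|_{S_t}$, so for $i\in U_t\subseteq V_t$ we get
\[
D_\ccS(\tilde\sigma)_i\in\arg\max_{a\in[q]}\val^\bal_{\{i\}\cup S_t}\bigl((i\rightarrow a)\otimes\phi^{(t-1)}|_{S_t}\bigr).
\]
This is exactly the same maximization problem as in the display above: the same objective, over the same set $\{i\}\cup S_t$, with the same fixed partial assignment on $S_t$. This requires $S_t\subseteq U_{-t}$, so that $\phi^{(t-1)}|_{S_t}$ is well defined, and this holds by assumption.

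The only point needing care is tie-breaking. Since \Cref{eq:seq_assign} breaks ties consistently with \Cref{eq:decoder}, through the same deterministic rule applied to the same argmax set, the two selected labels coincide. Combining this with the stability property gives $D_\ccS(\tilde\sigma)_i=\phi^{(T)}_i$ for every $i\in U$, which is the claim.

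I expect no real obstacle. The main subtlety is checking that $\phi^{(t-1)}|_{S_t}$ is not changed after step $t-1$ in a way that matters; it is not, because the seed is defined to be exactly this frozen snapshot.
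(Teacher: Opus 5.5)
Your proposal is correct and follows the paper's own proof. For $i\in U_t$, the value $\phi^{(T)}_i=\phi^{(t)}_i$ is fixed at step $t$ as the consistently tie-broken maximizer of $\val^\bal_{\{i\}\cup S_t}((i\rightarrow a)\otimes\phi^{(t-1)}|_{S_t})$, which is exactly what the extender maximizes at $\tilde\sigma_t=\phi^{(t-1)}|_{S_t}$. Your explicit remarks on tie-breaking and on $S_t\subseteq U_{-t}$ spell out details the paper leaves implicit.
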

\begin{proof}
    Let $i \in U$ and $t = b(i)$. By definition, the value of $\phi^{(T)}_i$ is set at step $t$ and remains unchanged thereafter. Specifically, $\phi^{(T)}_i = \phi^{(t)}_i \in \arg\max_{a\in [q]}\val^\bal_{\{i\}\cup S_t}((i\rightarrow a) \otimes \phi^{(t-1)}|_{S_t})$. This is exactly $\arg\max_{a\in[q] }\val^\bal_{\{i\}\cup S_t}((i\rightarrow a) \otimes \tilde \sigma_t)$ for $\tilde \sigma_t= \phi^{(t-1)}|_{S_t}$. 
\end{proof}

Now, we characterize the suboptimality of the assignment $\phi^{(T)}$. In particular, we show that the total error is the sum of $T$ error components, each related to how well the assignment over the small seed $S_t$ approximates the assignment over the entire $U_{-t}$.

\begin{lemma} \label{lm:dependentNet}
    Given $U\subseteq [n]$ and $\ccS = (S_t)_{t \in [T]}$ with $S_t \subseteq U_{-t}$, there exists a seed $\tilde \sigma=(\tilde \sigma_t)_{t \in [T]}$ and a sequence of assignments $\phi^{(t)}$ such that $\phi^{(t)}$ depends only on $\{S_{t'}\}_{t'\in [t]}$ and $U$ such that
    \[
    \OPT^\bal(U) - \val_{U}^\bal(D_\ccS(\tilde \sigma)) \le \sum_{t\in[T]} 2\frac{|U_{-t}|^{k-1}}{|U|^k}\sum_{i\in U_t}\max_{a\in[q]}\left|\frac{H_{i, U_{-t}}(a, \phi^{(t-1)}\mid_{U_{-t}})}{|U_{-t}|^{k-1}}-\frac{ H_{i, S_t}(a,\phi^{(t-1)}\mid_{S_t})}{|S_{t}|^{k-1}}\right|.
    \]
\end{lemma}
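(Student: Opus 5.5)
We take $\tilde\sigma=(\phi^{(t-1)}|_{S_t})_{t\in[T]}$, with $\phi^{(0)}=\phi^*$ and $\phi^{(t)}$ defined as in \Cref{eq:seq_assign}. By \Cref{claim:recoversigma}, $D_\ccS(\tilde\sigma)|_U=\phi^{(T)}$. Since $\val^\bal_U$ only depends on the restriction to $U$, this gives $\val^\bal_U(D_\ccS(\tilde\sigma))=\val^\bal_U(\phi^{(T)})$. The dependence property is immediate by induction: $\phi^{(t)}$ is built from $\phi^*$, which depends only on $U$, using only $S_1,\dots,S_t$. We then telescope,
\[
\OPT^\bal(U)-\val^\bal_U(\phi^{(T)})=\sum_{t\in[T]}\left(\val^\bal_U(\phi^{(t-1)})-\val^\bal_U(\phi^{(t)})\right),
\]
and bound each summand separately.

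Fix $t$. The assignments $\phi^{(t-1)}$ and $\phi^{(t)}$ agree on $U_{-t}$, so by \Cref{lem:decomposition} the first term cancels, leaving
\[
\val^\bal_U(\phi^{(t-1)})-\val^\bal_U(\phi^{(t)})=\frac{1}{|U|^k}\sum_{i\in U_t}\left(H_{i,U_{-t}}(\phi^{(t-1)}_i,\phi^{(t-1)})-H_{i,U_{-t}}(\phi^{(t)}_i,\phi^{(t-1)})\right).
\]
Here we use that $H_{i,U_{-t}}(a,\psi)$ depends on $\psi$ only through $\psi|_{U_{-t}}$, since balanced constraints contain no other variable of block $t$. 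For each $i$, write $a^{\rm old}=\phi^{(t-1)}_i$ and $a^{\rm new}=\phi^{(t)}_i$. Next we relate the proxy objective in \Cref{eq:seq_assign} to $H_{i,S_t}$. Since $i\notin S_t$ and every balanced constraint in $\{i\}\cup S_t$ either contains $i$ or lies inside $S_t$, we get
\[
\val^\bal_{\{i\}\cup S_t}\big((i\to a)\otimes\sigma\big)=c+\frac{H_{i,S_t}(a,\sigma)}{(|S_t|+1)^k},
\]
where $c$ does not depend on $a$. Hence $a^{\rm new}$ maximizes $H_{i,S_t}(\cdot,\phi^{(t-1)}|_{S_t})$, and so it also maximizes $h(a):=H_{i,S_t}(a,\cdot)/|S_t|^{k-1}$. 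Set $g(a):=H_{i,U_{-t}}(a,\cdot)/|U_{-t}|^{k-1}$. The standard argmax argument then gives
\[
g(a^{\rm old})-g(a^{\rm new})\le \big(g(a^{\rm old})-h(a^{\rm old})\big)+\big(h(a^{\rm new})-g(a^{\rm new})\big)\le 2\max_a|g(a)-h(a)|.
\]
Multiplying by $|U_{-t}|^{k-1}/|U|^k$ and summing over $i\in U_t$ and then over $t$ yields the stated bound.

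The only delicate step is the identification of the proxy objective with $H_{i,S_t}$ up to an $a$-independent additive constant and a positive multiplicative constant. This requires checking two things. First, the restriction of $\val^\bal$ to $\{i\}\cup S_t$ decomposes into constraints through $i$ plus constraints avoiding $i$, which is \Cref{lem:decomposition}-type reasoning with the single-variable block $\{i\}$. Second, $\cM_i(\{i\}\cup S_t)$ is exactly the index set of $H_{i,S_t}$. Tie-breaking plays no role, since any maximizer suffices. The normalization by $|S_t|^{k-1}$ rather than $(|S_t|+1)^k$ does not matter, because rescaling preserves the argmax.
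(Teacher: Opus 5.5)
Your proposal is correct and follows the paper's proof essentially step for step: the same seed $\tilde\sigma=(\phi^{(t-1)}|_{S_t})_{t\in[T]}$ via \Cref{claim:recoversigma}, the same telescoping sum, the cancellation from \Cref{lem:decomposition}, and the same three-term bound using that $\phi^{(t)}_i$ maximizes the $S_t$-proxy. Your explicit check that $\val^\bal_{\{i\}\cup S_t}((i\to a)\otimes\sigma)$ equals $H_{i,S_t}(a,\sigma)/(|S_t|+1)^k$ plus an $a$-independent constant spells out what the paper obtains by citing \Cref{eq:optimality}; likewise, your remark that $H_{i,U_{-t}}(a,\cdot)$ depends only on the restriction to $U_{-t}$ justifies the paper's interchangeable use of $\phi^{(t)}$ and $\phi^{(t-1)}$ there.
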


\begin{proof}

Consider the sequence of assignments $\phi^{(0)},\ldots, \phi^{(T)}$ defined in \Cref{eq:seq_assign}.

First, we observe that $\phi^{(t)}$ depends only on $\{S_{t'}\}_{t'\in [t]}$ and $U$ (which also defines the optimal assignment $\phi^*$).
Moreover, by \Cref{claim:recoversigma}, there exists a $\tilde \sigma$ such that:
\begin{align*}
\val^\bal_U(D_\ccS(\tilde\sigma))=\val^\bal_U(\phi^{(T)}).
\end{align*}
Hence, we are left to upper bound:
\[\OPT^\bal(U) - \val_{U}^\bal(\phi^{(T)}).\]
Let the error at step $t$ be 
\[\mathrm{err}(t) = \val^\bal_U(\phi^{(t-1)}) - \val^\bal_U(\phi^{(t)}).\]
Then, the total suboptimality gap is given by the following telescoping sum:
\[\OPT^\bal(U) - \val_U^\bal(\phi^{(T)})= \val_U^{\bal}(\phi^{(0)})-\val_U^{\bal}(\phi^{(T)}) = \sum_{t=1}^T \mathrm{err}(t). \]

Now, we want to bound the errors $\mathrm{err}(t)$ as a function of how well $H_{i,S_t}$ approximates $H_{i,U_{-t}}$. 
Notice that the crucial difference between the functions $H_{i,S_t}$ and $H_{i,U_{-t}}$ is the set of variables $i$ interacts with. Moreover, note that $\phi^{(t)}_i=\phi^{(t-1)}_i$ for all $i \in U_{-t}$.
Thus, thanks to \Cref{lem:decomposition}, we get
\begin{align*}
    \mathrm{err}(t)=\val^{\bal}_U(\phi^{(t-1)})-\val^{\bal}_U(\phi^{(t)})&=\frac{1}{|U|^k}\sum_{i\in U_t}\left( H_{i, U_{-t}}(\phi^{(t-1)}_i, \phi^{(t)})-H_{i, U_{-t}}(\phi^{(t)}_i , \phi^{(t)})\right),
\end{align*}
Now, fix a variable $i \in U_t$. For each $a\in [q]$, let 
\[A(a):= \frac{H_{i, U_{-t}}(a,\phi^{(t)})}{|U_{-t}|^{k-1}} \quad \textnormal{and}\quad  \hat A(a):=\frac{ H_{i, S_t}(a,\phi^{(t)})}{|S_{t}|^{k-1}}.\] 
Then:
\begin{align*}
    A(\phi^{(t-1)}_i)-A(\phi^{(t)}_i)&=\left(A(\phi^{(t-1)}_i)-\hat A(\phi^{(t-1)}_i)\right)+\left(\hat A(\phi^{(t-1)}_i)-\hat A(\phi_i^{(t)})\right)+\left(\hat A(\phi_i^{(t)})-A(\phi^{(t)}_i)\right)\\
    &\le |A(\phi^{(t-1)}_i)-\hat A(\phi^{(t-1)}_i)|+|\hat A(\phi_i^{(t)})-A(\phi^{(t)}_i)|\\
    &\le 2\max_{a\in[q]}|A(a)-\hat A(a)|,
\end{align*}
where the first inequality follows since $\phi^{(t)}_i$ is chosen specifically to maximize $\hat{A}$ (see \Cref{eq:optimality} and \Cref{eq:seq_assign}).

Then, we conclude that:

\[
\mathrm{err}(t)\le 2\frac{|U_{-t}|^{k-1}}{|U|^k}\sum_{i\in U_t}\max_{a\in[q]}\left|\frac{H_{i,U_{-t}}(a, \phi^{(t-1)})}{|U_{-t}|^{k-1}}-\frac{ H_{i,S_{t}}(a,\phi^{(t-1)})}{|S_{t}|^{k-1}}\right|.
\]

Summing over all the $t\in [T]$, we get the desired result.
\end{proof}

\subsection{From Deterministic to Randomized $\ccS$}\label{sec:randomized}

In the previous section, we established a deterministic bound on the suboptimality of the candidate set $\Psi(\ccS)$. We now show that if $U$ and $\ccS$ are sampled from the joint distribution $\gamma$ described at the beginning of \Cref{sec:structure}, this estimation error is small with high probability.

\begin{lemma}\label{lm:random}
    Let $(U,\ccS)\sim\gamma$, then with probability at least $1-2qN \exp\left(-\frac{\varepsilon^2 s}{2(k-1)^2}\right)-\frac{2Ts^2}{N}$, it holds:
    \[ 
    \OPT^\bal(U)- \max_{\phi \in \Psi(\ccS)}  \val^{\bal}_U(\phi) \le 2k\left(\varepsilon+\frac{(k-1)^2}{s}\right).
    \]
\end{lemma}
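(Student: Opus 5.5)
Starting from \Cref{lm:dependentNet}: since $D_\ccS(\tilde\sigma)\in\Psi(\ccS)$, we have $\max_{\phi\in\Psi(\ccS)}\val^\bal_U(\phi)\ge \val^\bal_U(D_\ccS(\tilde\sigma))$. It therefore suffices to bound, for every $t$ and every $i\in U_t$, the quantity
\[
\max_{a\in[q]}\Bigl|\tfrac{H_{i,U_{-t}}(a,\phi^{(t-1)})}{|U_{-t}|^{k-1}}-\tfrac{H_{i,S_t}(a,\phi^{(t-1)})}{|S_t|^{k-1}}\Bigr| \le k\Bigl(\varepsilon+\tfrac{(k-1)^2}{s}\Bigr)
\]
with high probability. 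Summing gives $\sum_t 2\frac{|U_{-t}|^{k-1}|U_t|}{|U|^k}\cdot k(\ldots)\le 2k(\varepsilon+\frac{(k-1)^2}{s})$, because $\sum_t|U_t|=|U|$ and $|U_{-t}|\le|U|$.

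\textbf{Step 1: reduce to the concentration lemma.} Fix $t$, $i\in U_t$ and $a\in[q]$. Condition on $U$ and on $S_1,\dots,S_{t-1}$. By \Cref{lm:dependentNet}, $\phi^{(t-1)}$ is then fixed, and $S_t$ is still a uniform $s$-subset of $U_{-t}$, independent of $\phi^{(t-1)}$. This independence is the crucial point. Define a derived CSP on the variable set $U_{-t}$ by fixing $i\to a$. Each balanced constraint $w\ni i$ yields a constraint on the $(k-1)$-tuple $w\setminus\{i\}$, but $i$ may occupy any of $k$ positions. I therefore split $H$ into at most $k$ derived $(k-1)$-CSPs, one per position $u$ of $i$. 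Each has a $[0,1]$-valued predicate per $(k-1)$-tuple, normalized by $|U_{-t}|^{k-1}$. The term $H_{i,U_{-t}}/|U_{-t}|^{k-1}$ is the sum of their full values $\val(\phi^{(t-1)})$, and $H_{i,S_t}/|S_t|^{k-1}$ is the sum of their subsampled values $\val_{S_t}$.

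\textbf{Step 2: apply concentration with sampling.} Apply \Cref{lem:concentration} to each derived CSP with arity $k-1$, $n=|U_{-t}|$ and $N=s$. This gives error $\varepsilon+\frac{(k-1)^2}{s}+\mathbb{I}[\text{collision}]$ with failure probability $2\exp(-\varepsilon^2 s/(2(k-1)^2))$. Summing over the $k$ positions gives the factor $k$. A cleaner alternative is to treat $H$ directly as one derived CSP whose predicate is the sum over positions, with values in $[0,k]$. The McDiarmid constant then stays $(k-1)$ after rescaling, which I expect matches the stated exponent.

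To handle $S_t$ being sampled without replacement, couple it with sampling $s$ points with replacement. The two agree unless a collision occurs, which has probability at most $s^2/|U_{-t}|\le 2s^2/N$ (using $|U_{-t}|\ge N/2$ when $T\ge2$). A union bound over $t\in[T]$ gives the $\frac{2Ts^2}{N}$ term.

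\textbf{Step 3: union bound.} Take a union bound over all pairs $(i,a)$ with $i\in U$ and $a\in[q]$, which is $qN$ events. This gives total failure probability at most $2qN\exp(-\varepsilon^2 s/(2(k-1)^2))+2Ts^2/N$.

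\textbf{Main obstacle.} The main obstacle is the conditioning argument in Step 1. The assignment $\phi^{(t-1)}$ depends on $\phi^*$, which depends on $U$, and on $S_{<t}$. The union bound must therefore be organized so that each concentration event is evaluated with $S_t$ fresh relative to a fixed assignment. I would condition sequentially in $t$ and apply the tower rule.

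A secondary point is the normalization mismatch. \Cref{lem:concentration} normalizes by $n^{k-1}$ over tuples of distinct variables, and the derived constraints must be restricted to balanced tuples inside $U_{-t}$. I would simply define predicates to be $0$ on non-balanced tuples, so the lemma applies verbatim.
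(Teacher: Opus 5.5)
Your proposal is correct and follows the paper's proof essentially step by step. You condition on $U$ and $S_1,\dots,S_{t-1}$ so that $\phi^{(t-1)}$ is fixed, and treat $H_{i,\cdot}(a,\phi^{(t-1)})/k$ as a $[0,1]$-valued $(k-1)$-CSP on $U_{-t}$. You then apply \Cref{lem:concentration} to a with-replacement sample, pass to $S_t$ through the collision coupling at cost $s^2/|U_{-t}|\le 2s^2/N$ per block, and union bound over $(i,a)$ and $t$.

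One thing to fix: commit to your ``cleaner alternative'', the single rescaled CSP, which is exactly what the paper does. The per-position split needs an extra union bound over the $k$ positions, so it gives a failure probability of $2kqN\exp(\cdot)$ rather than the stated $2qN\exp(\cdot)$.
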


\begin{proof}
    Consider the sequence of assignments $\phi^{(0)}, \dots, \phi^{(T)}$ in \Cref{lm:dependentNet}.  Note that for a fixed $t$, the assignment $\phi^{(t-1)}$ is determined by the realizations of $U$ and the previous seeds $\{S_{t'}\}_{t' \in [t-1]}$.

    For every $t\in[T],i\in U_t,a\in[q]$, define:

    \[
    v^t_1(i,a) = \frac{H_{i,U_{-t}}(a, \phi^{(t-1)})}{k|U_{-t}|^{k-1}},\quad\text{and} \quad v^t_2(i,a) = \frac{H_{i,S_t}(a, \phi^{(t-1)})}{ks^{k-1}}.
    \]
    Notice that these quantities depend only on $U$ and $\{S_{t'}\}_{t'\in [t]}$.
    We observe that $v^t_1(i,a)$ is the value of a $(k-1)$-arity CSP on the variable set $U_{-t}$ evaluated at $\phi^{(t-1)}|_{U_{-t}}$, where the set of constraints is $\cM_i(\{i\}\cup U_{-t})$ (and so each constraint in the new CSP will be of range $[0,k]$, thus the normalizing factor $1/k$ in the definition of the values). Notice that the arity is $(k-1)$ since the value of the variable $i$ (which appears in all the constraints) is fixed to $a$.
    Moreover, $v^t_2(i,a)$ is the value of that same CSP restricted to constraints involving the subset of variables $S_t$.
 
    Here, we would like to apply our concentration lemma. However, $S_t$ is sampled \emph{without} replacement from $U_{-t}$.
    Hence, to apply the lemma, we first consider sampling a set $\tilde{S}_t$ of $s$ variables with replacement from $U_{-t}$, and define $\tilde{v}_2^t(i,a)$  as the value of the CSP on the subset of variables $\tilde S_t$, i.e.,
    \[
    \tilde v_2^t(i,a):=\frac{H_{i,\tilde S_t}(a,\phi^{(t-1)})}{k|\tilde S_t|^{k-1}}.
    \]

    Let $\delta=2 \exp\left(-\frac{\varepsilon^2 s}{2(k-1)^2}\right)$.
    By the concentration lemma (\Cref{lem:concentration}), for a fixed $(i,a)$, with probability at least $1-\delta$, we have:
    \[
    |v_1^t(i,a)-\tilde v_2^t(i,a)|\le \varepsilon+\frac{(k-1)^2}{s}+\mathbb{I}(|\tilde S_t|\neq s).
    \]
    By taking a union bound over all $i \in U_t$ and $a \in [q]$, we get that the event
    \[
    \tilde B_t:=\left\{\max_{i\in U_t,a\in [q]}\left|v_1^t(i,a)-\tilde v_2^t(i,a)\right|>\varepsilon+\frac{(k-1)^2}{s}+\mathbb{I}(|\tilde S_t|\neq s)\right\}
    \]
    holds with probability less than $\delta q|U_t|$ (again conditioned on the realizations of $U$ and $\{S_{t'}\}_{t'\in[t-1]}$). 
    
    Now, by \Cref{clm:withToWithout}, we extend the previous high-probability event to sampling without replacement and obtain that
    \[B_t:=\left\{\max_{i\in U_{t},a\in[q]}|v^t_1(i,a)-v^t_2(i,a)|>\varepsilon+\frac{(k-1)^2}{s}\right\}\] holds with probability at most $\delta q |U_{t}| + \frac{s^2}{|U_{-t}|}\le \delta q |U_{t}| + \frac{2s^2}{N}$ over the set $S_t$. Notice that $|U_{-t}|\ge N/2$ holds deterministically by our sampling choice of $U$.
    Then, by towering, we get 
    \[
    \mathbb{P}(\cup_{t\in[T]} B_t)\le \sum_{t\in[T]} \left(\delta q |U_{t}|+\frac{2s^2}{N}\right)= qN\delta + \frac{2Ts^2}{N},
    \]
    conditioned on the realization of $U$.
    
    Thus, conditioned on the realization of $U$, with probability at least $1-qN\delta-\frac{2Ts^2}{N}$, we obtain:
    \begin{align*}
    \OPT^\bal(U)- \max_{\phi \in \Psi(\ccS)}  \val^{\bal}_U(\phi)\le&\sum_{t\in[T]} 2k\frac{|U_{-t}|^{k-1}}{|U|^k}\sum_{i\in U_t}\max_{a\in[q]}\left|v_1^t(i,a)-v_2^t(i,a)\right|\tag{\Cref{lm:dependentNet}}\\
    &\le 2k\left(\varepsilon+\frac{(k-1)^2}{s}\right)\sum_{t\in[T]} \frac{|U_{-t}|^{k-1}|U_t|}{|U|^k}\\
    &\le 2k\left(\varepsilon+\frac{(k-1)^2}{s}\right),
    \end{align*}
    where in the last inequality we used that     
    \[
    \sum_{t=1}^T\frac{|U_{-t}|^{k-1}|U_t|}{|U|^k}=\sum_{t=1}^T\frac{(|U|-|U_t|)^{k-1}|U_t|}{|U|^k}\le \frac{|U|^{k-1}}{|U|^k}\sum_{t=1}^T |U_t|=1.
    \]
    Noticing that this holds uniformly over every realization of $U$ concludes the proof.
\end{proof}

\subsection{Derandomization}\label{sec:derandom}

In this section, we apply an averaging argument to derandomize the result of \Cref{lm:random}. Specifically, we prove the existence of a fixed $\ccS^*$ such that the approximation guarantees derived in the previous section hold in expectation over the conditional distribution $U \sim \gamma_{\ccS^*}$.

It will be useful to introduce an equivalent definition of $\gamma_{\ccS}$. For any subset $A \subseteq [n]$ such that $1 \le |A \cap V_t| \le N/T$ for all $t \in [T]$, let $\nu_A$ denote the law of a random set $U$ constructed as follows: for each block $V_t$, sample uniformly at random a subset $U_t$ of size $N/T$ such that $A \cap V_t \subseteq U_t$. We then define $U = \bigcup_{t \in [T]} U_t$. The following claim (whose proof is deferred to \Cref{app:omitted}) shows an equivalence between $\gamma_{\ccS}$ and $\nu_A$ for a specific set $A$.

\begin{restatable}{claim}{equivalencelaws}\label{claim:equivalencelaws}
    Consider any tuple $\ccS=(S_t)_{t\in[T]}$ and let $S=\bigcup_{t\in[T]}S_t$. Then the conditional law $\gamma_\ccS$ is identical to the law $\nu_S$.
\end{restatable}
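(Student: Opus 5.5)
The plan is to compute both laws explicitly and check that they coincide. I will show that, for every admissible $U$, the probability $\gamma(U \mid \ccS)$ is proportional to $\mathbb{I}[S_t\subseteq U_{-t}\ \forall t]$, and then identify this indicator with the support condition of $\nu_S$. Recall how $\gamma$ is built. First $U=\bigcup_t U_t$ is drawn, with each $U_t$ a uniform $(N/T)$-subset of $V_t$, drawn independently across blocks; this is the law $\nu$. Then, given $U$, each $S_t$ is a uniform $s$-subset of $U\setminus U_t$. I will assume the $S_t$ are drawn independently across $t$ given $U$, which is the natural reading of the construction. The joint probability therefore factorizes as
\[
\gamma(U,\ccS)=\nu(U)\prod_{t\in[T]}\frac{\mathbb{I}[S_t\subseteq U_{-t}]}{\binom{|U_{-t}|}{s}} .
\]

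The key observation is that every factor other than the indicators is constant in $U$. Under $\nu$ we have $|U_t|=N/T$ deterministically, so $|U_{-t}|=N-N/T$ for every $t$. Also, $\nu(U)$ is uniform over all admissible $U$, namely those with $|U\cap V_t|=N/T$ for each $t$. Hence the conditional law $\gamma_\ccS(U)$ is proportional to $\prod_t \mathbb{I}[S_t\subseteq U_{-t}]$, normalized over admissible $U$. This makes $\gamma_\ccS$ uniform on the admissible sets $U$ satisfying $S_t\subseteq U\setminus U_t$ for all $t$.

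It remains to match this support with that of $\nu_S$. By definition, $\nu_S$ is uniform, independently per block, over subsets $U_t\subseteq V_t$ of size $N/T$ containing $S\cap V_t$. So it is uniform over admissible $U$ with $S\subseteq U$. I will show that the two conditions agree.

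\begin{itemize}
\item If $S_t\subseteq U_{-t}$ for all $t$, then $S=\bigcup_t S_t\subseteq U$.
\item Conversely, suppose $S\subseteq U$. Since $S_t\subseteq V\setminus V_t$ (a standing requirement on seeds, stated in \Cref{sec:buildingPsi}), we get $S_t\subseteq U\setminus V_t=U\setminus U_t=U_{-t}$.
\end{itemize}

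The two supports therefore coincide, and both laws are uniform on them, so $\gamma_\ccS=\nu_S$. A product of uniform per-block laws is uniform on the product support, which settles the per-block structure of $\nu_S$.

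The main obstacle is bookkeeping rather than any deep step, and it has two parts. The first is confirming that the normalizers $\binom{|U_{-t}|}{s}$ really do not depend on $U$; this rests entirely on the block sizes $|U_t|$ being fixed at $N/T$. The second is handling the degenerate case. The hypothesis $1\le|A\cap V_t|$ in the definition of $\nu_A$ may fail for $A=S$ in some block. I would note that the argument never uses this lower bound, so the definition of $\nu_A$ extends verbatim to $|A\cap V_t|=0$. The upper bound $|S\cap V_t|\le N/T$ is needed only for $\ccS$ to have positive probability, and whenever $\gamma_\ccS$ is defined it follows from $S\subseteq U$.
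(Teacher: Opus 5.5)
Your proposal is correct and follows essentially the same route as the paper: factor $\mathbb{P}(\ccS\mid U)$ over blocks, note that the binomial normalizers are constant because $|U_t|=N/T$ deterministically, use $S_t\cap V_t=\emptyset$ to rewrite $\mathbb{I}[S_t\subseteq U_{-t}\ \forall t]$ as $\mathbb{I}[S\subseteq U]$, and conclude by Bayes that $\gamma_\ccS(U)\propto \nu(U)\,\mathbb{I}[S\subseteq U]$, which is $\nu_S$. Your extra remarks make explicit some bookkeeping that the paper leaves implicit: that $\nu$ and $\nu_S$ are uniform on their product supports, and how to handle the degenerate case $S\cap V_t=\emptyset$ and the upper bound $|S\cap V_t|\le N/T$.
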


Given a tuple $\ccS^* = (S^*_t)_{t \in [T]}$, let $S^* = \bigcup_{t \in [T]} S^*_t$ be its union. By invoking the equivalence established in \Cref{claim:equivalencelaws}, we can treat $\gamma_{\ccS^*}$ and $\nu_{S^*}$ interchangeably, omitting the explicit dependence of $S^*$ on $\ccS^*$ for convenience.

\begin{lemma} \label{lm:fixedS}
    There exists a tuple $\ccS^*=(S_t^*)_{t\in [T]}$, where $|S^*_t|=s$ for all $t\in[T]$, such that
    \begin{align*}%
        \mathbb{E}_{U\sim \nu_{S^*}}[\OPT^\bal(U)]
        \le
        \mathbb{E}_{U\sim \nu_{S^*}}\left[\max_{\phi\in \Psi(\ccS^*)} \val_U^{\bal}(\phi)\right]
        + \mathcal{E}_1(\varepsilon,N,s,T),
    \end{align*}
    where $\mathcal{E}_1(\varepsilon,N,s,T)=2k\left(\varepsilon+\frac{(k-1)^2}{s}\right)+ 2qN\exp\left(-\frac{\varepsilon^2 s}{2(k-1)^2}\right)+ \frac{2Ts^2}{N} $.
\end{lemma}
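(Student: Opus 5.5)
The proof of \Cref{lm:fixedS} is an averaging argument on top of \Cref{lm:random}. Define the gap random variable
\[
G(U,\ccS) := \OPT^\bal(U) - \max_{\phi\in\Psi(\ccS)} \val^\bal_U(\phi).
\]
The plan is to bound $\mathbb{E}_{(U,\ccS)\sim\gamma}[G(U,\ccS)]$, then pick a realization $\ccS^*$ whose conditional expectation is at most this average.

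\textbf{Step 1: bound the average gap.} $G$ takes values in $[-1,1]$, since both $\OPT^\bal(U)$ and $\val^\bal_U(\phi)$ lie in $[0,1]$ under the $|U|^k$ normalization. It is also nonnegative pointwise, because $\OPT^\bal(U)$ maximizes over all assignments, although we do not need this. Split on the good event $E$ of \Cref{lm:random}. On $E$ we have $G \le 2k(\varepsilon + (k-1)^2/s)$. On $E^c$ we use $G \le 1$, and the probability of $E^c$ is at most $2qN\exp(-\varepsilon^2 s/(2(k-1)^2)) + 2Ts^2/N$. This gives
\[
\mathbb{E}_\gamma[G] \le 2k\left(\varepsilon+\frac{(k-1)^2}{s}\right)+2qN\exp\left(-\frac{\varepsilon^2 s}{2(k-1)^2}\right)+\frac{2Ts^2}{N} = \mathcal{E}_1(\varepsilon,N,s,T).
\]

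\textbf{Step 2: choose $\ccS^*$ by averaging.} By the tower rule, $\mathbb{E}_\gamma[G] = \mathbb{E}_{\ccS}\big[\mathbb{E}_{U\sim\gamma_\ccS}[G(U,\ccS)]\big]$, where the outer expectation is over the marginal of $\ccS$ under $\gamma$. Some realization $\ccS^*$ in the support therefore satisfies $\mathbb{E}_{U\sim\gamma_{\ccS^*}}[G(U,\ccS^*)] \le \mathcal{E}_1$. Each $S^*_t$ is sampled as a set of size $s$, so $|S^*_t| = s$ holds automatically. Fixing $\ccS^*$ also fixes the family $\Psi(\ccS^*)$. Linearity of expectation then gives the displayed inequality with $U\sim\gamma_{\ccS^*}$. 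Finally, \Cref{claim:equivalencelaws} identifies $\gamma_{\ccS^*}$ with $\nu_{S^*}$, which yields the statement.

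\textbf{Main obstacle.} The only delicate point is that \Cref{lm:random} must be applied over the joint law $\gamma$, not conditionally on a fixed $\ccS$. This holds, since that lemma's probability bound was derived conditionally on each realization of $U$ and averaged over the seeds. We also need the conditional law $\gamma_{\ccS^*}$ to be well defined, which holds because $\ccS^*$ is taken from the support. Everything else is bookkeeping.
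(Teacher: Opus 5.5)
Your proposal is correct and follows essentially the same averaging argument as the paper. The only difference is the order of steps. You first bound $\mathbb{E}_\gamma[G]\le\mathcal{E}_1$ and then pick $\ccS^*$ by averaging the conditional expectation. The paper first picks $\ccS^*$ so that the good event of \Cref{lm:random} has conditional probability at least $1-\delta$, then converts this to an expectation bound using $0\le G\le 1$; both routes finish by invoking \Cref{claim:equivalencelaws}.
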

\begin{proof}
    Let $\alpha:=2k\left(\varepsilon+\frac{(k-1)^2}{s}\right)$ and $\delta:=2qN\exp\left(-\frac{\varepsilon^2 s}{2(k-1)^2}\right)+\frac{2Ts^2}{N}$.
    Define
    \[
    A(\ccS,U):=
    \mathbb{I}\left[
    \OPT^\bal(U)-\max_{\phi\in\Psi(\ccS)}\val_U^\bal(\phi)\le \alpha
    \right].
    \]
    By \Cref{lm:random}, we know that under the joint law of $(U,\ccS)$, $\mathbb{P}_{(U,\ccS)\sim\gamma}(A(\ccS,U)=1)\ge 1-\delta$. 
    
    By a standard averaging argument, there must exist at least one realization $\ccS^*$ such that:
            \[
            \mathbb{P}_{U\sim \gamma_{\ccS^*}}(A(\ccS^*,U)=1)\ge 1-\delta,
            \]
    where $\gamma_{\ccS}$ is the conditioning of the distribution in \Cref{lm:random} to the realization of $\ccS$.
    
    Since $0\le\OPT^\bal(U)-\max_{\phi\in\Psi(\ccS)}\val_U^\bal(\phi)\le 1$, by taking the expectation we get
    \[
    \mathbb{E}_{U\sim \gamma_{\ccS^*}}
    \left[
    \OPT^\bal(U)-\max_{\phi\in\Psi(\ccS^*)}\val_U^\bal(\phi)
    \right]
    \le \alpha+\delta.
    \]
    Finally, by \Cref{claim:equivalencelaws}, the law $\gamma_{\ccS^*}$ is equal to the one of $\nu_{S^*}$, with $S^*=\bigcup_{t\in[T]} S_t^*$, concluding the proof.
\end{proof}

\subsection{Turn $\nu_{S^*}$ into a Uniform Law}\label{sec:uniformize}

Now we would like to obtain a result similar to \Cref{lm:fixedS} for a uniform subset of variables $U$. Formally, we will show that:

\begin{lemma}\label{lm:finalUniform}
    There exists a set of assignments $\Psi\subseteq [q]^n$ of size $|\Psi|=q^{s T}$ such that
    \[
    \mathbb{E}_{U\sim \mu}[\OPT^\bal(U)]\le
    \mathbb{E}_{U\sim \mu}\left[\max_{\phi\in\Psi} \val_U^{\bal}(\phi)\right]
    + \mathcal{E}_1(\varepsilon,N,s,T)+  \mathcal{E}_2(\varepsilon,N,s,T),
    \]
    where $\mathcal{E}_2(\varepsilon,N,s,T)=\frac{2sTk}{N}+ k \sqrt{\frac{T}{N}}$ and $\mathcal{E}_1$ is defined as per \Cref{lm:fixedS}.
\end{lemma}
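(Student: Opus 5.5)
The plan is to take $\Psi:=\Psi(\ccS^*)$ with $\ccS^*$ the tuple given by \Cref{lm:fixedS}. Since $\Psi(\ccS^*)$ has one assignment per seed in $\bigtimes_t[q]^{S^*_t}$, we get $|\Psi|=q^{sT}$. The statement then reduces to showing that, for any fixed family $\Phi\subseteq[q]^n$, replacing $\nu_{S^*}$ by $\mu$ changes $\mathbb{E}[\max_{\phi\in\Phi}\val^\bal_U(\phi)]$, and likewise $\mathbb{E}[\OPT^\bal(U)]$ (the case $\Phi=[q]^n$), by a total of at most $\mathcal{E}_2$. We use the bound in the direction $\mathbb{E}_\mu[\OPT^\bal]\le \mathbb{E}_{\nu_{S^*}}[\OPT^\bal]+\cdot$ and $\mathbb{E}_{\nu_{S^*}}[\max_\Psi]\le\mathbb{E}_\mu[\max_\Psi]+\cdot$, and chain the two with \Cref{lm:fixedS}. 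We pass through $\nu$, the block-stratified law without a planted $S^*$, in two hops: first $\nu_{S^*}$ to $\nu$, then $\nu$ to $\mu$.

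\textbf{Step 1: from $\nu_{S^*}$ to $\nu$.} Couple the two laws by sampling $U\sim\nu$ and then, in each block, swapping the elements of $S^*\cap V_t$ into $U_t$ in place of randomly chosen elements of $U_t\setminus S^*$. Under this coupling each $U_t$ has the correct conditional law, since it is uniform among $(N/T)$-subsets containing $S^*\cap V_t$. The coupled sets $U,U'$ have the same size $N$ and differ in at most $|S^*|\le sT$ elements. Changing one variable of $U$ affects only the constraints containing it, and there are at most $kN^{k-1}$ of these, each normalized by $N^k$. Hence every $\val^\bal_U(\phi)$ moves by at most $k/N$ per swapped element. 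Because the bound is uniform in $\phi$, the same holds for $\max_{\phi\in\Phi}$, giving an error of $sTk/N$. Applying this once for each of the two directions above gives the $2sTk/N$ term.

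\textbf{Step 2: from $\nu$ to $\mu$.} This is the main obstacle. Under $\mu$ the block counts $|U\cap V_t|$ fluctuate around $N/T$, whereas $\nu$ fixes them exactly. Conditioned on the count vector $(c_t)_t$, $\mu$ is uniform within blocks, exactly like $\nu$. I would couple a $\mu$-sample to a $\nu$-sample by adding or removing uniformly random elements within each block to adjust $c_t$ to $N/T$. The number of moved elements is $\sum_t|c_t-N/T|$. Each $c_t$ is hypergeometric with variance at most $N/T$, so by Cauchy–Schwarz $\mathbb{E}\sum_t|c_t-N/T|\le T\sqrt{N/T}=\sqrt{TN}$. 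The per-element cost is again $k/N$, which includes controlling the change of the normalization $|U|^k$ when sizes differ. This yields $k\sqrt{T/N}$, and because the cost is additive per moved element, a single such term suffices after rearranging so that only one hop is charged twice. The delicate points are that block sizes are only approximately $n/T$, so the within-block conditional law must be checked to match, and that the max and OPT are $k/N$-Lipschitz under symmetric difference. Both should follow directly from the counting argument used in \Cref{lem:concentration1}.
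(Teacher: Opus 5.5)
Your architecture is the paper's. You take $\Psi=\Psi(\ccS^*)$ from \Cref{lm:fixedS}, hop from $\nu_{S^*}$ to $\nu$ at cost $sTk/N$ per use, and hop from $\nu$ to $\mu$ via the block-count fluctuations and the hypergeometric variance. Each hop is used once for $\OPT^\bal$ and once for $\OPT^\bal_\Psi$. Your Step 1 coupling, which swaps $S^*\cap V_t$ into $U_t$ against uniformly chosen elements of $U_t\setminus S^*$, gives the right law by symmetry. It is a valid direct replacement for the sequential conditioning in \Cref{lem:balancedUcontainsi,lm:toBalanced}: it uses at most $sT$ swaps, each costing $k/N$ by \Cref{lm:lip}, so $2sTk/N$ is correct.

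The problem is the accounting in Step 2. You charge $k/N$ per added \emph{or} removed element against $\sum_t|c_t-N/T|$ moved elements. That gives $k\sqrt{T/N}$ per application, and this hop must be applied twice, just like Step 1. The chain
\[
\mathbb{E}_\mu[\OPT^\bal]\to\mathbb{E}_\nu[\OPT^\bal]\to\mathbb{E}_{\nu_{S^*}}[\OPT^\bal]\to\mathbb{E}_{\nu_{S^*}}[\OPT^\bal_\Psi]\to\mathbb{E}_\nu[\OPT^\bal_\Psi]\to\mathbb{E}_\mu[\OPT^\bal_\Psi]
\]
traverses both hops twice. No rearrangement avoids this: coupling once and applying it to the difference $\OPT^\bal(U)-\OPT^\bal_\Psi(U)$ just doubles the Lipschitz constant to $2k/N$. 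So your argument as written yields $2k\sqrt{T/N}$, and the claim that ``only one hop is charged twice'' is not a valid fix.

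The correct fix, which is what \Cref{lm:toUni} does, is to count swaps rather than moved elements. Since $\sum_t c_t=N=\sum_t N/T$, every removal from an over-full block can be paired with an addition to an under-full block. The coupling then uses $\tfrac12\sum_t|c_t-N/T|$ swaps, each keeping $|U|=N$. A swap costs only $k/N$, not $2k/N$. This is because $\val^\bal_U(\phi)-\val^\bal_{U'}(\phi)$ is a difference of two quantities, each lying in $[0,k(N-1)^{k-1}/N^k]$. You then get $\frac{k}{2}\sqrt{T/N}$ per application and exactly $k\sqrt{T/N}$ in total. As a side benefit, pairing removes the need to control the renormalization $|U|^k$ at intermediate sizes. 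With unpaired moves you would have to order them so that no intermediate set shrinks below size $N$, since otherwise the per-element bound is only $k/|U|>k/N$.
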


Before proving \Cref{lm:finalUniform} we will prove some intermediary lemmas.
Moreover, for brevity, we denote $\Psi(\ccS^*)$ simply as $\Psi$ and define the shorthand for the maximum value over an assignment family $\Phi\subseteq [q]^n$:
\[\OPT^\bal_\Phi(U) := \max_{\phi \in \Phi} \val_U^\bal(\phi).\]

We begin by showing that swapping a single variable in $U$ has a negligible effect on the optimal value $\OPT^\bal_{\Phi}(\cdot)$.
Let $A \triangle B= (A\cup B) \setminus (A \cap B)$ be the symmetric difference between $A$ and $B$. Then
\begin{lemma}\label{lm:lip}
    Let $\Phi \subseteq [q]^n$ and $U,U'\subseteq [n]$ be subsets of size $N$ such that $|U\triangle U'|= 2$. Then
    \[
    |\OPT^\bal_{\Phi}(U)-\OPT^\bal_{\Phi}(U')|\le \frac kN.
    \]
\end{lemma}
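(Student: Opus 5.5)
Since $|U\triangle U'|=2$ and $|U|=|U'|=N$, we can write $U=W\cup\{u\}$ and $U'=W\cup\{u'\}$ with $|W|=N-1$ and $u\neq u'$. The plan is to first prove a pointwise Lipschitz bound for every fixed assignment,
\[
|\val^\bal_U(\phi)-\val^\bal_{U'}(\phi)|\le \frac{k}{N}\qquad\text{for every }\phi\in[q]^n,
\]
and then pass to the maximum over $\Phi$.

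For the pointwise bound, note that both values carry the same normalization $1/N^k$, because $|U|=|U'|=N$. Split $\cM(U)$ into the constraints contained in $W$, namely $\cM(W)$, and those containing $u$, namely $\cM_u(U)$; split $\cM(U')$ the same way. The contributions of $\cM(W)$ are identical in the two values and cancel. What remains is
\[
\frac{1}{N^k}\Bigl(\sum_{w\in\cM_u(U)}P_w(\phi|_w)-\sum_{w\in\cM_{u'}(U')}P_w(\phi|_w)\Bigr).
\]
Both sums are nonnegative, since $P_w\in[0,1]$, so the absolute difference is at most the larger of the two counts $|\cM_u(U)|$ and $|\cM_{u'}(U')|$, divided by $N^k$. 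A tuple in $U^k$ containing $u$ has $u$ in one of $k$ positions and the remaining $k-1$ entries in $U$. Hence $|\cM_u(U)|\le kN^{k-1}$, which gives the bound $k/N$; the same count applies to $u'$.

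To pass to the maximum, let $\phi_U$ attain $\OPT^\bal_\Phi(U)$. Then
\[
\OPT^\bal_\Phi(U)=\val^\bal_U(\phi_U)\le \val^\bal_{U'}(\phi_U)+\frac kN\le \OPT^\bal_\Phi(U')+\frac kN.
\]
Exchanging the roles of $U$ and $U'$ gives the reverse inequality. Note that $\Phi$ is a set of full assignments in $[q]^n$, so each $\phi\in\Phi$ can be evaluated on both $U$ and $U'$.

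There is no real obstacle here. The only points that need care are that the normalization $|U|^k$ is the same for $U$ and $U'$, which holds because their sizes are equal, and that the $\cM(W)$ terms cancel exactly.
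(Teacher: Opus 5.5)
Your proof is correct and follows the paper's argument. You cancel the constraints avoiding the swapped variables, bound each remaining sum by the number of constraints through $u$ (resp.\ $u'$), at most $kN^{k-1}$ (the paper uses the slightly sharper $k(N-1)^{k-1}$), and transfer the pointwise bound to the maximum over $\Phi$ in both directions; your explicit use of $|a-b|\le\max(a,b)$ for nonnegative $a,b$ is exactly what justifies the paper's bound of $k/N$ rather than $2k/N$.
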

\begin{proof}
    Fix any assignment $\phi\in \Phi$. Since $|U\triangle U'|= 2$, there exists exactly an index $i\in U\setminus U'$ and an index $j\in U'\setminus U$. The number of constraints including $i$ is at most $k(N-1)^{k-1}$, and the number of constraints including $j$ is at most $k(N-1)^{k-1}$.
    Hence, %
    \begin{align*}
        |\val^\bal_U(\phi)-\val_{U'}^\bal(\phi)|\le \left|\frac{1}{N^k}\sum_{w\in \mathcal{M}(U)} P_w(\phi|_w) -  \frac{1}{N^k}\sum_{w\in \mathcal{M}(U')} P_w(\phi|_w)\right|  \le \frac{k(N-1)^{k-1}}{N^k}\le \frac{k}{N}.
    \end{align*}

    Since this holds for every $\phi$, it must also hold for the maximum over $\Phi$, i.e., a $\phi^*\in\arg\max_{\phi\in \Phi}\val^\bal_U(\phi)$. Hence, we get
    \[
    \max_{\phi\in \Phi}\val^\bal_U(\phi)=\val_{U}^\bal(\phi^*)\le \val_{U'}^\bal(\phi^*)+ \frac kN \le \max_{\phi\in \Phi}\val_{U'}^\bal(\phi)+ \frac kN.
    \]
    and also vice versa by taking a maximizer for $\val_{U'}^\bal$.
\end{proof}

Now, we show that for $U\sim \nu$, conditioning on a variable $i\in U$ does not change the value of the CSP much. For technical reasons, it is convenient to strengthen the statement by considering a $U\sim\nu_{A}$ where $A$ is a possible empty subset of $[n]$. Formally:

\begin{lemma}\label{lem:balancedUcontainsi}
    Let $\Phi \subseteq [q]^n$, $A\subseteq [n]$, and $i\in[n]$. Then:
    \[
    \left|\mathbb{E}_{\nu_A}[\OPT^\bal_\Phi(U)|i\in U]-\mathbb{E}_{\nu_A}[\OPT^\bal_\Phi(U)]\right|\le \frac kN.
    \]
\end{lemma}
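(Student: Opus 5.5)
We prove the statement with a coupling argument combined with \Cref{lm:lip}. Let $t=b(i)$. Under $\nu_A$, the blocks are sampled independently: each $U_{t'}$ is a uniform subset of $V_{t'}$ of size $N/T$ that contains $A\cap V_{t'}$. Conditioning on $i\in U$ changes only the law of $U_t$, so all other blocks keep their joint law. It therefore suffices to couple a draw $U_t\sim$ (uniform among admissible subsets of $V_t$) with a draw $U'_t\sim$ (uniform among admissible subsets containing $i$) so that $|U_t\triangle U'_t|\le 2$ almost surely. We use the same realization for every other block.

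The coupling is as follows. If $i\in A$, conditioning is vacuous and the difference is $0$. Otherwise, sample $U_t$ from the unconditioned law. If $i\in U_t$, set $U'_t=U_t$. If $i\notin U_t$, pick $j$ uniformly from $U_t\setminus A$ and set $U'_t=(U_t\setminus\{j\})\cup\{i\}$. This set is nonempty whenever $i$ can be added, because $|A\cap V_t|<N/T$ is needed for the conditional law to be well defined when $i\notin A$.

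We must check that $U'_t$ has the conditional law, i.e., is uniform over admissible sets containing $i$. Fix a target $W\ni i$ and set $m=|V_t\setminus A|$, $r=N/T-|A\cap V_t|$. Then $W$ is reached directly from $U_t=W$ with probability $1/\binom{m}{r}$. It is also reached from each of the $m-r$ sets $(W\setminus\{i\})\cup\{j\}$ with $j\notin W\cup A$, each with probability $\frac{1}{\binom{m}{r}}\cdot\frac1r$. The total does not depend on $W$, so $U'_t$ is uniform. This symmetry check is the main thing to verify carefully.

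Under the coupling, $U=\bigcup_{t'}U_{t'}$ and $U'$ are both of size $N$ and satisfy $|U\triangle U'|\in\{0,2\}$. \Cref{lm:lip} then gives $|\OPT^\bal_\Phi(U)-\OPT^\bal_\Phi(U')|\le k/N$ pointwise, and \Cref{lm:lip} holds for the balanced value as stated. Taking expectations,
\[
\left|\mathbb{E}_{\nu_A}[\OPT^\bal_\Phi(U)\mid i\in U]-\mathbb{E}_{\nu_A}[\OPT^\bal_\Phi(U)]\right|=\left|\mathbb{E}[\OPT^\bal_\Phi(U')-\OPT^\bal_\Phi(U)]\right|\le \frac kN,
\]
which is the claimed bound.
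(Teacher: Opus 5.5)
Your proof is correct and follows essentially the paper's argument: a single-swap coupling inside block $b(i)$, combined with \Cref{lm:lip}, and your uniformity check for the swap is right. The only difference is cosmetic. The paper couples the laws conditioned on $i\in U$ and on $i\notin U$, swapping $i$ out for a uniform $j\in V_t\setminus U_t$, and then uses $\mathbb{E}=\mathbb{P}(i\in U)\,\mathbb{E}[\cdot\mid i\in U]+\mathbb{P}(i\notin U)\,\mathbb{E}[\cdot\mid i\notin U]$. You instead couple the unconditioned law directly to the conditioned one, which also avoids conditioning on a null event when $i\in A$.
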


\begin{proof}
    Let $\nu^+$ be the law of $U\sim \nu_A$ conditioned on $i\in U$, and let $\nu^-$ be the law of $U\sim \nu_A$ conditioned on $i\notin U$.
    Sample $U\sim \nu^+$. Let $t=b(i)$ (hence $i\in U_t$) and define
    \[
    U':=(U\setminus\{i\})\cup\{j\},
    \]
    where $j$ is sampled uniformly from $V_t\setminus U_t$.

    Since under $\nu_A$ only the block $V_t$ is affected by conditioning on $i\in U$, it is immediate that $U'\sim \nu^-$.
    Moreover, deterministically $|U\triangle U'|= 2$. Therefore, by \Cref{lm:lip},
    \[
        \left|\mathbb{E}_{U\sim \nu^+}[\OPT^\bal_\Phi(U)]-\mathbb{E}_{U'\sim \nu^-}[\OPT^\bal_\Phi(U')]\right|\le \frac{k}{N}.
    \]
    Finally,
    \begin{align*}
        &\left|\mathbb{E}_{U\sim \nu_A}[\OPT^\bal_\Phi(U)| i\in U]-\mathbb{E}_{U\sim \nu_A}[\OPT^\bal_\Phi(U)]\right|\\
        &=
        \left|
        \mathbb{E}_{U\sim \nu_A}[\OPT^\bal_\Phi(U)| i\in U]
        -\mathbb{P}(i\in U)\mathbb{E}_{U\sim \nu_A}[\OPT^\bal_\Phi(U)| i\in U]
        -\mathbb{P}(i\notin U)\mathbb{E}_{U\sim \nu_A}[\OPT^\bal_\Phi(U)| i\notin U]
        \right|\\
        &=
        \mathbb{P}(i\notin U)
        \left|
        \mathbb{E}_{U\sim \nu_A}[\OPT^\bal_\Phi(U)| i\in U]
        -
        \mathbb{E}_{U\sim \nu_A}[\OPT^\bal_\Phi(U)| i\notin U]
        \right|\le \frac{k}{N},
    \end{align*}
    concluding the proof.
\end{proof}

\begin{lemma} \label{lm:toBalanced}
    Let $\Phi \subseteq [q]^n$ and let $S\subseteq [n]$ with $|S|=r$.
    Then
    \[
    \left|\mathbb{E}_{U\sim \nu}[\OPT^\bal_\Phi(U)]-\mathbb{E}_{U\sim \nu_S}[\OPT^\bal_{\Phi}(U)]\right|\le \frac{rk}{N}.
    \]
\end{lemma}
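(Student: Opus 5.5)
The plan is induction on $r=|S|$, adding the elements of $S$ one at a time and invoking \Cref{lem:balancedUcontainsi} at each step. That lemma is stated for a general conditioning set $A$, which is exactly what the induction needs.

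Enumerate $S=\{i_1,\ldots,i_r\}$ and set $A_0=\emptyset$ and $A_\ell=\{i_1,\ldots,i_\ell\}$. Then $\nu_{A_0}=\nu$ and $\nu_{A_r}=\nu_S$. The key identity is
\[
\nu_{A_{\ell}}=\nu_{A_{\ell-1}}\big(\,\cdot \mid i_\ell\in U\big).
\]
To check it, note that under $\nu_{A_{\ell-1}}$ the blocks $U_t$ are independent, and each $U_t$ is uniform over the $N/T$-subsets of $V_t$ that contain $A_{\ell-1}\cap V_t$. Conditioning on $i_\ell\in U$ affects only the block $t=b(i_\ell)$. Within that block, it restricts the uniform distribution to the subsets that also contain $i_\ell$, and the restricted distribution is still uniform. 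This is precisely the law $\nu_{A_\ell}$.

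One caveat: $\nu_A$ was defined under the requirement $1\le |A\cap V_t|\le N/T$. For the partial sets $A_\ell$, some blocks may have $|A_\ell\cap V_t|=0$. In those blocks I simply read the definition as unconstrained uniform sampling, which is how $\nu=\nu_\emptyset$ is already used in \Cref{lem:balancedUcontainsi}. We also need $|S\cap V_t|\le N/T$ so that the conditioning events have positive probability. This is implicit whenever $\nu_S$ is well defined, and it holds in the application since $s\le N/T$.

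Given the identity, \Cref{lem:balancedUcontainsi} with $A=A_{\ell-1}$ and $i=i_\ell$ yields
\[
\left|\mathbb{E}_{\nu_{A_\ell}}[\OPT^\bal_\Phi(U)]-\mathbb{E}_{\nu_{A_{\ell-1}}}[\OPT^\bal_\Phi(U)]\right|\le \frac kN .
\]
If $i_\ell$ already lies in $A_{\ell-1}$, the step is trivial; this cannot happen here because the elements are distinct. Summing these $r$ differences by the triangle inequality (a telescoping sum) gives the bound $rk/N$.

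The only real obstacle is justifying the conditioning identity cleanly, including the block independence and the degenerate empty-intersection convention. The rest follows from the previous lemma.
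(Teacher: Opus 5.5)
Your proposal is correct and is essentially the paper's proof. Like the paper, you define the intermediate measures $\nu_{\{i_1,\ldots,i_j\}}$ with $\nu_0=\nu$, identify each as the previous one conditioned on $i_j\in U$, apply \Cref{lem:balancedUcontainsi} with $A=\{i_1,\ldots,i_{j-1}\}$, and telescope to $rk/N$. Your handling of the $|A\cap V_t|\ge 1$ convention is a reasonable patch of the paper's definition. One small slip: you attribute $|S^*\cap V_t|\le N/T$ to $s\le N/T$, but it actually holds because $\ccS^*$ lies in the support of $\gamma$, so $S^*\subseteq U$ for some admissible $U$.
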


\begin{proof}
    Define the intermediate measures $\nu_j:=\nu_{\{i_1,\ldots,i_j\}}$ for $j\in\{1,\ldots,r\}$, where
    $S=\{i_1,\ldots,i_r\}$. Moreover, let $\nu_0=\nu$.

    By the tower property,
    \[
    \mathbb{E}_{U\sim \nu_j}[\OPT^\bal_\Phi(U)]=    \mathbb{E}_{U\sim \nu_{j-1}}[\OPT^\bal_\Phi(U)\mid i_j\in U]
    \qquad \forall j\in[r].
    \]
    Therefore,
    \begin{align*}
        \left|\mathbb{E}_{U\sim \nu}[\OPT^\bal_\Phi(U)]-\mathbb{E}_{U\sim \nu_S}[\OPT^\bal_\Phi(U)]\right|
        &\le \sum_{j=1}^r
        \left|
        \mathbb{E}_{U\sim \nu_{j-1}}[\OPT^\bal_\Phi(U)]-
        \mathbb{E}_{U\sim \nu_{j-1}}[\OPT^\bal_\Phi(U)\mid i_j\in U]
        \right|\\
        &\le \sum_{j=1}^r \frac{k}{N}
        =\frac{rk}{N},
    \end{align*}
    where the second inequality follows from \Cref{lem:balancedUcontainsi} with $A=\{i_1,\ldots, i_{j-1}\}$.
\end{proof}

Now, we can finally extend \Cref{lm:fixedS} to $U\sim\nu$. Thus, we get that the optimal assignment in $\Psi$ is close in expectation to the optimal one.

\begin{lemma}\label{lm:finalBalanced}
    There exists a set of assignments $\Psi\subseteq [q]^n$ of size $|\Psi|=q^{s T}$ such that
    \[
    \mathbb{E}_{U\sim \nu}[\OPT^\bal(U)]\le
    \mathbb{E}_{U\sim \nu}\left[\max_{\phi\in\Psi} \val_U^{\bal}(\phi)\right]
    +\mathcal{E}_{1}(\varepsilon, N,s,T) 
    + \frac{2sTk}{N}.
    \]
\end{lemma}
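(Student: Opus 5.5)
Take $\Psi:=\Psi(\ccS^*)$, where $\ccS^*$ is the seed tuple from \Cref{lm:fixedS}, and move from $\nu_{S^*}$ back to $\nu$ on each side of that inequality. The size bound is immediate. $\Psi(\ccS^*)$ has one assignment per seed $\bar\sigma\in\bigtimes_{t\in[T]}[q]^{S^*_t}$, and $|S^*_t|=s$ for every $t$, so $|\Psi|\le q^{sT}$; we can pad with arbitrary assignments to get equality.

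Let $S^*=\bigcup_t S^*_t$, so that $r:=|S^*|\le sT$. We apply \Cref{lm:toBalanced} twice with this $S=S^*$. The first application uses $\Phi=[q]^n$. Since $\OPT^\bal_{[q]^n}(U)=\OPT^\bal(U)$ (restricting assignments to $U$ changes nothing), this gives
\[
\mathbb{E}_{\nu}[\OPT^\bal(U)]\le \mathbb{E}_{\nu_{S^*}}[\OPT^\bal(U)]+\frac{sTk}{N}.
\]
The second application uses $\Phi=\Psi$ and gives
\[
\mathbb{E}_{\nu_{S^*}}\Bigl[\max_{\phi\in\Psi}\val^\bal_U(\phi)\Bigr]\le \mathbb{E}_{\nu}\Bigl[\max_{\phi\in\Psi}\val^\bal_U(\phi)\Bigr]+\frac{sTk}{N}.
\]
Chaining the first inequality, \Cref{lm:fixedS}, and the second inequality yields the claim, with total error $\mathcal{E}_1(\varepsilon,N,s,T)+\frac{2sTk}{N}$.

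The step I expect to need the most care is checking that \Cref{lm:toBalanced} applies with the intermediate sets $A=\{i_1,\dots,i_j\}$. The measures $\nu_A$ must be well defined, which requires $|A\cap V_t|\le N/T$ for every $t$. For $S^*$ itself this holds because $S^*$ is realizable under $\gamma$: each $S^*_t\subseteq U\setminus U_t$ for some admissible $U$, so $S^*\cap V_{t'}\subseteq U_{t'}$ and hence $|S^*\cap V_{t'}|\le N/T$. The prefixes of $S^*$ inherit this bound. Blocks that $A$ does not meet are simply unconstrained, which is consistent with $\nu_\emptyset=\nu$, so the requirement $|A\cap V_t|\ge 1$ in the definition should be read loosely. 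Finally, the overlap among the $S^*_t$ only makes $r$ smaller, so the bound $r\le sT$ is all that is needed.
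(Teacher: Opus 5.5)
Your proposal is correct and is essentially the paper's own argument: take $\Psi=\Psi(\ccS^*)$, apply \Cref{lm:toBalanced} with $\Phi=[q]^n$ to pass from $\nu$ to $\nu_{S^*}$, invoke \Cref{lm:fixedS}, and apply \Cref{lm:toBalanced} again with $\Phi=\Psi$ to return to $\nu$, using $|S^*|\le sT$ both times. Your extra remarks fill in details the paper leaves implicit: $|\Psi(\ccS^*)|\le q^{sT}$, and padding only increases the right-hand side. Likewise, the intermediate laws $\nu_A$ are well defined because $S^*\cap V_t\subseteq U_t$ for any $U$ in the support.
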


\begin{proof}
    First, notice that \Cref{lm:fixedS} implies $|\Psi|=q^{sT}$.
    Now, since $|\cup_{t\in[T]}S_t^*|\le sT$, by \Cref{lm:toBalanced} we get
    \[
    \mathbb{E}_{U\sim \nu}[\OPT^\bal(U)]\le\mathbb{E}_{U\sim \nu_{S^*}}[\OPT^\bal(U)] +\frac{sTk}{N}.
    \]
    Applying \Cref{lm:fixedS}, we obtain
    \begin{align*}
        \mathbb{E}_{U\sim \nu}[\OPT^\bal(U)]
        &\le \mathbb{E}_{U\sim \nu_{S^*}}\left[\max_{\phi\in\Psi}\val_U^\bal(\phi)\right] + \mathcal{E}_{1}(\varepsilon, N,s,T)+ \frac{sTk}{N}\\
        &\le\mathbb{E}_{U\sim \nu}\left[\max_{\phi\in\Psi}\val_U^\bal(\phi)\right] + \mathcal{E}_{1}(\varepsilon, N,s,T) + \frac{2sTk}{N},
    \end{align*}
    where in the last inequality we applied \Cref{lm:toBalanced} again to the family $\Phi=\Psi$. 
\end{proof}

Now we have to argue that the expectation with respect to $\nu$ is close to the expectation with respect to the uniform distribution $\mu$. This again relies on  \Cref{lm:lip}.

\begin{lemma}\label{lm:toUni}
Let $\Phi\subseteq [q]^n$ be any family of assignments. Then
\[
    \left| \mathbb{E}_{U\sim\mu}[\OPT^\bal_\Phi(U)]-\mathbb{E}_{U\sim\nu}[\OPT^\bal_\Phi(U)]\right|\le \frac{k}{2}\sqrt{\frac{T}{N}}.
\]
\end{lemma}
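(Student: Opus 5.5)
We will prove the bound with a coupling argument driven by \Cref{lm:lip}: $\OPT^\bal_\Phi$ changes by at most $k/N$ under a single swap of variables. The two laws differ only in how many variables fall in each block. Under $\nu$, each block $V_t$ receives exactly $N/T$ variables. Under $\mu$, the vector of block counts $(|U\cap V_t|)_{t\in[T]}$ is multivariate hypergeometric with mean roughly $N/T$ per block (we treat blocks as having size exactly $n/T$; otherwise the means are adjusted accordingly). Conditioned on these counts, $U\sim\mu$ is uniform within each block, with the given sizes. So it suffices to couple a $\mu$-sample with a $\nu$-sample so that their symmetric difference is small in expectation.

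\textbf{The coupling.} Sample $U\sim\mu$ and let $c_t=|U\cap V_t|$. Build $U'$ block by block. If $c_t>N/T$, delete $c_t-N/T$ uniformly random elements of $U\cap V_t$. If $c_t<N/T$, add $N/T-c_t$ uniformly random elements of $V_t\setminus U$. Since $U\cap V_t$ is uniform given $c_t$, a uniform random sub- or superset of it of size $N/T$ is uniform of size $N/T$. The blocks are conditionally independent given the counts, and the resulting marginal does not depend on the counts, so $U'\sim\nu$.

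\textbf{Swap path.} Let $D=\sum_t |c_t-N/T|$. The number of deletions equals the number of additions, each being $D/2$, because both sets have size $N$. Pair each deletion with an addition. This gives a path of $D/2$ swaps from $U$ to $U'$, each step keeping the size equal to $N$ with $|U\triangle U'|=2$. By \Cref{lm:lip}, $|\OPT^\bal_\Phi(U)-\OPT^\bal_\Phi(U')|\le \frac{k}{N}\cdot\frac{D}{2}$. Taking expectations, the two sides of the lemma differ by at most $\frac{k}{2N}\mathbb{E}[D]$.

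\textbf{Bounding $\mathbb{E}[D]$.} This is the step where the constant matters. Each $c_t$ is hypergeometric with mean $N/T$ and variance at most $N\cdot\frac1T(1-\frac1T)\le N/T$. By Jensen (Cauchy--Schwarz), $\mathbb{E}|c_t-N/T|\le\sqrt{N/T}$, so $\mathbb{E}[D]\le T\sqrt{N/T}=\sqrt{NT}$. Substituting, the difference is at most $\frac{k}{2N}\sqrt{NT}=\frac k2\sqrt{T/N}$, exactly the claimed bound.

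The main obstacle is verifying that $U'$ is exactly $\nu$-distributed, that is, conditional uniformity within blocks and independence across blocks given the counts. The remaining work is the variance computation; non-divisibility of block sizes only affects lower-order terms, which we handle by assuming each $|V_t|=n/T$ exactly.
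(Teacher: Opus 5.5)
Your proof is correct and is essentially the paper's argument. The paper conditions on the block-count vector, setting $g(m)=\mathbb{E}_\mu[\OPT^\bal_\Phi(U)\mid M(U)=m]$ so that $\mathbb{E}_\nu[\OPT^\bal_\Phi(U)]=g(\bar m)$, and uses single-swap couplings with \Cref{lm:lip} to get $|g(m)-g(\bar m)|\le \frac{k}{2N}\|m-\bar m\|_1$; this is your global coupling unrolled into $\frac12\|m-\bar m\|_1$ steps, and the final hypergeometric-variance and Jensen bound $\mathbb{E}\|M(U)-\bar m\|_1\le\sqrt{NT}$ is identical, including the implicit assumption $|V_t|=n/T$ that you state explicitly.
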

\begin{proof}

    Let us define the block counter $M_t(U) = |U \cap V_t|$ and, similarly, the count vector $M(U) = (M_1(U), \dots, M_T(U))$. For each count vector $m = (m_1, \dots, m_T)$, we define the conditional expectation 
    \[g(m) = \mathbb{E}_{U \sim \mu}[\OPT_\Phi^\bal(U) \mid M(U) = m].\]
    Moreover, it will be convenient to define the perfectly balanced counter as $\bar{m} = (N/T, \dots, N/T)$.
    
    By the tower property, we obtain
    \begin{align}
    \mathbb{E}_{U\sim \mu}[\OPT_\Phi^\bal(U)] = \mathbb{E}_{U\sim \mu}[\mathbb{E}_{U\sim \mu}[\OPT_\Phi^\bal(U)|M(U)]]=\mathbb{E}_{U\sim \mu}[g(M(U))].\label{eq:tmp1234}
    \end{align}

    Furthermore, because $\nu$ can be interpreted as the uniform distribution conditioned on perfect balance ($M(U) = \bar{m}$), it is clear that
    \begin{align}
    \mathbb{E}_{U\sim \nu}[\OPT^\bal_\Phi(U)]=g(\bar m).\label{eq:tmp12345}
    \end{align}

    Now, let $m$ and $m'$ be two count vectors that differ by a single swap between blocks, meaning $m' = m + e_j - e_k$ for two distinct indices $j, k \in [T]$. We can couple the two distributions conditioned on $M(U)=m$ and $M(U)=m'$ by sampling $U$ from the distribution $\mu$ conditioned on  $M(U) = m$, picking $a$ uniformly at random from $U \cap V_k$ and $b$ uniformly at random from $V_j \setminus U$, and defining $U' = (U \setminus \{a\}) \cup \{b\}$.

    By construction, $U'$ is sampled from the distribution $\mu$ conditioned on $M(U) = m'$ and the symmetric difference of the coupling $|U \triangle U'| = 2$ deterministically. Hence, by \Cref{lm:lip}, we obtain the deterministic bound
    \[
    |\OPT^\bal_\Phi(U)-\OPT^\bal_\Phi(U')|\le \frac kN.
    \]
Taking the expectation over our coupled variables yields   
\[
    |g(m)-g(m')|\le \frac kN.
\]

Consider now an arbitrary count vectors $m$ and the perfectly balanced counter $\bar m$. By iterating this swap process over $\frac{1}{2} \|m - \bar{m}\|_1$ steps and applying the triangle inequality we get

\begin{align}\label{eq:gNorm}
    |g(m)-g(\bar m)|\le \frac{k}{2N}\|m-\bar m\|_1.
\end{align}

    Now, we can bound the difference:
    \begin{align*}
        \left|\mathbb{E}_{U\sim \mu}[\OPT_\Phi^\bal(U)]-\mathbb{E}_{U\sim \nu}[\OPT_\Phi^\bal(U)]\right|&=\left|\mathbb{E}_{U\sim \mu}[g(M(U))]-g(\bar m)\right|\tag{\Cref{eq:tmp1234,eq:tmp12345}}\\
        &\le \mathbb{E}_{U\sim \mu}\left[\left|g(M(U))-g(\bar m)\right|\right]\tag{Jensen}\\
        &\le \frac{k}{2N}\mathbb{E}_{U\sim \mu}[\|M(U)-\bar m\|_1]. \tag{\Cref{eq:gNorm}}
    \end{align*}

    Then, observe that each block counter $M_t(U)$ follows a hypergeometric distribution with mean $N/T$. Hence, its variance is upper bounded by $N/T$. Thus:
    \begin{align*}
    \mathbb{E}_{U\sim \mu}[\|M(U)-\bar m\|_1]&\le \sum_{t\in [T]}\mathbb{E}_{U\sim \mu}[|M_t(U)-N/T|]\\
    &\le T \sqrt{\frac NT}=\sqrt{NT},
    \end{align*}
    Substituting this back into the previous inequality gives
    \[
    \left|\mathbb{E}_{U\sim \mu}[\OPT_\Phi^\bal(U)]-\mathbb{E}_{U\sim \nu}[\OPT_\Phi^\bal(U)]\right|\le \frac{k}{2}\sqrt\frac TN,
    \]
    concluding the proof.
\end{proof}

Now it is straightforward to prove \Cref{lm:finalUniform}.

\begin{proof}[Proof of \Cref{lm:finalUniform}]
It is sufficient to take \Cref{lm:finalBalanced} and apply \Cref{lm:toUni} to both expectations. This increases the error by an additive error of ${k}\sqrt\frac TN$.
\end{proof}

\subsection{De-Balancing}\label{sec:debalancing}

Now, we want to conclude the proof by extending \Cref{lm:finalBalanced} from the balanced-CSP setting to the original CSP. To do that, we show that with high probability over $U\sim \mu$, the values of the two CSPs are close.

\begin{lemma}\label{lm:highProbBalance}
With probability at least $1-T \exp(-N/T^2)$  over $U\sim\mu$, it holds 
\[|\val_U(\phi)-\val^\bal_U(\phi)| \le \frac{4k^2}{T} \quad  \forall \phi \in [q]^U.\]
\end{lemma}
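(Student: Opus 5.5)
The plan is to reduce the statement to a purely combinatorial count of unbalanced tuples, and then control that count through concentration of the block sizes $|U_t|$. Fix $U$ of size $N$. For every assignment $\phi$, the difference $\val_U(\phi)-\val^\bal_U(\phi)$ is $\frac{1}{N^k}\sum_{w\in\cW(U)\setminus\cM(U)}P_w(\phi|_w)$. Since $P_w\in[0,1]$, this lies between $0$ and $|\cW(U)\setminus\cM(U)|/N^k$, uniformly in $\phi$. So it suffices to show that, with the claimed probability, the number of $k$-tuples of distinct elements of $U$ having two coordinates in the same block is at most $4k^2N^k/T$.

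\textbf{Counting.} A tuple in $\cW(U)\setminus\cM(U)$ has some pair of positions $j<j'$ with $b(i_j)=b(i_{j'})$. A union bound over the $\binom{k}{2}\le k^2/2$ pairs of positions gives
\[
|\cW(U)\setminus\cM(U)|\le \frac{k^2}{2}\, N^{k-2}\sum_{t\in[T]}|U_t|^2 .
\]
Here the remaining $k-2$ coordinates are chosen freely from $U$, and the colliding pair is chosen as an ordered pair inside the same $U_t$. It therefore suffices to have $\sum_t |U_t|^2\le 8N^2/T$. This holds whenever $|U_t|\le 2N/T$ for every $t$, since then $\sum_t|U_t|^2\le \max_t|U_t|\cdot\sum_t|U_t|\le 2N^2/T$, which even gives the sharper bound $k^2/T$.

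\textbf{Concentration of block sizes.} Each $|U_t|$, for $U\sim\mu$, is hypergeometric with mean $N|V_t|/n\approx N/T$. Hoeffding's inequality applies to sampling without replacement, as it is dominated by the with-replacement case. It gives
\[
\mathbb{P}\bigl(|U_t|>N/T+\lambda\bigr)\le\exp(-2\lambda^2/N).
\]
Taking $\lambda=N/T$ yields $\exp(-2N/T^2)\le\exp(-N/T^2)$. A union bound over the $T$ blocks gives failure probability at most $T\exp(-N/T^2)$, matching the statement.

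\textbf{Main obstacle.} The only delicate point is the bookkeeping that blocks have size only \emph{approximately} $n/T$, so the mean of $|U_t|$ may slightly exceed $N/T$. I would absorb this into the slack: the threshold $2N/T$ versus the needed $\sqrt{8}N/T$-type bound leaves ample room, and the factor $4$ in the statement allows the mean to be as large as, say, $1.5N/T$. If blocks differ in size by at most one element and $n\gg T$, this slack is more than enough.
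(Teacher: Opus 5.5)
Your proof is correct and follows the paper's argument. Hoeffding plus a union bound over the $T$ blocks gives $\max_t |U_t|\le 2N/T$ with probability at least $1-T\exp(-N/T^2)$, and unbalanced tuples are counted by summing over pairs of positions that share a block; your estimate $\sum_t |U_t|^2\le \max_t|U_t|\cdot N$ is a bit sharper than the paper's $T\max_t|U_t|^2$ (yielding $k^2/T$ rather than $4k^2/T$), and you explicitly handle the hypergeometric sampling and approximate block sizes that the paper leaves implicit.
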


\begin{proof}

We start showing that, with high probability, $U$ is balanced, i.e., all $V_t \cap U_t$ are roughly of equal size. 
By a standard Hoeffding inequality \citep{dubhashi2009concentration}, the probability that $|U_t|>2N/T$ is smaller than $\exp(-N/T^2)$. By taking an union bound over all $t\in [T]$, we get that the $M=\max_{t\in [T]} |U_t|\le 2N/T$ with probability at least $1-T\exp(-N/T^2)$. 

In the following, we assume that $M\le 2N/T$ and hence our lemma will hold with probability $1- T \exp(-N/T^2)$.

Define
\[
\mathrm{Bad}(U)\coloneqq\{(i_1,\ldots, i_k)\in \cW(U):\exists i_j\neq i_{j'} \quad s.t.\quad b(i_j)=b(i_{j'})\}.
\]
Moreover, given a $t \in [T]$, let
\[
\mathrm{Bad}_t(U)\coloneqq\{(i_1,\ldots, i_k)\in \cW(U):\exists i_j\neq i_{j'} \quad s.t.\quad b(i_j)=b(i_{j'})=t\}.
\]
Then, we can easily upperbound 
\[|\mathrm{Bad}(U)|\le \sum_{t \in [T]} |\mathrm{Bad}_t(U)|\le   k^2 T M^2 N^{k-2}, \]
where the second inequality follows by observing that $\mathrm{Bad}_t(U)$ includes one of the at most $M$ variables in at least $2$ of the $k$ indexes. 

Then, given an assignment $\phi$, we get 
\[ 
|\val_U(\phi)-\val_U^\bal(\phi)|\le \frac{|\mathrm{Bad}(U)|}{N^k}\le \frac{k^2T M^2}{N^{2}}\le \frac{k^2T}{N^2} \frac{4 N^2}{T^2} = \frac{4k^2}{T},
\]

concluding the proof.
\end{proof}

\subsection{Conclusion of the Proof of \Cref{lm:structure}}
Now, we can finally prove \Cref{lm:structure}. We get:
\begin{align*}
    \mathbb{E}_\mu[\OPT(U)]  &\le \mathbb{E}_{\mu} [\OPT^{\bal}(U)] + \frac{4k^2}{T} + T \exp\left(-\frac{N}{T^2}\right)\\
    & \le \mathbb{E}_{\mu}[ \max_{\phi \in \Psi} \val^\bal_U(\phi)] + \mathcal{E}_1(\varepsilon,N,s,T)+  \mathcal{E}_2(\varepsilon,N,s,T)+  \frac{4k^2}{T} + T \exp\left(-\frac{N}{T^2}\right)\\
    & \le  \mathbb{E}_{\mu}[ \max_{\phi \in \Psi} \val_U(\phi)] + \mathcal{E}_1(\varepsilon,N,s,T)+  \mathcal{E}_2(\varepsilon,N,s,T) +  \frac{4k^2}{T} + T \exp\left(-\frac{N}{T^2}\right)\\
    & = \mathbb{E}_{\mu}[ \max_{\phi \in \Psi} \val_U(\phi)] + 2k\left(\varepsilon+\frac{(k-1)^2}{s}\right)
    + 2qN\exp\left(-\frac{\varepsilon^2 s}{2(k-1)^2}\right)\\
    &\hspace{3.3cm}+ \frac{2Ts^2}{N}
    + \frac{2sTk}{N} 
    + k \sqrt\frac{T}{N} +  \frac{4k^2}{T} + T  \exp\left(-\frac{N}{T^2}\right),
\end{align*}
where the first inequality follows from \Cref{lm:highProbBalance} (by also going from high probability to expectation), the second by \Cref{lm:finalUniform}, third one by observing that the balance objective has fewer constraints than the original one, and the last equality by the definitions of $\mathcal{E}_1$ and $\mathcal{E}_2$.
The proof is concluded by setting $T=k^2/\varepsilon$.

\section{Applications}

In this section, we will show two main implications of our result. In particular, we will show how our subsampling theorem provides the main piece to complete the proof of \citet{aaronson2014multiple} that $\AM(\poly)=\AM$, and that it provides a $\poly(\frac{k\log q}{\varepsilon})$ tester for satisfiability.

\subsection{Free Games}
We show how to use our subsampling theorem to subsample $k$-player free games (see \Cref{def:fg}). It is shown by \citet[Theorem~53]{aaronson2014multiple} that this is sufficient to establish the equivalence between $\mathsf{AM}(k)$ and $\mathsf{AM}$, for any polynomial $k$. The following proof follows almost verbatim from \citet{aaronson2014multiple}, except for subtleties in normalization factors, and we report it here only for the sake of completeness.

\begin{theorem}[Subsampling of $k$-Player Free Games]\label{thm:subsampling-k-free-games}
There exists a constant $c>0$ such that the following holds. For any $\varepsilon>0$ and any $k$-player free game $G$ with $|Y_1|\cdots|Y_k| \ge  c^2 \frac{k^{24}}{\varepsilon^{13}}\log^4(k^2|B_1|\cdots |B_k|/\varepsilon)$, let
\[
N := c\frac{k^{12}}{\varepsilon^6} \log^2(k^2|B_1|\cdots |B_k|/\varepsilon).
\]
For each $i\in[k]$, choose a tuple $S_i$ of size $|S_i|=N$ by sampling (with replacement) elements uniformly at random from $Y_i$, let $S:=S_1\times\cdots\times S_k$, and let $G_S$ be the subgame of $G$
where the questions of the $i$-th player are uniformly sampled from $S_i$. Then
\[
\mathbb{E}_{S}\bigl[\omega(G_S)\bigr]\le  \omega(G) + O(\varepsilon).
\]
\end{theorem}
\begin{proof}
Let $\mathbf{Y} := Y_1\times \dots \times Y_k$ and $\mathbf{B} := B_1\times \dots \times B_k$. We define an assignment as a function $\mathbf{b}:\mathbf{Y} \to \mathbf{B}$ and, for every $k$-tuple $\mathbf R=(\mathbf y_1,\dots,\mathbf y_k)\in \mathbf Y^{\underline{k}}$ of distinct elements, we define the constraint
$C_\mathbf{R}$ as
\[
C_\mathbf{R}(\mathbf{b}) 
:=
\mathbb{E}_{\sigma\in \mathrm{Sym}_k}\Bigl[
V\bigl(
(\mathbf y_{\sigma(1)})_1,\ldots,(\mathbf  y_{\sigma(k)})_k,\;
(\mathbf b(\mathbf y_{\sigma(1)}))_1,\ldots,(\mathbf b(\mathbf y_{\sigma(k)}))_k
\bigr)
\Bigr],
\]
where $\mathrm{Sym}_k$ is the symmetric group acting on $k$ elements (i.e., the set of all $k$-elements permutations). This defines a $k$-CSP with $|\mathbf{Y}|$ variables, $|\mathbf{Y}|^{\underline k}$ constraints, and alphabet size $|\mathbf{B}|$.%

Fix an arbitrary assignment $\mathbf{b} : \mathbf{Y}\to \mathbf{B}$. For each $i \in [k]$, define the distribution $D_i$ over functions $b_i : Y_i \to B_i$ obtained by sampling, independently for each input $y_i\in Y_i$, $y_j \sim Y_j$ uniformly at random for all $j \neq i$, and then setting the function to be $b_i(y_i) = \mathbf{b}(y_1, \dots, y_k)_i$. Then,
\begin{align}\label{eq:inverse}
    \E_{\mathbf{R}}\left[C_\mathbf{R}\left(\mathbf{b}\right)\right]
    &= \E_{(\mathbf{y}_1, \dots, \mathbf{y}_k) \sim \mathbf{Y}^{\underline{k}}}\left[
    V\left(
(\mathbf y_{1})_1,\ldots,(\mathbf  y_{k})_k,\;
(\mathbf b(\mathbf y_{1}))_1,\ldots,(\mathbf b(\mathbf y_{k}))_k
\right)\right] \tag{Symmetry}\\
    &\leq \E_{\mathbf{y}_i \sim \mathbf{Y}}\left[
    V\left(
(\mathbf y_{1})_1,\ldots,(\mathbf  y_{k})_k,\;
(\mathbf b(\mathbf y_{1}))_1,\ldots,(\mathbf b(\mathbf y_{k}))_k
\right)\right]
    + \frac{k^2}{|\mathbf Y|}\tag{Collision}\\
&= \E_{{y}_i \sim Y_i, b_i\sim D_i}
    \left[ V\left( y_1, \dots, y_k, b_1(y_1),\dots, b_k(y_k)
\right)\right]
    + \frac{k^2}{|\mathbf Y|}\tag{Definition of $D_i$}\\
    & \leq \omega(G)  + \frac{k^2}{|\mathbf Y|}. \tag{\Cref{def:fg}}
\end{align}
Let $\mathbf I = (\mathbf z_1,\ldots,\mathbf z_N) \sim \mathrm{Unif}\big(\mathbf Y^{\underline{N}}\big)$ be a subsample for the CSP as defined above. Applying 
\Cref{thm:subsampling} (with $n = |\mathbf{Y}|$) we obtain

\begin{align}
\frac{N^{\underline{k}}}{N^k} \E_{\mathbf{I}}\left[\max_{\mathbf{b}:\mathbf{Y} \to \mathbf{B}} \E_{\mathbf R\sim \mathbf I^{\underline{k}}}\left[C_{\mathbf R}(\mathbf b)\right]\right]&=
\E_{\mathbf{I}}\left[\max_{\mathbf{b}:\mathbf{Y} \to \mathbf{B}}\val_{\mathbf{I}}(\mathbf{b})\right]\notag\\
&\le \max_{\mathbf{b}:\mathbf{Y} \to \mathbf{B}} \val(\mathbf{b})+O(\varepsilon)\notag\\
&\le \frac{|\mathbf{Y}|^{\underline{k}}}{|\mathbf{Y}|^k} \max_{\mathbf{b}:\mathbf{Y} \to \mathbf{B}} \E_\mathbf{R}\left[C_\mathbf{R}(\mathbf{b})\right]+O(\varepsilon)\label{eq:main_thm}
\end{align}

and consequently
\begin{align}\label{eq:real_main}
\E_\mathbf{I}\left[\max_{\mathbf{b}:\mathbf{I} \to \mathbf{B}}\E_{\mathbf R\sim \mathbf I^{\underline{k}}}\left[C_\mathbf{R}\left(\mathbf{b}\right)\right] \right]-\frac{k^2}{N}&\le \left(1-\frac{k^2}{N}\right)\E_\mathbf{I}\left[\max_{\mathbf{b}:\mathbf{I} \to \mathbf{B}}\E_{\mathbf R\sim \mathbf I^{\underline{k}}}\left[C_\mathbf{R}\left(\mathbf{b}\right)\right] \right]\tag{$C_\mathbf{R}\le 1$}\\
&\le \frac{N^{\underline{k}}}{N^k}\E_\mathbf{I}\left[\max_{\mathbf{b}:\mathbf{I} \to \mathbf{B}}\E_{\mathbf R\sim \mathbf I^{\underline{k}}}\left[C_\mathbf{R}\left(\mathbf{b}\right)\right] \right]\tag{\Cref{lem:trivialtmp}}\\
&\leq \frac{|\mathbf{Y}|^{\underline{k}}}{|\mathbf{Y}|^{k}}\max_{\mathbf{b}:\mathbf{Y} \to \mathbf{B}}\E_{\mathbf{R}}\left[C_\mathbf{R}\left(\mathbf{b}\right)\right] + O(\varepsilon)\tag{\Cref{eq:main_thm}}\\
&\le \max_{\mathbf{b}:\mathbf{Y} \to \mathbf{B}}\E_{\mathbf{R}}\left[C_\mathbf{R}\left(\mathbf{b}\right)\right] + O(\varepsilon). \tag{$|\mathbf{Y}|^{\underline{k}} \leq |\mathbf{Y}|^{k}$}
\end{align}
We now relate the subgames $G_S$ to the restricted CSP objective on a random $\mathbf I$.
Sample $S_1,\ldots,S_k$ as in the theorem, and for each $i$ write $S_i=(y_{i,1},\ldots,y_{i,N})$.  For $j\in[N]$ define $\mathbf z_j := (y_{1,j},\ldots,y_{k,j})\in \mathbf Y$ and set $\mathbf I:=(\mathbf z_1,\ldots,\mathbf z_N)\in \mathbf Y^{N}$.
The tuples $\mathbf{z}_1, \dots, \mathbf{z}_N$ are all distinct except with probability at most $N^2/|\mathbf{Y}|$. Hence, $\mathbf{I}$ is uniformly distributed within $N^2/|\mathbf{Y}|$ total variation distance.

Fix one such $S$ with distinct $\mathbf{z}_j$ and consider any strategy $(b_i:S_i\to B_i)_{i\in[k]}$ for the subgame $G_S$.\footnote{A valid strategy must assign the same value to the same elements in $S_i$. In a slight abuse of notation we treat $b_i:S_i\to B_i$ as mapping from a set (without repetitions) to another set.}
Define an assignment $\mathbf b:\mathbf I\to \mathbf B$ by
\[
\mathbf b(\mathbf z_j):=\bigl(b_1(y_{1,j}),\ldots,b_k(y_{k,j})\bigr)
\]
for each $j\in[N]$.
Let $\mathbf R=(\mathbf z_{j_1},\ldots,\mathbf z_{j_k})$ be uniform in $\mathbf I^{\underline{k}}$ and let $\sigma\in\mathrm{Sym}_k$ be uniform.
Set $q_i:=(\mathbf z_{j_{\sigma(i)}})_i=y_{i,j_{\sigma(i)}}\in S_i$.
Since $(j_1,\ldots,j_k)$ is a uniformly random ordered $k$-tuple of distinct indices in $[N]$,
the joint distribution of $(q_1,\ldots,q_k)$ is within total variation distance at most $k^2/N$
of the product distribution $\bigotimes_{i=1}^k \mathrm{Unif}(S_i)$.
Because $V\in[0,1]$, it follows that
\[
\omega(G_S) \le \max_{\mathbf b:\mathbf I\to \mathbf B} \E_{\mathbf R\sim \mathbf I^{\underline{k}}}\bigl[C_{\mathbf R}(\mathbf b)\bigr]
+\frac{k^2}{N}.
\]
Taking the expectations gives
\[
\E_S[\omega(G_S)]
\le
\E_{\mathbf I \sim \mathrm{Unif}\big(\mathbf Y^{\underline{N}}\big)}\left[\max_{\mathbf b:\mathbf I\to \mathbf B}\E_{\mathbf R\sim \mathbf I^{\underline{k}}}[C_{\mathbf R}(\mathbf b)]\right]+\frac{k^2}{N} + \frac{N^2}{|\mathbf{Y}|}.
\]
Combining with the previous bounds yields the desired statement, as both $\frac{k^2}{N}$ and $\frac{N^2+k^2}{|\mathbf{Y}|}$ are $O(\varepsilon)$.
\end{proof}

\subsection{Property Testing}
In this section, we present a simple application of our subsampling theorem to property testing for satisfiability.
A natural class of testers in this setting is that of \emph{canonical} testers, which accept if and only if the subinstance induced by the sampled variables is satisfiable \citep{goldreich2003three}. Our tester is not strictly canonical, since our subsampling guarantee holds only in expectation rather than with high probability. Nevertheless, our tester is \emph{almost} canonical: we repeat the test independently $O(1/\varepsilon)$ times and reject if at least one instance is unsatisfiable.

Our subsampling theorem (\Cref{thm:subsampling}) directly gives a one-sided error tester for satisfiability with sample complexity $\poly\big(\frac{k\log q}{\varepsilon}\big)$. Intuitively, we have to repeat $O(1/\varepsilon)$ times the sampling in order to have a constant probability of rejecting instances that are $\varepsilon$-far from satisfiability. The proof is a simple application of Markov's inequality.\footnote{Since this theorem is a corollary of \Cref{thm:subsampling}, we have to assume that $n\ge \left(\frac{k\log q}{\varepsilon}\right)^c$ for a suitable constant $c$, which is, nonetheless, the interesting regime.}

\begin{theorem}\label{th:proptestingSAT}
    There exists a one-sided error $\varepsilon$-tester for $(k,q)$-\Sat with sample complexity $\tilde O(\frac{k^{12}}{\varepsilon^7}\log^2  (q))$.
\end{theorem}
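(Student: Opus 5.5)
The tester will be the \emph{almost canonical} one described above. Set $N=c\frac{k^{12}}{\varepsilon'^6}\log^2(qk^2/\varepsilon')$ with $\varepsilon'=\Theta(\varepsilon)$ as in \Cref{thm:subsampling}. Repeat $R=O(1/\varepsilon)$ times, independently: sample a uniformly random $U\subseteq[n]$ of size $N$ without replacement, and check (by brute force, since only sample complexity matters) whether the subinstance on $\cW(U)$ is satisfiable. Accept iff every repetition is satisfiable. The total number of sampled variables is $RN=\tilde O(\frac{k^{12}}{\varepsilon^7}\log^2 q)$, as claimed.

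\textbf{Completeness.} Completeness is immediate. If some $\phi$ satisfies all constraints, then $\phi|_U$ satisfies every constraint in $\cW(U)$, so every repetition accepts with probability $1$.

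\textbf{Soundness.} View the Boolean CSP as a $[0,1]$-valued one with the predicate $P_w$ equal to the indicator of the constraint being satisfied, where absent tuples get $P_w\equiv 1$. If the instance is $\varepsilon$-far from satisfiable, then every assignment violates at least $\varepsilon n^k$ constraints: otherwise deleting the violated ones would make it satisfiable. Hence $\OPT\le \frac{n^{\underline k}}{n^k}-\varepsilon\le 1-\varepsilon$.

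On the subsample, if $\cW(U)$ is satisfiable then $\OPT(U)=\frac{|\cW(U)|}{N^k}$ with $|\cW(U)|=N^{\underline k}$, so $\OPT(U)\ge 1-\frac{k^2}{N}$ by \Cref{lem:trivialtmp}. Define $Y=1-\OPT(U)\in[0,1]$. \Cref{thm:subsampling} with error parameter $\varepsilon/C$, for a suitable constant $C$ absorbing the $O(\cdot)$, gives
\[
\mathbb{E}[Y]\ge 1-\OPT-\varepsilon/2\ge \varepsilon/2 .
\]
Since $Y\le 1$, a reverse-Markov inequality gives $\mathbb{P}[Y\ge \varepsilon/4]\ge \varepsilon/4$.

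When $Y\ge\varepsilon/4>\frac{k^2}{N}$, where the last inequality holds because $N\gg k^2/\varepsilon$, the subinstance on $U$ is unsatisfiable and that repetition rejects. Thus each repetition rejects with probability at least $\varepsilon/4$. Over $R=\lceil 8/\varepsilon\rceil$ independent repetitions, the probability of accepting is at most $(1-\varepsilon/4)^{R}\le e^{-2}<1/3$.

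\textbf{Main obstacle.} The only delicate point is bookkeeping of the normalization. Because values are divided by $n^k$ (respectively $N^k$), a satisfiable subinstance has value $N^{\underline k}/N^k$ rather than $1$, and the full instance has maximal value $n^{\underline k}/n^k$. I would handle this as above: compare against the $1-k^2/N$ threshold and use that $\varepsilon$-farness subtracts $\varepsilon$ from the true maximum.

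I also need the hypothesis $n\ge (k\log q/\varepsilon)^{c}$ inherited from \Cref{thm:subsampling}, which is stated as an assumption in the footnote. Finally, the theorem constant must be rescaled so that its $O(\varepsilon)$ guarantee becomes at most $\varepsilon/2$; this changes $N$ only by constant factors.
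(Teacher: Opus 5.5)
Your proposal is correct and matches the paper's proof: the same almost-canonical tester with $O(1/\varepsilon)$ independent repetitions, completeness by restricting a satisfying assignment, and soundness via \Cref{thm:subsampling} followed by a reverse-Markov bound. The only difference is in handling normalization. The paper rescales to $\overline{\OPT}=\OPT\cdot n^k/n^{\underline{k}}$, so that satisfiable means value exactly $1$, and gets per-round rejection probability $\varepsilon/2$; you keep the raw normalization and compare against the threshold $1-k^2/N$, which only halves the per-round rejection probability to $\varepsilon/4$.
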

\begin{proof}
    Given a boolean-valued CSP, we apply our subsampling theorem as follows. Consider the instance in which we set the value of every missing constraint to $1$ independently of the assignment.
    Define $\overline{\OPT}$ as $\OPT\cdot n^k/n^{\underline{k}}$, and similarly $\overline{\OPT}(U)=\OPT(U)\cdot N^k/N^{\underline{k}}$ for any $U\subseteq [n]$ of size $N$ (defined according to \Cref{thm:subsampling}).
    
    It is easy to see that the instance is satisfiable if and only if $\overline\OPT=1$, while if it is $\varepsilon$-far from satisfiable then $\overline{\OPT}\le 1-\varepsilon$.
    Let $\alpha_m=\frac{m^{\underline{k}}}{m^k}$.
   Using \Cref{thm:subsampling}, we get
    \begin{align}
        \mathbb{E}[\overline{\OPT}(U)]%
        &\le \frac{1}{\alpha_N}(\alpha_n \overline\OPT+O(\varepsilon))\tag{\Cref{thm:subsampling}}\\
        &\le \left(1+O\left(\frac{k^2}{N}\right)\right)(\overline\OPT+O(\varepsilon))\tag{\Cref{lem:trivialtmp}}\\
        &=\overline{\OPT}+C\varepsilon,\tag{Since $k^2/N=O(\varepsilon)$}
    \end{align}
    for some constant $C>0$. Note that by adjusting $N$, we can easily get 
    \begin{align}
        \mathbb{E}[\overline{\OPT}(U)]\le \overline{\OPT}+\frac\varepsilon2.\label{eq:tmp9876}
    \end{align}

    Now, we design our tester. The tester samples $\eta=O(1/\varepsilon)$ subinstances $U_1,\ldots, U_\eta$ of size $|U_j|=N$, and rejects if any of the subinstances $U_j$ is unsatisfiable. The total sample complexity is therefore  $N\eta=\tilde O\left(\frac{k^{12}}{\varepsilon^7}\log^2(q)\right)$.

    If the instance is satisfiable, then $\overline{\OPT}=1$ and then also $\overline{\OPT}(U_j)=1$ for all $j\in[\eta]$, and thus we will accept with probability $1$.

    On the other hand, if the instance is $\varepsilon$-far from being satisfiable, then for some absolute constant $C>0$ we get from \Cref{eq:tmp9876} that the probability that a random subinstance on $U_j$ is unsatisfiable can be lower bounded by Markov's inequality as:
    \[
    \mathbb{P}[1>\overline{\OPT}(U_j)]\ge 1-\mathbb{E}[\overline{\OPT}(U_j)]\ge \frac\varepsilon2,
    \]
    Thus, each sample yields an unsatisfiable subinstance with probability at least $\varepsilon/2$. It is clear that by our choice of $\eta$, the probability of accepting an instance which is $\varepsilon$-far from being satisfiable is less than $1/3$.
\end{proof}

It is also worth noting that our algorithm also provides a fully tolerant test \citep{parnas2006tolerant} for \Sat.
Namely, for any $\varepsilon_1<\varepsilon_2$, the tester accepts instances that are $\varepsilon_1$-close to satisfiable and rejects instances that are $\varepsilon_2$-far from satisfiable, with constant probability in both cases.
Its sample complexity has the same dependence on the gap $(\varepsilon_2-\varepsilon_1)^{-1}$ as our distance-estimation guarantee.
This follows because the algorithm does more than distinguish satisfiable instances from those far from satisfiability: it provides an additive approximation to the input instance's distance from satisfiability. Comparing this estimate with any threshold between $\varepsilon_1$ and $\varepsilon_2$ immediately gives a tolerant tester.

\section{Conclusions}

We proved a subsampling theorem for CSPs with sample size $\poly(\frac{k\log q}{\varepsilon})$. This exponential improvement on the dependency on $k$ compared to the result of \citet{barak2011subsampling} provides the missing ingredient to establish the collapse proven by \citet{aaronson2014multiple} that $\AM(\poly)=\AM$. We also applied our theorem to obtain the first testers for satisfiability with query complexity $\poly\big(\frac{k\log q}{\varepsilon}\big)$, exponentially improving the best known prior dependence on $q$ of \citet{blais2024new}, while maintaining polynomial dependence on both $k$ and $1/\varepsilon$.

We do not expect our dependence on $k$ and $\varepsilon$ to be optimal. In this work, we did not attempt to optimize exponents, and instead focused only on obtaining a $\poly\big(\frac{k\log q}{\varepsilon}\big)$ bound. Slightly better results could probably be gained with a more careful analysis. However, our techniques are unlikely to yield the optimal dependence on $\varepsilon$: for instance, when $q,k=O(1)$, the correct dependence is known to be $\Omega(1/\varepsilon)$ \citep{alon2002testing}. Tightening the dependence on $\varepsilon$ and obtaining a tester with sample complexity $O(\poly(k\log q)/\varepsilon)$ remains an interesting open challenge.

\appendix

\section*{Appendix}
\section{Useful Claims}

\begin{restatable}{claim}{claimSmallDifference}\label{lem:trivialtmp} For integers $1\le k\le N$ it holds that
    \(
    1-\frac{k^2}{N}\le\frac{N^{\underline{k}}}{N^k}\le 1.
    \)
\end{restatable}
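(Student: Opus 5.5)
The upper bound is immediate. We have $N^{\underline{k}}=\prod_{j=0}^{k-1}(N-j)$ and each factor satisfies $N-j\le N$. Dividing by $N^k$ therefore gives a product of $k$ numbers in $[0,1]$, so $N^{\underline{k}}/N^k\le 1$.

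For the lower bound, write
\[
\frac{N^{\underline{k}}}{N^k}=\prod_{j=0}^{k-1}\left(1-\frac{j}{N}\right).
\]
Since $k\le N$, every factor $1-j/N$ lies in $[0,1]$. We apply the elementary Weierstrass product inequality: for $a_j\in[0,1]$,
\[
\prod_j(1-a_j)\ge 1-\sum_j a_j .
\]
We prove it by induction on the number of factors. The one-factor case is an equality. For the inductive step, write $P=\prod_{j<m}(1-a_j)$, which lies in $[0,1]$. Then
\[
P(1-a_m)=P-Pa_m\ge P-a_m\ge 1-\sum_{j\le m}a_j,
\]
where the first inequality uses $P\le 1$ and $a_m\ge 0$, and the second uses the induction hypothesis.

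Applying this with $a_j=j/N$ yields
\[
\frac{N^{\underline{k}}}{N^k}\ge 1-\frac{1}{N}\sum_{j=0}^{k-1}j=1-\frac{k(k-1)}{2N}\ge 1-\frac{k^2}{N},
\]
which is in fact stronger than the claim by a factor of about two.

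There is no real obstacle here. The only point needing care is that every factor $1-j/N$ is nonnegative, which is exactly what the hypothesis $k\le N$ guarantees; the Weierstrass inequality would fail without nonnegativity. When $k^2>N$ the lower bound is negative, and then it follows trivially from $N^{\underline{k}}\ge 0$, though the general argument already covers that case as well.
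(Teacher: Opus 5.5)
Your proof is correct and follows essentially the same route as the paper: write $N^{\underline{k}}/N^k=\prod_{j=0}^{k-1}(1-j/N)$, apply the inductive bound $1-\prod_j(1-a_j)\le\sum_j a_j$ with $a_j=j/N$, and obtain $1-\frac{k(k-1)}{2N}\ge 1-\frac{k^2}{N}$. Your version also spells out the upper bound and the inductive step, which the paper leaves implicit.
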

\begin{proof}
    Note that $\frac{N^{\underline{k}}}{N^k}=\prod_{j=0}^{k-1}\left(1-\frac{j}{N}\right)$. Now, for all sequence of $(a_j)_{j \in [k]} \in [0,1]^k$, by induction we get that
    \[
    1-\prod_{j=0}^{k-1}(1-a_j)\le \sum_{j=0}^{k-1} a_j  
    \]
    Then, we have that for $a_j=j/N$,
    \[
    0\le1-\frac{N^{\underline{k}}}{N^k}\le \sum_{j=0}^{k-1}\frac jN = \frac{(k)(k-1)}{2N}\le \frac{k^2}{N},
    \]
    concluding the proof.
\end{proof}

\begin{restatable}{claim}{withToWithout}\label{clm:withToWithout}
     Let $U_1$ be a set of $N$ elements sampled uniformly from $[n]$ without replacement, 
     and let $U_2$ be the set of variables resulting from sampling $N$ elements with replacement uniformly from $[n]$. 

     Consider the event $E$ that $U_2$ has no collisions, i.e., $E=\{|U_2|=N\}$. Then
     \begin{itemize}
         \item $\mathbb{P}(E)\ge 1-\frac{N^2}{n}$;
         \item $U_2$ conditioned on $E$ has the same law of $U_1$.
     \end{itemize}
\end{restatable}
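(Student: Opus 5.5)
We prove the two items of \Cref{clm:withToWithout} separately, viewing $U_2$ as coming from an i.i.d.\ sequence $(j_1,\ldots,j_N)$ of uniform draws from $[n]$, with $U_2=\{j_1,\ldots,j_N\}$.

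\emph{Probability of no collisions.} The event $E^c$ means that $j_t=j_{t'}$ for some pair $t<t'$. For a fixed pair, $\mathbb{P}(j_t=j_{t'})=1/n$ because the draws are independent and uniform. There are $\binom{N}{2}$ pairs, so a union bound gives
\[
\mathbb{P}(E^c)\le \binom{N}{2}\frac1n\le \frac{N^2}{n},
\]
which is the first item. One could instead compute $\mathbb{P}(E)=n^{\underline{N}}/n^N$ exactly and bound it with the product inequality from the proof of \Cref{lem:trivialtmp}, but the union bound is shorter.

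\emph{Conditional law.} Fix any set $A\subseteq[n]$ with $|A|=N$. On $E$, the event $\{U_2=A\}$ occurs exactly when $(j_1,\ldots,j_N)$ is one of the $N!$ orderings of $A$. Each such ordered tuple has probability $n^{-N}$, so
\[
\mathbb{P}(U_2=A,\,E)=\frac{N!}{n^N}.
\]
This value does not depend on $A$. Hence the conditional law of $U_2$ given $E$ is uniform over the $N$-subsets of $[n]$, which is the law of $U_1$. The same computation gives $\mathbb{P}(E)=\binom nN N!/n^N=n^{\underline N}/n^N$, consistent with the first item.

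Neither step is a real obstacle. The only care needed is in the second item: $U_2$ must be treated as the unordered set of distinct values, and the count of orderings ($N!$ for each $A$ on the event $E$) must be right.
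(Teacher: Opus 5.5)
Your proof is correct and follows the paper's argument: the same union bound over the $\binom{N}{2}$ pairs gives the first item, and the second item is the uniform-conditional-law statement. The paper simply asserts the conditional law as ``easy to check,'' and your count $\mathbb{P}(U_2=A,\,E)=N!/n^N$, which does not depend on $A$, is exactly that check written out.
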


\begin{proof}
Given the $N$ i.i.d.~$I_t\in[n]$, let $U$ be the set of sampled variables, i.e., 
\[
U_2:=\{i\in[n]:\exists t\in[N]\text{ s.t. } I_t=i\}.
\]

We exploit the following result: the probability that we sample twice any variable is
\begin{align}\label{lem:collision}
\mathbb{P}(\exists s<t\in[N]: I_s=I_t)\le \mathbb{P}(\exists s\neq t\in[N]: I_s=I_t)\le \binom{N}{2}\frac1n\le \frac{N^2}n.
\end{align}

Let $E$ be the event that there are no collisions, i.e.\ all $I_1,\dots, I_N$ are distinct.
Then, under event $E$, we have $|U_2|=N$. Moreover, it is easy to check that the conditional law of $U_2$ given $E$ is uniform over all $N$-subsets of $[n]$. Thus, $U_1$ and $U_2$ have the same distribution conditioned on $E$.

By \Cref{lem:collision}, we can lowerbound the probability of event $E$ as:
\begin{equation*}
\mathbb{P}(E)\ge 1-\frac{N^2}{n},
\end{equation*}
concluding the proof.
\end{proof}

\section{Proofs Omitted from \Cref{sec:structure}}\label{app:omitted}

\lemmaDecomposition*
\begin{proof}

    The proof relies on the fact that $\cM(U)$ can be partitioned into all constraints that have all variables inside $U_{-t}$ (which is the set $\cM(U_{-t})$) or one variable in $U_t$ and all others in $U_{-t}$ (which is the set $\bigcup_{i\in U_t}\cM_i(U)$ ). This follows because, in $\cM$, we consider only balanced constraints. In particular, we have
    \begin{align*}
        \val^\bal_U(\phi)&=\frac{1}{|U|^k}\sum_{w\in \cM(U)} P_w(\phi\mid_w)\\
        &=\frac{1}{|U|^k}\sum_{w\in \cM(U_{-t})} P_w(\phi\mid_w)+\frac{1}{|U|^k}\sum_{w\in \bigcup_{i\in U_t}\cM_i(U)} P_w(\phi\mid_w)\\
        &=\frac{1}{|U|^k}\sum_{w\in \cM(U_{-t})} P_w(\phi\mid_w)+\frac{1}{|U|^k}\sum_{i\in U_t}\sum_{w\in \cM_i(U)} P_w(\phi\mid_w)\\
        &=\frac{1}{|U|^k}\sum_{w\in \cM(U_{-t})} P_w(\phi\mid_w)+\frac{1}{|U|^k}\sum_{i\in U_t}\sum_{w\in \cM_i(\{i\}\cup U_{-t})} P_w(\phi\mid_w).
    \end{align*}
    For each $i\in U_t$, the inner summation is precisely $H_{i,U_{-t}}(\phi_i,\phi)$, proving the decomposition.
\end{proof}

\equivalencelaws*

\begin{proof}

Consider the joint distribution $(U, \ccS) \sim \gamma$. By the construction of the coupling, the conditional probability of observing a specific seed $\ccS$ given $U$ factorizes over the blocks:
\[
\mathbb{P}(\ccS \mid U) = \prod_{t \in [T]} \mathbb{P}(S_t \mid U) = \prod_{t \in [T]} \mathbb{I}(S_t \subseteq U_{-t}) \binom{|U_{-t}|}{s}^{-1}.
\]
In our sampling procedure, the sizes $|U_t|$ are deterministic constants. Therefore, 
\[\mathbb{P}(\ccS \mid U) \propto \mathbb{I}(S_t \subseteq U_{-t} \, \forall t \in [T]).\]

Now, it is easy to see that, since each $S_t$ is drawn from $V_{-t}$ by definition $S_t\cap V_t=\emptyset$, and thus
\[\mathbb{I}(S_t \subseteq U_{-t} \, \forall t \in [T])= \mathbb{I}(S\subseteq U).\]
for each $(S_t)_{t \in T}$, $U$ in the support of $\gamma$.

By Bayes' Theorem, the conditional distribution of $U$ given $\ccS$ is:
\[
\mathbb{P}(U \mid \ccS) \propto \mathbb{P}(U) \mathbb{P}(\ccS \mid U) \propto \mathbb{P}(U) \mathbb{I}(S \subseteq U).
\]
This is precisely the law $\nu_S$.
\end{proof}

\printbibliography

\end{document}